 \newcommand{\bs}{\bigskip}
 \newcommand{\ms}{\medskip}
 \newcommand{\n}{\noindent}
 \newcommand{\s}{\smallskip}
 \newcommand{\hs}[1]{\hspace*{ #1 mm}}
 \newcommand{\vs}[1]{\vspace*{ #1 mm}}
 \newcommand{\setempty}{\varnothing}
 \newcommand{\nat}{\mathbb{N}}
 \newcommand{\integer}{\mathbb{Z}}
 \newcommand{\complex}{\mathbb{C}}
 \newcommand{\etalc}{\textrm{et al.}}
 \newcommand{\CC}{{\cal C}}
 \newcommand{\FF}{{\cal F}}
 \newcommand{\HH}{{\cal H}}
 \newcommand{\GG}{{\cal G}}
 \newcommand{\PP}{{\cal P}}
 \newcommand{\UU}{{\cal U}}
 \newcommand{\dl}{\mathrm{L}}
 \newcommand{\nl}{\mathrm{NL}}
 \newcommand{\p}{\mathrm{P}}
 \newcommand{\np}{\mathrm{NP}}
 \newcommand{\fl}{\mathrm{FL}}
 \newcommand{\fp}{\mathrm{FP}}
 \newcommand{\sharpp}{\#\mathrm{P}}
\theoremstyle{plain}
 \newtheorem{theorem}{Theorem}[section]
 \newtheorem{lemma}[theorem]{Lemma}
 \newtheorem{proposition}[theorem]{{\bf Proposition}}
 \newtheorem{corollary}[theorem]{Corollary}
 \newtheorem{fact}[theorem]{Fact.}
 \newtheorem{claim}{Claim}
 \newenvironment{proof}{\par \noindent
            {\bf Proof. \hs{2}}}{\hfill$\Box$ \vspace*{3mm}}
 \newenvironment{proofof}[1]{\vspace*{5mm} \par \noindent
         {\bf Proof of #1.\hs{2}}}{\hfill$\Box$ \vspace*{3mm}}
 \newcommand{\pair}[1]{\langle #1 \rangle}
\newcommand{\ignore}[1]{}
\newcommand{\sharpcfl}{\#\mathrm{CFL}}
\newcommand{\sac}{\mathrm{SAC}}
\newcommand{\logcfl}{\mathrm{LOGCFL}}
\newcommand{\sharplogcfl}{\#\mathrm{LOGCFL}}
\newcommand{\sharpl}{\mathrm{\#\mathrm{L}}}
\newcommand{\sharpsac}[1]{\#\mathrm{SAC}^{ #1 }}
\newcommand{\dlogtime}{\mathrm{DLOGTIME}}
\newcommand{\sharpcsp}{\#\mathrm{CSP}}
\newcommand{\sharpacsp}{\#\mathrm{ACSP}}
\newcommand{\supp}{\mathrm{supp}}
\newcommand{\auxpdatisp}[2]{\mathrm{AuxPDA},\!\mathrm{TISP}( #1, #2 )}
\newcommand{\nauxpdatisp}[2]{\mathrm{NAuxPDA},\!\mathrm{TISP}( #1, #2 )}
\newcommand{\nc}[1]{\mathrm{NC}^{ #1 }}
\newcommand{\IM}{\mathcal{IM}}
\newcommand{\ED}{\mathcal{ED}}
\newcommand{\DG}{\mathcal{DG}}
\newcommand{\NZ}{\mathcal{NZ}}
\newcommand{\boldvec}[1]{\mbox{\boldmath $ #1 $}}
\begin{document}

\pagestyle{plain}
\pagenumbering{arabic}
\setcounter{page}{1}
\setcounter{footnote}{0}

\begin{center}
{\Large {\bf Complexity Classification of
Complex-Weighted Counting Acyclic Constraint Satisfaction Problems}}\footnote{An extended abstract of this current article is
scheduled to appear in the Proceedings of the 12th Computing Conference,  London, UK, July 11--12, 2024, Lecture Notes in Networks and Systems, Springer-Verlag, 2024.}
\bs\s\\

{\sc Tomoyuki Yamakami}\footnote{Present Affiliation: Faculty of Engineering, University of Fukui, 3-9-1 Bunkyo, Fukui 910-8507, Japan}
\end{center}
\ms

%%%%
%%%%%%%%%%%%%%%%%
%%%%%%%%%%%%%%%%%

\begin{abstract}
We study the computational complexity of counting constraint satisfaction  problems (\#CSPs) whose constraints assign complex numbers to Boolean inputs
when the corresponding constraint hypergraphs are acyclic. These problems are called acyclic \#CSPs  or succinctly, \#ACSPs.
We wish to determine the computational complexity of all such \#ACSPs
when arbitrary unary constraints are freely available.
Depending on whether we further allow or disallow the free use of the specific constraint XOR (binary disequality), we present two complexity classifications of the \#ACSPs according to the types of constraints used for the problems.
When XOR is freely available, we first obtain a complete dichotomy classification. On the contrary, when XOR is not available for free, we then obtain a trichotomy classification.
To deal with an acyclic nature of constraints in those classifications, we develop a new technical tool called acyclic-T-constructibility or AT-constructibility, and we exploit it to analyze a complexity upper bound of each \#ACSPs.

\ms

\n{\bf Key words:} auxiliary pushdown automata, \#LOGCFL, counting constraint satisfaction problem, \#ACSP, complexity classification, acyclic hypergraph, acyclic-T-constructibility
\end{abstract}

%%%%%%%%%%%%%%%%%%%%%%%%%%%%%%%%%%
%%%%%%%%%%%%%%%%%%%%%%%%%%%%%%%%%%
\sloppy
\section{A Historical Account and an Overview of Contributions}\label{sec:introduction}

We briefly review the background and the historical accounts of the main subject of counting satisfaction problems and then provide the major contributions of this work.

%%%
\subsection{Counting Constraint Satisfaction Problems or \#CSPs}\label{sec:counting-CSPs}

\emph{Constraint satisfaction problems} (or CSPs, for short) have played an important role in computer science. Typical CSP examples include the \emph{formula satisfiability problem} (SAT), which is well-known to be $\np$-complete. It is of great importance to determine the computational complexity of each CSP and classify all CSPs according to the types of constraints used as inputs to the CSPs. Schaefer \cite{Sch78} is the first to give a complete classification of (unweighted) CSPs with Boolean domain. Any of such CSPs falls into two categories. It is either in $\p$ or $\np$-complete. The classification of this type is called
a \emph{dichotomy classification} or a \emph{dichotomy theorem}.

Apart from decision problems, \emph{counting problems} are another important subject of computer science in theory and in practice.
In the 1970s, Valiant \cite{Val76,Val79} paid special attention to \emph{counting problems} (viewed as \emph{counting functions}), each of which computes the number of accepting computation paths of a nondeterministic Turing machines running in polynomial time. Those counting problems form the complexity class $\sharpp$.
There are a number of natural $\sharpp$-complete\footnote{For the notion of $\sharpp$-completeness, Valiant \cite{Val79} used Turing reductions instead of more common many-one reductions. In certain cases, his $\sharpp$-completeness holds also under many-one reductions. Such a case is $\#\mathrm{3SAT}$. \`{A}lvarez and Jenner \cite{AJ93}
remarked that the $\sharpp$-completeness of $\#\mathrm{3SAT}$ holds under logarithmic-space many-one reduction.} problems, including $\#\mathrm{2SAT}$ (counting 2CNF Boolean satisfiability problem), under polynomial-time Turing reduction \cite{Val79}.
Our primary interest of this work lies in \emph{counting constraint satisfaction problems} (abbreviated as \#CSPs), which count the number of assignments that satisfy all the constraints of target CSPs.
These \#CSPs have a close connection to certain physical phenomena because the \#CSPs can be viewed as ``partition functions'', which are essential parts of physical systems.
There has been a large volume of research promoting our understandings of \#CSPs.

The behaviors of \#CSPs, however, are quite different from the aforementioned CSPs, which we emphatically cite as \emph{decision CSPs} for clarity. In the case of such decision CSPs, for instance,
it is well-known that $\mathrm{3SAT}$ (3CNF SAT) is $\np$-complete while $\mathrm{2SAT}$ (2CNF SAT) is $\nl$-complete (under logarithmic-space many-one reductions).
In contrast, the counting versions of $\mathrm{3SAT}$ and $\mathrm{2SAT}$, denoted respectively by $\#\mathrm{3SAT}$ and $\#\mathrm{2SAT}$, were both proven in \cite{Val79} to be $\sharpp$-complete, where $\sharpp$ is a counting version of $\p$ introduced by Valiant \cite{Val76,Val79}.

In this work, we are particularly interested in \emph{weighted} \#CSPs with \emph{Boolean domain} %in which their constraint functions must take Boolean inputs and assign ``weights'' to those inputs.
whose \emph{constraints} refer to pairs of a sequence of variables and a function (called a constraint function) that assign ``weights'' to variable tuples.
The choice of weights of constraints significantly affect the computational complexity of $\sharpcsp$s.
Creignou and Herman \cite{CH96} first gave a complete classification of $\sharpcsp$s with $\{0,1\}$-valued constraint functions (or ``unweighted'' \#CSPs). This result was further extended to the case of weighted  constraints. For \emph{nonnegative-real-weighted \#CSPs}, for instance, Dyer, Goldberg, and Jerrum \cite{DGJ09} obtained their complexity classification whereas
Cai, Lu, and Xia \cite{CLX14} obtained a complexity classification of all   \emph{complex-weighted \#CSPs}. Here, real and complex numbers are treated as ``symbolic objects'' during any computations (see Section \ref{sec:treatment} for a brief explanation).
In fact, the complexity classification of Cai et al. asserts that, assuming that unary constraints are freely available as part of given constraints, any complex-weighted \#CSP is either in $\fp_{\complex}$ or $\sharpp$-hard, where $\fp_{\complex}$ denotes a complex-number extension of $\fp$ (polynomial-time computable function class) \cite{CLX14,Yam12a,Yam12b,Yam12c,Yam14}.

When \emph{randomized approximate counting} is concerned, instead of ``exact'' counting, the computational complexity of $\sharpcsp$s was discussed first by Dyer \etalc~\cite{DGJ10} and then by Yamakami \cite{Yam12a,Yam12b,Yam12c,Yam14}.
While Dyer et al. gave an approximation complexity classification of all unweighted \#CSPs, Yamakami showed in \cite{Yam12a} an approximation complexity classification of all complex-weighted \#CSPs.

Most studies on the complexity classifications of unweighted/weighted \#CSPs have been conducted in the polynomial-time setting.
It still remains challenging to expand the scope of these studies by looking into \#CSPs that are already tractable (i.e., solvable in polynomial runtime).
In the case of decision CSPs, nevertheless, an important milestone in this direction is the work of Gottlob, Leone, and Scarcello \cite{GLS01}, who first studied a natural restriction of CSPs, known as \emph{acyclic CSPs} (or succinctly, ACSPs), within the framework of
acyclic conjunctive queries in database theory \cite{BFMY83}. Here, a
CSP is called \emph{acyclic} if its corresponding constraint (undirected) hypergraph is  acyclic.
In particular, they presented a few concrete examples of ACSPs for
the complexity class $\logcfl$,
studied first by Cook \cite{Coo71},
which is the closure of all context-free languages under the logarithmic-space many-one reductions
(or $\dl$-m-reductions, for short)
%\cite{Coo71}
and it is a quite robust complexity class, exhibiting various important characteristics of parallel computing, located between $\nc{1}$ and $\nc{2}$.
A simple example of the languages in $\logcfl$ is an acyclic restriction of 2SA, called the \emph{acyclic 2CNF satisfiability problem} ($\mathrm{Acyc\mbox{-}2SAT}$).

Sudborough \cite{Sud78} characterized $\logcfl$ in terms of \emph{two-way auxiliary nondeterministic pushdown automata} (or aux-2npda's, for short). We use the notation $\nauxpdatisp{t(n)}{s(n)}$ to denote the family of all decision problems (equivalently, languages) recognized by such machines that run in  time $O(t(n))$ using space $O(s(n))$. With this notation, Sudborough's result is expressed as $\logcfl =  \nauxpdatisp{n^{O(1)}}{\log{n}}$. Other important characterizations of $\logcfl$ include multi-head two-way nondeterministic pushdown automata \cite{Sud78}, alternating Turing machines  \cite{Ruz80}, and first-order logical formulas \cite{LMSV01}.
Venkateswaran \cite{Ven91}, for instance, demonstrated that the languages in $\logcfl$ are precisely computed by uniform families of semi-unbounded Boolean circuits of polynomial size and logarithmic depth. This exemplifies the importance and naturalness of $\logcfl$ in the field of computational complexity theory.

As counting versions of $\nauxpdatisp{t(n)}{s(n)}$ and $\mathrm{SAC}^1$, Vinay \cite{Vin91} and Niedermeier and Rossmanith \cite{NR95} studied $\#\auxpdatisp{t(n)}{s(n)}$ and $\#\sac^1$ and they showed that those counting complexity classes actually coincide. Similarly to $\sharpp$, we denote by $\sharpcfl$ the set of all functions that output the number of accepting computation paths of 1npda's. Reinhardt \cite{Rei92} claimed that $\#\auxpdatisp{n^{O(1)}}{\log{n}}$ coincides with $\mathrm{FLOG}(\sharpcfl)$, which is the class of all functions logarithmic-space many-one reducible to $\sharpcfl$. a counting analogue of $\logcfl$.

Along this line of research, we wish to study the computational complexity of a counting version of ACSPs, the \emph{counting acyclic CSPs} (or $\#\mathrm{ACSP}$s for short)
\emph{with complex-weighted constraints}
in hopes of making a crucial contribution to the field of CSPs.
This is the first study on the complexity classification of \#ACSPs.
For such a discussion on the complexity issues of  \#ACSPs, we need to work within the counting complexity class $\#\logcfl$, which is a counting analogue of $\logcfl$ (which will be defined in Section \ref{sec:basic-notion}).
Since $\sharplogcfl$ is included in $\#\mathrm{AC}^1$, the computations of \#ACSPs can be highly parallelized.
A counting variant of $\mathrm{Acyc\mbox{-}2SAT}$, denoted $\#\mathrm{Acyc\mbox{-}2SAT}$, is a typical example of \#ACSPs.
This counting problem counts the number of satisfying assignments of a given acyclic 2CNF Boolean formula (see Section \ref{sec:complexity-ACSP}) and
belongs to $\sharplogcfl$ but its $\sharplogcfl$-completeness is unknown at this moment. This situation sharply contrasts
the $\sharpp$-completeness of $\#\mathrm{2SAT}$  (under polynomial-time Turing reductions)  \cite{Val79}.

Since our computation model is much weaker than polynomial-time computation, we cannot use polynomial-time reductions. Throughout this work, we use logarithmic-space computation as a basis of our  reductions. Between two counting problems, we use a variant of logarithmic-space many-one reductions (or $\dl$-m-reductions) designed for decision problems, called in this paper \emph{logspace reductions}. Hence, the notion of ``completeness'' in this paper is limited to logspace reductions (stated formally in Section \ref{sec:machines}).

Here, we remark that $\sharplogcfl$ certainly has complete counting problems under \emph{logspace reductions} (see Section \ref{sec:basic-notion} for their formal definition), including the \emph{ranking of 1dpda problem}, denoted $RANK_{1dpda}$, in which we count the number of lexicographically smaller strings accepted by one-way deterministic pushdown automata (or 1dpda's) \cite{Vin91}.
Later in Section \ref{sec:basic-notion}, we will present another canonical $\sharplogcfl$-complete problem, called the \emph{counting SAC$^1$ problem} (or $\#\mathrm{SAC1P}$).

%%%
\subsection{Main Contribution of This Work}\label{sec:main-contribution}

Here, we wish to determine the computational complexity of each  complex-weighted \#ACSP, where complex numbers produced by constraint functions are generally treated as ``symbolic objects'' (see Section \ref{sec:treatment} for a more explanation), which are essentially different from ``computable numbers''.
In particular, under the common assumption that unary constraints are freely available as part of input constraints,
we present two complete complexity classifications of all complex-weighted \#ACSPs, as similarly done in \cite{CLX14,Yam12a,Yam12b,Yam12c},
depending on whether XOR (binary disequality) is further allowed or not.
To improve the readability, we tend to use the generic term ``constraints'' in place of ``constraint functions'' as long as no confusion occurs.

We can raise a natural question of whether there is a natural characteristic for \#ACSPs so that it captures a lower complexity class than $\sharpp$. We intend to classify all such \#ACSPs according to the types of their constraints. 
For clarity, we write $\sharpacsp(\FF)$ to denote the set of complex-weighted \#ACSPs whose constraints are all taken from $\FF$. 
We intend to study the computational complexity of such $\sharpacsp(\FF)$ for various choices of constraint functions that assign complex numbers to Boolean inputs.

To describe the complexity classifications of \#ACSPs, let us first review from \cite{Yam12a} a special constraint set, called $\ED$, which consists
of all constraints obtained by multiplying the equality (EQ) of arbitrary arity, the binary disequality (XOR), and any unary constraints (see Section \ref{sec:constraints} for their precise definitions).
Let $\UU$ denote the set of all unary constraints.
Assuming that $XOR$ and unary constraints are freely included in given constraints, we obtain the following form of dichotomy classification.
Similar to $\fp_{\complex}$, the notation $\fl_{\complex}$ refers to a complex-number extension of $\fl$ (log-space computable function class).

\begin{theorem}\label{main-theorem}
For any set $\FF$ of constraints, if $\FF\subseteq \ED$, then $\sharpacsp(\FF,\UU,XOR)$ belongs to $\fl_{\complex}$. Otherwise, it is $\sharplogcfl$-hard under logspace reductions.
\end{theorem}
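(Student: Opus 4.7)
The plan is to establish the theorem by treating the two directions separately: membership in $\fl_{\complex}$ when $\FF\subseteq \ED$, and $\sharplogcfl$-hardness otherwise. Throughout, I will exploit the fact that unary constraints and $XOR$ are freely available, which lets me pre- and post-compose any constructed constraint with unary scalars and signed variable identifications at no cost.

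For the tractable direction, I would use the structural form of $\ED$: every $f\in\ED$ is a product of an equality of some arity, the binary $XOR$, and unary constraints. On an acyclic constraint hypergraph, I plan to process the hyperedges in a GYO-style ear-removal order, which is computable in $\fl$. Each elimination merges an ear hyperedge into its parent by forcing an equality-or-disequality identification of the eliminated variables with ``anchor'' variables in the parent hyperedge; the resulting accumulated weight on each identified group reduces to a product of unary factors and is a simple $\fl_{\complex}$ bookkeeping task. After all ears are eliminated, what remains is a disjoint union of ``equality/$XOR$-linked'' variable classes, each contributing an independent complex factor that is the sum over two Boolean values of an explicit product of unaries. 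Summing these factors independently gives the total value in $\fl_{\complex}$, assuming the standard symbolic treatment of complex arithmetic discussed in Section~\ref{sec:treatment}.

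For the intractable direction, the plan is to invoke the AT-constructibility machinery developed in the paper: given any $f\in\FF\setminus\ED$, I aim to AT-construct, using only unary constraints, $XOR$, and acyclic combinations, a constraint family rich enough to encode a known $\sharplogcfl$-complete problem such as $\#\mathrm{SAC1P}$ (or the equivalent $RANK_{1dpda}$). The reduction will proceed by case analysis on the way in which $f$ fails to lie in $\ED$: if $f$ has a non-trivial support structure beyond equality/$XOR$-classes, ear-attaching appropriate unaries should isolate a binary or ternary constraint outside $\ED$, and chaining such constraints in a tree-shaped gadget simulates the bottom-up evaluation of a $\sharpsac{1}$ circuit. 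Because every gadget I build is forced to be a tree of hyperedges, the resulting instance remains an acyclic $\sharpacsp$ input, so the reduction lands in $\sharpacsp(\FF,\UU,XOR)$ as required.

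The main obstacle I expect is the hardness direction, specifically the tightness of the boundary at $\ED$. In the classical Cai--Lu--Xia dichotomy at the $\sharpp$-level one may introduce cycles freely when building gadgets, but here AT-constructibility must preserve acyclicity at every step, which severely restricts variable identifications. I therefore need to show that the acyclic gadget closure of any $f\notin\ED$ (together with $\UU$ and $XOR$) genuinely escapes $\ED$ and produces a constraint that can drive a $\sharplogcfl$-hard encoding, while the acyclic closure of $\ED$ itself stays inside $\ED$. Matching these two closures exactly is the core technical step, and it is where the AT-constructibility framework developed in the paper must do the heavy lifting, since the standard $\sharpp$-level gadget constructions are not available and every intermediate constraint produced must be realizable as a tree of hyperedges.
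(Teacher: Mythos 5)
Your tractable direction is essentially the paper's argument: since every $f\in\ED$ factors into unary functions, $EQ_2$, and $XOR$, the acyclic constraint hypergraph degenerates into a forest of binary edges, each tree contributes an independent factor obtained by summing over the two values of a chosen root (all other variables being forced along the unique root-to-leaf paths), and connectivity/bookkeeping is doable in logarithmic space via Reingold's $\mathrm{SL}=\dl$. Your GYO-style ear removal is a cosmetic variation on the paper's root-and-propagate argument and lands in the same place, so I have no objection there.

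The hardness direction, however, is a plan rather than a proof, and it omits the two constructions that carry all the weight. First, you never say how to actually extract $OR_2$ (or $NAND_2$) from an arbitrary $f\notin\ED$ using only unaries, $XOR$, and acyclicity-preserving gadgets. The paper does this by a case split on whether $f$ is nowhere-zero: in the $\NZ$ case it pins down to a binary constraint $(1,a,b,c)$ with $abc\neq 0$ and $ab\neq c$ and applies explicit acyclic gadgets (Lemmas \ref{OR-character}--\ref{nowhere-zero-constraint}, plus Lemma \ref{OR-and-NAND} to convert $NAND_2$ to $OR_2$); in the non-$\NZ$ case it needs a further split on membership in $\IM$ and imports pinning results from \cite{Yam12a}. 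Saying ``ear-attaching appropriate unaries should isolate a binary or ternary constraint outside $\ED$'' does not establish that the isolated constraint supports these gadgets, and verifying that each gadget's realizing hypergraph is acyclic is exactly the point where the classical cyclic constructions of \cite{CLX14} break down. Second, and more seriously, your encoding of $\#\mathrm{SAC1P}$ glosses over the fact that the quantity to be computed is the number of \emph{accepting subtrees}, not the (0-or-1) number of satisfying gate-value assignments. An OR gate contributes one accepting subtree per true child, so a naive ``simulate the bottom-up evaluation'' encoding with $F_{OR}$ undercounts; the paper has to replace OR gates by an exactly-one constraint $ONE$ and then prove (Lemma \ref{calculate-ONE}) that $ONE$ is acyclically realizable from $OR$ and $XOR$ — a nontrivial nested construction that your proposal does not anticipate. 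Without these two pieces the reduction to $\sharpacsp(\FF,\UU,XOR)$ does not go through.
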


On the contrary, when $XOR$ is not freely usable, we obtain the following trichotomy classification. Recall from \cite{Yam12a} the notation $\IM$, which denotes the set of all nowhere-zero-valued constraints obtained by multiplying the ``implication'' and any unary constraints (see Section \ref{sec:constraints} for their formal definitions).
Notice that the implication leads to the equality.

\begin{theorem}\label{second-theorem}
For any set $\FF$ of constraints, if $\FF\subseteq \ED$, then $\sharpacsp(\FF,\UU)$ belongs to $\fl_{\complex}$. Otherwise, if all constraints are in $\IM$, then it is hard for $\#\mathrm{Acyc\mbox{-}2SAT}$ under logspace reductions. Otherwise, it is $\sharplogcfl$-hard under logspace reductions.
\end{theorem}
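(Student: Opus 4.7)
The plan is to partition the argument into the three cases of the classification and to reduce the two outer cases to Theorem~\ref{main-theorem}, leaving only the middle case to require genuinely new work. For the tractable case $\FF\subseteq \ED$, the result follows by inclusion alone: every instance of $\sharpacsp(\FF,\UU)$ is syntactically also an instance of $\sharpacsp(\ED,\UU,XOR)$, which already lies in $\fl_{\complex}$ by Theorem~\ref{main-theorem}, so no new algorithmic work is needed.

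For the outer hardness case $\FF\not\subseteq \ED$ and $\FF\not\subseteq \IM$, the strategy is to AT-construct the binary disequality $XOR$ from a witness $f\in \FF\setminus \IM$ together with unary constraints from $\UU$. Since $f$ escapes $\IM$, its nonzero pattern is not the graph of a binary implication with unary weights, and a short case analysis on how this failure manifests (either an extra zero in $f$ relative to the implication support, or a value pattern not factorizable through a single implication together with unary weights) should yield a local gadget implementing $XOR$; the AT-constructibility framework developed earlier in the paper is what ensures that this gadget respects acyclicity. Once $XOR$ is AT-constructible from $\FF\cup \UU$, the $\sharplogcfl$-hardness of $\sharpacsp(\FF,\UU,XOR)$ supplied by Theorem~\ref{main-theorem} transfers to $\sharpacsp(\FF,\UU)$ under logspace reductions.

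The genuinely new content is the middle case $\FF\subseteq \IM$ and $\FF\not\subseteq \ED$. Fix a witness $f\in \FF\setminus \ED$; by the definition of $\IM$, the support of $f$ is the graph of a binary implication, and since $f\notin \ED$, this implication is not a disguised equality or disequality. First I would use unary constraints from $\UU$, applied at each of the two variable positions of $f$, to cancel the unary weight factors and AT-construct the pure implication constraint. Then, starting from an acyclic 2CNF instance of $\#\mathrm{Acyc\mbox{-}2SAT}$, I would rewrite every clause $a\vee b$ as the implication $\neg a\to b$, absorbing literal negations into unary constraints at the relevant variable positions. Because the constraint hypergraph produced by this rewriting has the same incidence structure as the original 2CNF formula, acyclicity is preserved, and the whole reduction runs in logarithmic space.

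The main obstacle lies in this middle case, and specifically in the local cancellation step that extracts the pure implication from the weighted $f$: one must ensure that the gadget does not introduce auxiliary variables or hyperedges in a way that destroys the global acyclicity of the target instance. This is exactly where AT-constructibility, in contrast to the more permissive T-constructibility used for polynomial-time classifications in \cite{CLX14,Yam12a}, becomes essential, since it constrains the allowed gadgets to act on the existing hyperedge structure, so that the reduction target is always a bona fide acyclic \#CSP instance and the logspace reduction from $\#\mathrm{Acyc\mbox{-}2SAT}$ to $\sharpacsp(\FF,\UU)$ remains well-typed.
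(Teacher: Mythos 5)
Your three-case skeleton matches the paper's (tractability for $\FF\subseteq\ED$ via Proposition~\ref{ED-type-case}, the middle case via a reduction from $\#\mathrm{Acyc\mbox{-}2SAT}$, the outer case by recovering $XOR$ and falling back on Theorem~\ref{main-theorem}), but both hardness cases have genuine gaps. In the middle case, the step ``rewrite every clause $a\vee b$ as the implication $\neg a\to b$, absorbing literal negations into unary constraints'' does not work: a clause with two positive literals is $OR_2$ and one with two negative literals is $NAND_2$, and neither is a unary-weighted $Implies$ --- there is no unary $u$ with $Implies(x,y)\,u(x)=OR_2(x,y)$, since the support of the left side is contained in $\{00,01,11\}$ while $OR_2$'s support contains $10$. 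Negating a variable inside a binary constraint is a substitution that changes the constraint function, and simulating it requires an $XOR$ gadget, which is exactly what is unavailable here. The paper instead AT-constructs $OR_2$ and $NAND_2$ from $Implies$ using an auxiliary variable and the signed unary $u_0=[1,-1]$, e.g.\ $OR_2(x,y)=-\sum_{z}Implies(x,z)\,Implies(y,z)\,u_0(z)$ (Lemma~\ref{IMPLIES-case}), and this signed-cancellation gadget is the actual content of Lemmas~\ref{Acyc-2SAT-vs-Implies} and~\ref{implies-bound}. Since you flagged this case as the ``genuinely new content,'' the broken gadget is a real hole, not a presentational one. (You also need a pinning argument to get down to arity $2$; $\IM$ contains higher-arity products of $Implies$ and unaries, so ``the support of $f$ is the graph of a binary implication'' is not immediate.)

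In the outer case, the witness condition is wrong: $f\in\FF\setminus\IM$ does not suffice to AT-construct $XOR$, because $\IM$ by definition excludes all of $\NZ$, so e.g.\ any nowhere-zero degenerate constraint lies outside $\IM$ while being a mere product of unaries, from which $XOR$ (a non-degenerate function) is certainly not realizable. You need a witness outside $\IM\cup\ED$, and when $\FF\not\subseteq\ED$ and $\FF\not\subseteq\IM$ are witnessed by different functions such a single witness need not exist, so the case split has to be argued more carefully (e.g.\ by combining what the $\notin\ED$ witness and the $\notin\IM$ witness each yield). Even granting a witness $f\notin\IM\cup\ED$, the assertion that ``a short case analysis \ldots should yield a local gadget implementing $XOR$'' is precisely the paper's Lemma~\ref{IM-OR-XOR}, i.e.\ the technical heart of this direction, and acyclicity is a genuine obstruction here: the paper explicitly notes that the natural gadget expressing $XOR$ via three $OR_2$'s on a triangle is cyclic, so any such construction must be redesigned (e.g.\ as a path gadget with signed unary weights) and verified case by case over the binary forms of Lemmas~\ref{OR-character}--\ref{nowhere-zero-constraint}. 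As written, the proposal asserts rather than proves the two reductions that carry all the difficulty.
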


As noted earlier, $\#\mathrm{Acyc\mbox{-}2SAT}$ is not yet known to be $\#\logcfl$-complete (under logspace reductions).
If $\#\mathrm{Acyc\mbox{-}2SAT}$ is indeed complete for $\#\logcfl$, then Theorem \ref{second-theorem} truly provides a dichotomy classification.
In contrast, if $\#\mathrm{Acyc\mbox{-}2SAT}$ is not $\sharplogcfl$-complete, then the collection of all integer-valued counting problems reducible to it forms a distinctive counting complexity class sitting in between $\sharpl$ and $\sharplogcfl$.

To prove the above two theorems, we will introduce a new technical tool, called \emph{acyclic T-constructibility} (abbreviated as \emph{AT-constructibility}) in Section \ref{sec:acyclic-const} based on acyclic hypergraphs induced from \#ACSPs.  This is an adaptation of so-called \emph{T-constructibility} developed in \cite{Yam12a,Yam12b,Yam12c}.

%%%
\subsection{Treatment of Complex Numbers in This Work}\label{sec:treatment}

As noted in Section \ref{sec:counting-CSPs}, it is known that the choice of weight types (e.g., natural numbers, rationals, reals, or complex numbers) of constraints tend to alter the complexity classification of \#CSPs. Over the years, it has become a custom to use complex weights in the study of \#CSPs and, in this work, we also follow this trend and intend to deal with complex-weighted constraints.
For the sake of the curious reader, we briefly comment on how we treat
``arbitrary''
complex numbers
(not limited to ``computable'' ones)
in this work.

Those numbers are treated as basic, symbolic ``objects'' and some of them are initially included as part of input instances. We follow the existing convention of \cite{CL11,CLX14,Yam12a,Yam12b,Yam12c}  for freely ``expressing'' and ``calculating'' the complex numbers, which are treated as symbolic  ``objects'' during any computations. In the field of \emph{algebraic computing}, in particular, a uniform Turing machine model of Blum, Shub, and Smale \cite{BSS89} has been used to work over an arbitrary field or ring $\mathbb{F}$.

Based on this machine model, we can freely manipulate the elements of $\mathbb{F}$ by conducting simple arithmetical operations, such as multiplication, addition, and division, in a clear and direct manner. In this work, $\mathbb{F}$ is set to be $\complex$.

For more background on this machine model and its induced complexity classes, such as $\p_{\complex}$ and $\np_{\complex}$, the interested reader should refer to, e.g., textbooks \cite[Section 6.3]{AB09} and \cite[Section 1.4]{CC17} as well as references therein.

%%%%%%%%%%
%%%%%%%%%%
\section{Preparation: Basic Notions and Notation}\label{sec:basic-notion}

We explain the terminology used in the rest of this work.

%%%%
\subsection{Numbers, Sets, and Functions}

Two notations  $\integer$ and $\nat$ represent the sets of all \emph{integers} and of all \emph{natural numbers} (i.e., nonnegative integers), respectively. Given two numbers $m,n\in\integer$ with $m\leq n$, $[m,n]_{\integer}$ denotes the \emph{integer interval}  $\{m,m+1,m+2,\ldots,n\}$. In particular, when $n\geq1$, we abbreviate $[1,n]_{\integer}$ as $[n]$.
Let $\integer_2$ denote the group of $\{0,1\}$. We use the standard operations over $\integer_2$, including $AND_2$, $OR_2$, and $XOR$, where the subscript $2$ indicates the binary operations. We then define $Implies(x,y)$ to denote $OR_2(\bar{x},y)$, where $\bar{x}$ denotes the negation of $x$.
The notation $\complex$ denotes the set of all complex numbers. We then write $\imath$ for $\sqrt{-1}$.

In this work, any \emph{polynomial} must have nonnegative integer coefficients and any \emph{logarithm} is taken to the base $2$. A \emph{relation of arity $k$} (or a $k$-ary relation) over a set $D$ is a subset of $D^k$. Given a set $A$, its \emph{power set} is denoted $\PP(A)$. 

A finite nonempty set of ``symbols'' or ``letters'' is called an \emph{alphabet} and a finite sequence of an alphabet $\Sigma$ is a \emph{string} over $\Sigma$. The \emph{length} of a string $x$ is the total number of symbols in it and is denoted $|x|$. A \emph{language} over $\Sigma$ is a set of strings over $\Sigma$. The notation $\Sigma^*$ denotes the set of all strings over $\Sigma$.

Given an alphabet $\Sigma$, a function $f$ on $\Sigma^*$ (i.e., from $\Sigma^*$ to $\Sigma^*$) is said to be \emph{polynomially bounded} if there is a polynomial $p$ satisfying $|f(x)| \leq p(|x|)$ for all strings $x\in\Sigma^*$.
Similarly, a function $f$ from $\Sigma^*$ to $\nat$ is \emph{polynomially bounded} if a certain polynomial satisfies $f(x)\leq p(|x|)$ for all $x\in\Sigma^*$.

%%%%
\subsection{Boolean Circuits and Hypergraphs}

We intend to work on (undirected) hypergraphs. A \emph{hypergraph} is of the form $(V,E)$ with a finite set $V$ of vertices and a set $E$ of \emph{hyperedges}, where a hyperedge is a subset of $V$.  A hyperedge is \emph{empty} if it is the empty set. The \emph{empty hypergraph} has no vertex (and thus has only the empty hyperedge).
A hypergraph $G=(V,E)$ is said to be \emph{acyclic} if, after applying the following actions (i)--(ii) finitely many times, $G$ becomes the empty hypergraph: (i) remove vertices that appear in at most one hyperedge and (ii) remove hyperedges that are either empty or contained in other hyperedges. This notion of acyclicity is also called \emph{$\alpha$-acyclicity}  \cite{Fag83}.
It is important to note from \cite{BFMY83} that the order of applications of those actions does not affect the form of the final hypergraph.
In \cite[Corollary 3.7]{GLS01}, it is shown that the problem of determining whether or not a given hypergraph is acyclic belongs to $\mathrm{SL}$ (symmetric NL). Since $\mathrm{SL}=\dl$ \cite{Rei08},
this problem actually falls into $\dl$.
Hence, we conclude the following.

\begin{fact}\label{acyclic-logspace}
Using logarithmic space, we can determine the acyclicity of hypergraphs.
\end{fact}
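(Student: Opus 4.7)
The plan is to compose two black-box results from the literature, both of which are already flagged in the paragraph immediately preceding the fact. First, I would invoke Corollary~3.7 of Gottlob, Leone, and Scarcello~\cite{GLS01}, which places the hypergraph $\alpha$-acyclicity problem inside the complexity class $\mathrm{SL}$ of symmetric logarithmic space. The intuition behind their argument is that testing $\alpha$-acyclicity can be recast, via the join-tree (or equivalently the GYO) characterization, as an undirected-reachability-style property on an auxiliary graph built from the incidences between vertices and hyperedges; symmetric logspace is exactly the class that captures such symmetric connectivity problems.

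Second, I would appeal to Reingold's theorem~\cite{Rei08}, which establishes the identity $\mathrm{SL}=\dl$. Composing the two inclusions gives that the hypergraph acyclicity problem lies in $\dl$, i.e., is decidable in deterministic logarithmic space, which is precisely what the fact asserts. Since the classifications in Theorems~\ref{main-theorem} and \ref{second-theorem} will require checking acyclicity of the constraint hypergraph of each input instance as a preprocessing step within a logspace-bounded reduction, this is all one actually needs downstream.

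The main (non-)obstacle here is essentially bookkeeping rather than mathematics: one must verify that the iterative definition of acyclicity adopted just above the statement (repeated removal of vertices incident to at most one hyperedge and of empty or dominated hyperedges, with order-independence as noted from \cite{BFMY83}) coincides with the notion of $\alpha$-acyclicity analyzed in \cite{GLS01}. This identification is standard and is precisely the content of the equivalence theorems of \cite{BFMY83}, so no new combinatorial argument, and in particular no explicit logspace algorithm, needs to be produced here.
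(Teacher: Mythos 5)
Your proposal is correct and matches the paper's own argument exactly: the paper likewise derives the fact by citing Corollary~3.7 of Gottlob, Leone, and Scarcello for membership of hypergraph acyclicity in $\mathrm{SL}$ and then applying Reingold's theorem that $\mathrm{SL}=\dl$. No further comment is needed.
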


This fact will be implicitly used throughout this work.

A \emph{Boolean circuit} (or simply, a \emph{circuit}) is an acyclic directed graph whose internal nodes are called gates, nodes of indegree $0$ are called input gates, and leaves are output gates. 
As other gates, we use AND ($\wedge$), OR ($\vee$), and the negation ($\overline{x}$). As customary, the negation is applied only to input variables. The indegree and the outdegree of a gate are called its \emph{fan-in} and \emph{fan-out}. 
Let us consider a Boolean circuit $C$ of $n$ inputs and a binary string $x$ of length $n$. 
A circuit is \emph{leveled} if all gates are assigned to ``levels'' so that input gates are at level $0$, any gate at level $i$ take inputs from gates at level $i-1$. A leveled circuit is called \emph{alternating} if (i) all gates at the same level are of the same type ($\wedge$ and $\vee$) and (ii) gates at odd levels and gates at even levels have different types. A leveled alternating circuit is said to be \emph{semi-unbounded} if all AND gates in in have fan-in $2$ but we allow OR gates to have unbounded fan-in. In this work, all gates are assumed to have fan-out $1$. 

An \emph{accepting subtree} of $C$ on the input $x$ is a subtree $T$ of $C$ that satisfies the following: (i) all leaves of $T$ are evaluated by $x$, (ii) the root of $T$ is evaluated to be true, (iii) each AND gate $G$ in $T$ has two children in $T$ and if these children are evaluated to be true, then so is $G$, and (iv) each OR gate $G$ in $T$ has exactly one child in $T$ and if the child is evaluated to be true, then so is $G$.

To handle circuits as part of input instances, we use an appropriately defined encoding of a circuit $C$, which is composed of ``labels'' of all gates in $C$, where the \emph{label} of a gate consists of the information on its gate location (its level, gate number at each level), gate type ($\vee$, $\wedge$, $\neg$), and its direct connection to its child gates.

%%%%
\subsection{Machine Models and Counting}\label{sec:machines}

As an underlying fundamental computation model, we use \emph{multi-tape deterministic Turing machines} (or DTMs) with read-only input tapes, rewritable work tapes, and (possibly) write-once\footnote{A tape is called \emph{write-once} if its tape head never moves to the left and, whenever the tape head writes down a non-blank symbol, it must move to the right.} output tapes. Let $\fl$ denote the collection of all functions from $\Sigma^*$ to $\Gamma^*$ for alphabets $\Sigma$ and $\Gamma$ that are  computable by DTMs in polynomial time using logarithmic space. 
Since the underlying DTMs run in polynomial time, these functions must be polynomially bounded.
A \emph{counting Turing machine} (or a CTM) is a nondeterministic Turing machine (NTM) whose outcome is the number of accepting computation paths. Based on CTMs, $\sharpl$ is defined to be composed of all functions witnessed by polynomial-time logarithmic-space (or log-space) CTMs with read-only input tapes  \cite{AJ93}.

A \emph{one-way nondeterministic pushdown automaton} (or a 1npda, for short) is a tuple $(Q,\Sigma,{\{\triangleright,\triangleleft\}}, \Gamma,\delta,q_0,\bot,Q_{acc},Q_{rej})$.   A \emph{one-way counting pushdown automaton} (or a \#1pda) is fundamentally a 1npda but it is designed to ``output'' the number of accepting computation paths.
Similar to $\sharpl$, the function class $\sharpcfl$ is composed of all functions from $\Sigma^*$ to $\nat$ for arbitrary alphabets $\Sigma$ witnessed by
\#1pda's running in polynomial time.

A \emph{two-way nondeterministic auxiliary pushdown automaton} (or an aux-2npda, for short) is a polynomial-time two-way nondeterministic pushdown automaton equipped further with a two-way rewritable $O(\log{n})$ space-bounded auxiliary work tape.
Formally, an aux-2npda is expressed as a tuple $(Q,\Sigma,{\{\triangleright,\triangleleft\}}, \Gamma, \Theta,  \delta,q_0,\bot,Q_{acc},Q_{rej})$, where $Q$ is a finite set of (inner) states, $\Sigma$ is an input alphabet, $\Theta$ is an auxiliary (work) tape alphabet, $\Gamma$ is a stack alphabet, $q_0$ ($\in Q$) is the initial state, $\bot$ ($\in\Gamma$) is the bottom marker, $Q_{acc}$ and $Q_{rej}$ are sets of accepting and rejecting states, respectively, and $\delta$ is a transition function mapping $(Q-Q_{halt})\times \check{\Sigma}\times \Gamma \times \Theta$ to $\PP(Q\times\Gamma^*\times \Theta\times D_1\times D_2)$ with $\check{\Sigma} = \Sigma\cup\{\triangleright,\triangleleft\}$, $D_1=D_2=\{-1,0,+1\}$, and $Q_{halt} = Q_{acc}\cup Q_{rej}$.
The complexity class $\logcfl$ is  characterized in terms of aux-2npda's \cite{Sud78}.
In a straightforward analogy to decision problems, a counting variant of $\logcfl$ has been discussed in the past literature.

Similar to CTMs, a \emph{two-way counting auxiliary pushdown automaton} (or a \#aux-2pda) is an aux-2npda but it can output the total number of accepting computation paths on each input string. We write $\#\mathrm{AuxPDA,\!TISP}(n^{O(1)},\log{n})$ to denote the collection of all functions computed by \#aux-2pda's in time $n^{O(1)}$ using space $O(\log{n})$, where ``TI'' and ``SP'' respectively refer to ``time'' and ``space''.

The complexity class $\sharpsac{1}$ consists of all functions computing the total number of accepting subtrees of Boolean circuits taken from uniform families of semi-unbounded Boolean circuits of polynomial size and logarithmic depth. A family $\{C_n\}_{n\in\nat}$ of Boolean circuits is called \emph{L-uniform} if there exists a log-space DTM $M$ running in time polynomial in $n$ such that, for any input $1^n$, $M$ produces an encoding of $C_n$.
It is known that, for $\sac^{1}$, $\dl$-uniformity and $\dlogtime$-uniformity are interchangeable.

%%%

In late 1970s, Valiant \cite{Val76,Val79} used polynomial-time (Turing) reduction in a discussion on $\sharpp$-completeness using oracle Turing machines, which are DTMs equipped further with write-once query tapes and  read-only answer tapes.

Krentel \cite{Kre88} considered a \emph{polynomial-time metric reduction}  from $f$ to $g$, which is a pair $(h_1,h_2)$ of functions in $\fl$ with $h_1:\Sigma^*\to\Gamma^*$ and $h_2:\Sigma^*\times \nat\to\nat$ such that $f(x)=h_2(x,g(h_1(x)))$ for any $x\in\Sigma^*$. 
\`{A}lvarez and Jenner \cite{AJ93} used a log-space variant to show the $\sharpl$-completeness of counting problems. 

To compare the computational complexities of two functions $f$ and $g$, we here take the following functional analogue of the standard many-one reduction between two languages. We set the notation $f\leq^{\dl} g$ ($f$ is \emph{logspace reducible to} $g$) to indicate that
there exists a polynomially-bounded function $h\in\fl$ (called logspace reduction function) satisfying $f(x) = g(h(x))$ for any string $x\in\Sigma^*$ \cite{AJ93}.
Given a set $\FF$ of functions, we define $\mathrm{FLOG}(\FF)$ as the  closure of $\FF$ under logspace reductions.
As a special case, we take $\FF=\sharpcfl$ and consider $\mathrm{FLOG}(\sharpcfl)$.

Niedermeier and Rossmanith \cite{NR95} showed that $\#\sac^1$ equals $\#\mathrm{AuxPDA,\!SPTI}(\log{n},n^{O(1)})$. Reinhardt \cite{Rei92} claimed that $\mathrm{FLOG}(\sharpcfl)$ coincides with $\#\mathrm{AuxPDA,\!SPTI}(\log{n},n^{O(1)})$.
In summary, those three complexity classes coincide.

\begin{lemma}\label{basic-character}
{\rm \cite{NR95,Rei92}} $\#\mathrm{AuxPDA,\!TISP}(n^{O(1)},\log{n}) = \mathrm{FLOG}(\sharpcfl) = \#\sac^1$.
\end{lemma}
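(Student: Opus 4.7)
The plan is to establish the two equalities by proving a cycle of three inclusions, adapting to the counting setting the classical proofs that characterise $\logcfl$ via aux-2npda's, $\sac^1$ circuits, and context-free languages. Because the three classes above are counting analogues of complexity classes known to coincide (via Sudborough, Ruzzo, Venkateswaran), the underlying constructions are already in the literature; the task is to verify that each simulation preserves the number of accepting computations/subtrees and can be turned into a logspace reduction where needed.

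For the inclusion $\#\mathrm{AuxPDA,TISP}(n^{O(1)},\log n) \subseteq \#\sac^1$, the approach is to replace the usual reachability circuit on surface configurations by a circuit that counts realisations. Fix a \#aux-2pda $M$ running in time $n^{O(1)}$ with $O(\log n)$ auxiliary space, and let a surface configuration consist of a state, input-head position, work-tape content, and top stack symbol. For each ordered pair $(c,c')$ of surface configurations, introduce a gate representing the number of stack-neutral computation segments that start at $c$ and end at $c'$ without the stack dropping below its initial height. These segments are either single $M$-steps or split into two halves meeting at an intermediate surface configuration, giving a fan-in-$2$ AND over pairs, with the outer OR ranging over all possible midpoints (unbounded fan-in) and over the distinct cases of whether the stack is pushed or popped at the boundary. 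Balanced recursive halving keeps the depth $O(\log n)$, yielding a log-uniform semi-unbounded family whose accepting subtrees at the distinguished start configuration are in bijection with the accepting computations of $M$.

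For the inclusion $\#\sac^1 \subseteq \#\mathrm{AuxPDA,TISP}(n^{O(1)},\log n)$, an aux-2pda will traverse the circuit $C_n$ (generated on the fly by the uniform family's log-space DTM) top-down, using its stack to record the depth-first path. At an AND gate it pushes the right child and descends into the left; at an OR gate it nondeterministically guesses one child; at an input gate it checks the literal against the input. Stack height is bounded by the circuit depth $O(\log n)$, so the whole stack store fits in polynomial time, and by design each accepting subtree of $C_n$ corresponds to exactly one accepting computation path.

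Finally, for $\mathrm{FLOG}(\sharpcfl) = \#\sac^1$, the inclusion $\mathrm{FLOG}(\sharpcfl) \subseteq \#\sac^1$ follows because every \#1pda is a special \#aux-2pda (hence $\sharpcfl \subseteq \#\sac^1$ by the first inclusion above), and $\#\sac^1$ is easily seen to be closed under logspace reductions by composing the reduction function with the uniformity DTM. For the converse, I would pick a canonical $\sharpcfl$-complete problem (for example, the counting version of context-free parsing obtained by viewing a semi-unbounded circuit as a grammar in which AND gates correspond to binary productions and OR gates to multiple productions with the same left-hand side) and exhibit a logspace reduction that, given a description of $C_n$ and the input $x$, produces an equivalent counting-CFL instance whose derivation trees correspond bijectively to accepting subtrees of $C_n$ on $x$. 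The main obstacle I expect is the first inclusion: carefully organising the recursive stack-segment decomposition so that the resulting circuit is simultaneously semi-unbounded, of logarithmic depth, log-space uniform, and preserves the \emph{count} rather than just the existence of accepting computations.
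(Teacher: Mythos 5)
First, be aware that the paper does not prove this lemma at all: it is imported verbatim from Niedermeier--Rossmanith \cite{NR95} (for $\#\sac^1 = \#\mathrm{AuxPDA,\!TISP}(n^{O(1)},\log{n})$) and Reinhardt \cite{Rei92} (for $\mathrm{FLOG}(\sharpcfl) = \#\mathrm{AuxPDA,\!TISP}(n^{O(1)},\log{n})$), so you are reconstructing arguments whose details live in those references. Your overall architecture (a cycle of inclusions mirroring the decision-level equivalences of Sudborough, Ruzzo, Venkateswaran and Vinay) is the right one, and the direction $\#\sac^1 \subseteq \#\mathrm{AuxPDA,\!TISP}(n^{O(1)},\log{n})$ is essentially correct as you describe it: the depth-first traversal with a nondeterministic choice at each OR gate and stack-based bookkeeping at each AND gate gives an exact bijection between accepting subtrees and accepting computation paths.

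The genuine gap is exactly where you suspect it, and you have not closed it. In the inclusion $\#\mathrm{AuxPDA,\!TISP}(n^{O(1)},\log{n}) \subseteq \#\sac^1$, a gate for a realizable pair $(c,c')$ defined as an OR ``ranging over all possible midpoints'' destroys the count: a single stack-neutral computation segment of length $\ell$ passes through up to $\ell-1$ admissible intermediate surface configurations, so it is counted once per admissible split, and the circuit computes something much larger than the number of accepting computations. For the decision version this overcounting is harmless, which is why Venkateswaran's construction suffices for $\logcfl$; making the decomposition count-preserving is precisely the technical contribution of \cite{NR95}. The standard repair is to force the split to be \emph{unique per computation}: e.g., augment surface configurations with time stamps and split the segment running from time $t_1$ to time $t_2$ exactly at time $\lfloor (t_1+t_2)/2 \rfloor$ (combined with the canonical push/pop matching in the non-neutral case), so that each accepting computation induces exactly one accepting subtree, while polynomial running time keeps the gate count polynomial and balanced halving keeps the depth logarithmic. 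A second, smaller soft spot: your closure of $\#\sac^1$ under logspace reductions requires that the subcircuits computing the bits of the reduction $h(x)$ contribute a multiplicative factor of exactly $1$ to the subtree count, i.e., they must be realized with \emph{unique} accepting subtrees; this is possible because $h\in\fl$ is deterministic, but ``easily seen'' is doing real work there.
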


In analogy to $\sharpp$, in this work, we use the simple notation $\#\logcfl$ to express any of the complexity classes stated in the above lemma. 
Following this characterization of $\#\logcfl$, it is easy to show that $\sharplogcfl$ is closed under $\leq^{\dl}$.

\begin{lemma}
If $f\leq^{\dl} g$ and $g\in\sharplogcfl$, then $f\in\sharplogcfl$.
\end{lemma}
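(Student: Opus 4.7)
The plan is to reduce the statement to a simple transitivity/composition argument for logspace reductions, using the characterization $\sharplogcfl = \mathrm{FLOG}(\sharpcfl)$ supplied by Lemma \ref{basic-character}.

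First, since $g\in\sharplogcfl = \mathrm{FLOG}(\sharpcfl)$, unfolding the definition of $\mathrm{FLOG}(\cdot)$ produces a function $g'\in\sharpcfl$ and a polynomially-bounded logspace reduction function $h_{g}\in\fl$ satisfying $g(y) = g'(h_{g}(y))$ for every string $y$. Second, from the hypothesis $f\leq^{\dl} g$ we obtain a polynomially-bounded $h\in\fl$ with $f(x) = g(h(x))$ for every $x$. Chaining these equalities gives $f(x) = g'(h_{g}(h(x)))$, so it suffices to show that the composed map $H(x) := h_{g}(h(x))$ is itself a logspace reduction function, i.e., lies in $\fl$ and is polynomially bounded; then $f\leq^{\dl} g'$ with $g'\in\sharpcfl$, which places $f$ in $\mathrm{FLOG}(\sharpcfl) = \sharplogcfl$.

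The next step is therefore to verify that $\fl$ is closed under composition. This is the standard textbook argument based on the ``recomputation trick'': a log-space DTM $M_{H}$ computing $H$ on input $x$ simulates the outer DTM $M_{g}$ for $h_{g}$ while pretending that its input tape holds $h(x)$. Whenever $M_{g}$'s virtual input head would read the $i$-th symbol of $h(x)$, the machine $M_{H}$ suspends the simulation, runs the inner DTM $M_{h}$ for $h$ on $x$ from scratch while counting its output symbols until the $i$-th one is produced, feeds that symbol to $M_{g}$, and resumes. Only the current position $i$ of the virtual head, the work-tape contents of $M_{g}$, and the work-tape contents of $M_{h}$ must be stored simultaneously; since $h$ and $h_{g}$ are polynomially bounded, $i$ has magnitude polynomial in $|x|$ and hence fits in $O(\log|x|)$ bits, while both work tapes use $O(\log|x|)$ space as well. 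The output is written by $M_{g}$ onto $M_{H}$'s write-once output tape in the natural way. Polynomial boundedness of $H$ follows from $|H(x)|\leq p_{g}(|h(x)|)\leq p_{g}(p_{h}(|x|))$ for the polynomial bounds $p_{g},p_{h}$ of $h_{g},h$ respectively.

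Combining these two steps yields $H\in\fl$ and $f(x) = g'(H(x))$ with $g'\in\sharpcfl$, whence $f\in\mathrm{FLOG}(\sharpcfl)=\sharplogcfl$ by Lemma \ref{basic-character}. The only nontrivial ingredient is the $\fl$-composition argument; everything else is a direct unfolding of definitions, so I would keep that recomputation simulation as the sole piece of real content and state the rest as one-line unfoldings.
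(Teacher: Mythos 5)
Your proposal is correct and follows essentially the same route as the paper's own proof: unfold $g\in\mathrm{FLOG}(\sharpcfl)$ to get $g = g'\circ h_g$ with $g'\in\sharpcfl$, compose the two reduction functions, and observe that the composition stays in $\fl$. The only difference is that you spell out the standard recomputation argument for closure of $\fl$ under composition, which the paper dismisses as obvious.
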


\begin{proof}
Assume that $f\leq^{\dl} g$ and $g\in\sharplogcfl$. Since $f\leq^{\dl} g$, we take a function $h\in\fl$ satisfying $f(x) = g(h(x))$ for all $x$. 
Since $g\in\sharplogcfl= \mathrm{FLOG}(\sharpcfl)$, there are two functions $d\in\sharpcfl$  and $k\in\fl$ such that $g(y)= d(k(y))$ for all $y$. 
We then obtain $f(x) = d(k(h(x)))$. It thus suffices to define $m=k\circ h$, which obviously belongs to $\fl$. It also follows by the definition that $f(x)=d(m(x))$ for all $x$. This implies that $f$ is in $\#\logcfl$, as requested.
\end{proof}

A function $f$ is \emph{$\sharplogcfl$-hard} (or hard for $\sharplogcfl$) under logspace reductions if $g\leq^{\dl}f$ holds for any $g\in\sharplogcfl$. If $f$ further satisfies $f\in\sharplogcfl$, then $f$ is called \emph{$\sharplogcfl$-complete} (or complete for $\sharplogcfl$).

Vinay \cite{Vin91} presented a counting problem complete for $\sharplogcfl$, the \emph{ranking of 1dpda problem} (abbreviated as $RANK_{1dpda}$). The \emph{rank} of a string $x$ in a language $L$ is the number of strings lexicographically smaller than $x$ in $L$. This problem $RANK_{1dpda}$ computes the rank of $x$ in $L(M)$ from a given encoding $\pair{M}$ of a 1dpda $M$ and a given string $x\in\{0,1\}^*$.

Based on Lemma \ref{basic-character}, we introduce another counting problem, called the \emph{counting SAC$^1$ problem} based on semi-unbounded circuits. We assume an efficient encoding $\pair{C}$ of a Boolean circuit $C$ into an appropriate binary string.

\s
{\sc Counting SAC$^1$ Problem} ({\sc \#SAC1P}):
\renewcommand{\labelitemi}{$\circ$}
\begin{itemize}\vs{-1}
  \setlength{\topsep}{-2mm}%
  \setlength{\itemsep}{1mm}%
  \setlength{\parskip}{0cm}%

\item {\sc Instance:} an encoding $\pair{C}$ of a leveled, alternating, semi-unbounded Boolean circuit of size at most $n$ and of depth at most $\log{n}$ with $n$ input bits and an input string $x\in\{0,1\}^n$.

\item {\sc Output:} the number of accepting subtrees of $C$ on $x$.
\end{itemize}\vs{-1}

This problem $\#\mathrm{SAC1P}$ will serve as a canonical $\sharplogcfl$-complete problem
in Section \ref{sec:complexity-ACSP}.

\begin{proposition}\label{SAC1P-complete}
$\#\mathrm{SAC1P}$ is $\sharplogcfl$-complete under logspace reductions.
\end{proposition}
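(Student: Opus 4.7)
The plan is to establish both (i) $\#\mathrm{SAC1P}\in\sharplogcfl$ and (ii) that every $f\in\sharplogcfl$ is logspace reducible to $\#\mathrm{SAC1P}$, exploiting the triple characterization $\sharplogcfl = \#\mathrm{AuxPDA,\!TISP}(n^{O(1)},\log{n}) = \#\sac^{1}$ guaranteed by Lemma \ref{basic-character}.

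For hardness (ii), fix $f\in\sharplogcfl = \#\sac^{1}$. By definition there is an $\dl$-uniform family $\{C_n\}_{n\in\nat}$ of leveled, alternating, semi-unbounded Boolean circuits of polynomial size and logarithmic depth such that $f(x)$ equals the number of accepting subtrees of $C_{|x|}$ on $x$. The $\dl$-uniformity yields a log-space DTM that, given $1^n$, outputs $\pair{C_n}$. I would then take as reduction the map $h(x) = \pair{\pair{C_{|x|}},x}$ under a suitable pairing; clearly $h\in\fl$ and $f(x) = \#\mathrm{SAC1P}(h(x))$, so $f\leq^{\dl}\#\mathrm{SAC1P}$.

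For membership (i), I would construct a \#aux-2pda $M$ that, on input $w=\pair{\pair{C},x}$, recursively evaluates $C$ starting from its output gate, using its stack to save return information and its auxiliary tape to store the current gate label (which is $O(\log|w|)$ bits). At an OR gate $M$ nondeterministically picks exactly one child and descends into it, so that the accepting paths through the gate \emph{sum} over the children's counts. At an AND gate $G$ with children $A$ and $B$, $M$ pushes a return marker for $G$, descends into $A$, and upon a successful return pops back and then descends into $B$; this serialization makes each (subtree-of-$A$,\,subtree-of-$B$) pair correspond to exactly one accepting path, so counts \emph{multiply} as required. At an input gate $M$ reads the relevant bit of $x$ and accepts iff the literal is satisfied. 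The recursion depth is bounded by the depth of $C$, namely $O(\log|w|)$; the total running time is polynomial because gates of $C$ have fan-out $1$, and the auxiliary tape stays within $O(\log|w|)$ cells. This places $\#\mathrm{SAC1P}$ in $\#\mathrm{AuxPDA,\!TISP}(|w|^{O(1)},\log|w|)=\sharplogcfl$.

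The main obstacle is arranging the AND-gate simulation so that the number of accepting computation paths of $M$ equals the \emph{product}, not the sum, of the two children's counts. The remedy is to make the return/pop transitions after processing $A$ strictly deterministic, so the only nondeterminism comes from the independent subtree choices inside $A$ and $B$; this yields a clean bijection between accepting paths of $M$ and accepting subtrees of $C$. Designing the stack alphabet and return markers to enforce this bijection while staying within logarithmic auxiliary space is routine, but it is the one step where careful bookkeeping is essential to avoid over- or under-counting.
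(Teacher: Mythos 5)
Your proposal is correct and follows essentially the same route as the paper: the hardness direction is the identical reduction $x\mapsto\pair{\pair{C_{|x|}},x}$ via $\dl$-uniformity, and the membership direction is the standard aux-2pda circuit evaluation (OR gates as nondeterministic branching, AND gates serialized on the stack) that the paper simply cites from the proof of $\#\sac^1\subseteq\sharplogcfl$ in Niedermeier--Rossmanith. Your explicit treatment of the AND-gate bookkeeping is a fuller account of the step the paper leaves to the reference, but it is the same argument.
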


\begin{proof}
Firstly, we show that $\#\mathrm{SAC1P}$ belongs to $\sharplogcfl$.
Let $\pair{\pair{C},x}$ denote any instance given to $\#\mathrm{SAC1P}$ and set $n=|x|$. By the input requirement of $\#\mathrm{SAC1P}$, $C$ is a leveled, alternating, semi-unbounded Boolean circuit of size at most $n$ and of depth at most $\log{n}$. Note that $|\pair{C}|=O(n)$ and $|\pair{\pair{C},x}|=O(n)$.
Similar to the proof of $\#\mathrm{SAC}^1 \subseteq \sharplogcfl$ \cite{NR95}, we can simulate
$C(x)$ by an appropriate log-space auxiliary pushdown automaton in polynomial time.

Secondly, since $\#\mathrm{SAC}^1=\sharplogcfl$, it suffices to show that every counting problem $P$ in $\#\mathrm{SAC}^1$ is logspace reducible to $\#\mathrm{SAC1P}$. Let $P$ be any counting problem in $\#\mathrm{SAC}^1$.
We write $P(x)$ for the output value of $P$ on instance $x$.
Let us take an $\dl$-uniform family $\CC=\{C_n\}_{n\in\nat}$ of leveled semi-unbounded Boolean circuits of polynomial size and logarithmic depth such that, for any binary string $x$, $P(x)$ equals the number of accepting subtrees of $C_{|x|}$ on $x$.
Take a fixed constant $k\in\nat^{+}$ such that the depth of $C_n$ is at most $k\log n$ and the size of $C_n$ is at most $n^k$.
We then define $f(x)= \pair{\pair{C_{|x|}},x}$ for any $x$. The $\dl$-uniformity of $\CC$ implies that $f$ is computable in polynomial time using $O(\log{n})$ space. Moreover, by the definition of $\CC$ and $f$, it follows that $P(x)$ equals $\#\mathrm{SAC1P}(f(x))$. Therefore, $P$ is logspace reducible to $\#\mathrm{SAC1P}$.
\end{proof}

%%%%%%%%
%%%%%%%%
\section{Counting Acyclic Constraint Satisfaction Problems or \#ACSPs}

We formally introduce complex-weighted counting acyclic CSPs (or \#ACSPs) and a technical tool called acyclic-T-constructibility.
We also prove basic properties of \#ACSPs.

%%%%
\subsection{Various Constraints and Sets of Constraint Functions}\label{sec:constraints}

Since we are concerned with \emph{counting constraint satisfaction problems} (or \#CSPs, for short), we define a \emph{constraint} as a pair of the form $(f,(v_{1},v_{2},\ldots,v_{k}))$, where $f$ is a $k$-ary function and  $(v_{1},\ldots,v_{k})$ is a $k$-tuple of variables. We use the notation $arity(f)$ to denote this number $k$.
In general, a triplet $I=(Var,D,B,C)$ is called an instance of a \#CSP  if $Var=\{v_i\}_{i\in[t]}$ is a set of variables for $t\in\nat^{+}$, $D$ is a finite set, called a \emph{domain}, and $C=\{C_i\}_{i\in[s]}$, called a \emph{constraint set}, is a finite set of constraints of the form $C_i=(f_i,(v_{i_1},v_{i_2},\ldots,v_{i_{k_i}}))$ with  $f_i:D^k\to B$ for $s\in\nat^{+}$ and $v_{i_j}$ ranging over $D$. 
We call $f_i$ the \emph{$k_i$-ary constraint function} of $C_i$  and the value of $f_i$ are often called ``weights''.
Let $var(C_i)$ denote the set $\{v_{i_1},v_{i_2},\ldots,v_{i_{k_i}}\}$.
Remark that constraint functions are called ``signatures'' in \cite{CL11,CLX14,Yam12a,Yam12b,Yam12c}. 

An \emph{assignment} $\sigma$ is a function from $Var$ to $D$. Given such an assignment $\sigma:Var\to D$, we can evaluate the value of each $C_i$ by simply calculating $f_i(\sigma(v_{i_1}),\sigma(v_{i_2}),\ldots,\sigma(v_{i_{k_{i}}}))$. 
The \emph{support}\footnote{This notion was called in \cite{Yam12a} the ``underlying relation''.} $\supp(C)$ of a constraint $C=(f,(v_1,v_2,\ldots,v_k))$ is the set  $\{(d_1,d_2,\ldots,d_k)\in D^n \mid f(d_1,d_2,\ldots,d_k)\neq0\}$. When $D$ is the binary set $\{0,1\}$, a \#CSP is called \emph{Boolean}.

Throughout this work, we limit our interest to complex-weighted Boolean \#CSPs (namely, $B=\complex$ and $D=\{0,1\}$).
For the sake of simplicity, hereafter, we intentionally drop the term ``Boolean'' as well as ``complex-weighted''.
Since we discuss only complex-weighted Boolean constraints, the terse term ``constraint'' should be understood in this sense in the rest of this work unless stated otherwise.

Moreover, we omit $B$ and $D$ from $I$ and we succinctly write $(Var,C)$ in the rest of this work.
With a given \#CSP instance $I=(Var,C)$, we associate it with a labeled  hypergraph $G_I=(V_I,E_I)$, where $V_I=Var$ and $E_I=\{ \{v_{1},v_{2},\ldots,v_{k}\} \mid (f,(v_{1},v_{2},\ldots,v_{k}))\in C\}$,  whose hyperedge $\{v_{1},v_{2},\ldots,v_{k}\}$ has $f$ as its label.

We conveniently call $G_I$ the \emph{constraint hypergraph}\footnote{It is important to note that this association of $I$ to $G_I$ is \emph{not} unique because the order of the elements $v_{1},v_{2},\ldots,v_{k}$ is irrelevant for the hyperedges of $E_I$.
This is used here only for an introduction of our notion of ``acyclicity'' to utilize the results of \cite{GLS01}. A further extension of this  ``acyclicity'' notion may be possible based on a certain refinement of hypergraphs but we do not delve into this aspect in this work.} of $I$.
A \#CSP instance $I=(Var,C)$ is \emph{acyclic} if its constraint  hypergraph $G_I$ is acyclic.
An \emph{acyclic \#CSP} refers to a \#CSP whose instances are all restricted to acyclic ones.
For simplicity, we abbreviate acyclic \#CSPs as \emph{\#ACSPs}.

Given an instance $(Var,C)$ of a \#CSP, our purpose is to compute the complex value $count(I) = \sum_{\sigma}\prod_{i\in[s]} f_i(\sigma(v_{i_1}),\sigma(v_{i_2}),\ldots,\sigma(v_{i_{k_i}}))$, where   $\sigma$ ranges over all (global) assignments from $Var$ to $\{0,1\}$. In what follows, we freely identify $C_i$ with $f_i$ (associated with its underlying input variables) as long as $(\sigma(v_{i_1}),\sigma(v_{i_2}),\ldots,\sigma(v_{i_{k_i}}))$ is clear from the context. In this case, we loosely refer to $f_i$ as a ``constraint'', instead of a ``constraint function''.

We remark that, when all $f_i$'s are $\{0,1\}$-valued functions, we obtain the ``standard'' constraint satisfaction problems (CSPs). These CSPs are closely related to \emph{(rule-based) Boolean conjunctive queries} on database schemas and an acyclic variant of conjunctive queries has been intensively
studied in database theory (see, e.g., \cite{BFMY83,Fag83,GLS01}). We briefly call by ACSPs the acyclic variants of CSPs.

%%%%%

We assume the standard lexicographic order on $\{0,1\}^k$ for each fixed number $k\in\nat^+$. Each constraint function $f:\{0,1\}^k\to\complex$ can be expressed as a series of its output values according to this order on $\{0,1\}^k$. For instance, a binary constraint function $f$ is expressed succinctly as $(f(00),f(01),f(10),f(11))$.
With this expression, we obtain $OR_2=(0,1,1,1)$, $AND_2=(0,0,0,1)$, and $XOR=(0,1,1,0)$.
A \emph{symmetric constraint} is a constraint whose constraint function behaves depending only on the Hamming weights of inputs; namely, $f(x_1,x_2,\ldots,x_k) = f(x_{\pi(1)},x_{\pi(2)},\ldots,x_{\pi(k)})$ for any permutation $\pi:[k]\to[k]$.
The other constraint functions are called \emph{asymmetric}. For such a symmetric constraint function $f$ of arity $k$, we write $[a_0,a_1,\ldots,a_k]$ when $a_i$ is the value of $f(x)$ for all $x$ containing exactly $i$ 1s.
This notation helps us express $OR_2=[0,1,1]$, $AND_2=[1,0,0]$, and $XOR=[0,1,0]$. See \cite{CL11,CLX14,Yam12a} for more information.

Let us introduce several important sets of constraint functions.
Define $\Delta_0=[1,0]$ and $\Delta_1=[0,1]$, namely, for any constant $b\in\{0,1\}$, $\Delta_b(x)=1$ (resp., $=0$) if $x=b$ (resp., $x\neq b$).
Let $\mathcal{AND}=\{AND_k\mid k\geq2\}$ with $AND_k=[0,0,\ldots,0,1]$ ($k$ zeros) and let $\mathcal{OR} = \{OR_k\mid k\geq 2\}$ with $OR_k = [0,1,1,\ldots,1]$ ($k$ ones).
Moreover, let $NAND_k=[1,1,\cdots,1,0]$ ($k$ ones). Note that $XOR(x,y) = OR_2(x,y) NAND_2(x,y)$.
We also define $Implies=(1,1,0,1)$, which corresponds to the logical expression of ```$x\to y$''.
For convenience, the ``\emph{reverse implies}'' $RImplies$ is also introduced as $(1,0,1,1)$, which indicates ``$x\leftarrow y$''.
The \emph{equality} $EQ_{k}$ of arity $k$ is the function $f$ satisfying that $f(x_1,x_2,\ldots,x_k)=1$ iff $x_1=x_2=\cdots = x_k$. Note that $EQ_2(x,y) = Implies(x,y) RImplies(x,y)$.
The \emph{disequality} $NEQ_{k}$ of arity $k$ is the function $g$ such that $g(x_1,x_2,\ldots,x_k)= 1$ iff $EQ_{k}(x_1,x_2,\ldots,x_k)=0$.
In particular, when $k=2$, $NEQ_{2}$ is equivalent to $XOR$. Finally, we define $\UU$ to denote the set of all unary constraints. Note that $\Delta_0,\Delta_1\in \UU$ holds.

%%%%%%
\subsection{Counting Problems \#ACSP($\FF$)}\label{sec:Sharp-ACSP}

We formally define the counting problem \#ACSP($\FF$), where $\FF$ is a fixed set of complex-valued constraint functions.

\s
{\sc $\FF$-Restricted Counting Acyclic Constraint Satisfaction Problem} ({\sc
\#ACSP($\FF$)}):
\renewcommand{\labelitemi}{$\circ$}
\begin{itemize}\vs{-1}
  \setlength{\topsep}{-2mm}%
  \setlength{\itemsep}{1mm}%
  \setlength{\parskip}{0cm}%

\item {\sc Instance:} $I=(Var,C)$ with a set $Var=\{v_i\}_{i\in[t]}$ of Boolean variables
    %a domain $D$,
    and a constraint set $C=\{C_i\}_{i\in[s]}$ with $C_i=(f_i,(v_{i_1},v_{i_2}, \ldots,v_{i_k}))$ and $f_i\in\FF\cup\{\Delta_0,\Delta_1\}$ for $s,t\in\nat^{+}$, provided that $I$ is acyclic.
    %, provided that $\max\{|\alpha|:\alpha\in D\} = O(|Var|)$.

\item {\sc Output:} the complex value $count(I) = \sum_{\sigma}\prod_{i\in[s]} f_i(\sigma(v_{i_1}),\sigma(v_{i_2}),\ldots,\sigma(v_{i_{k_i}}))$, where $\sigma$ ranges over all assignments from $Var$ to $\{0,1\}$. We call the value $\prod_{i\in[s]} f_i(\sigma(v_{i_1}),\sigma(v_{i_2}),\ldots,\sigma(v_{i_{k_i}}))$ a \emph{partial weight} of $I$ by $\sigma$.
\end{itemize}\vs{-1}

Due to the form of the above definition, we implicitly exclude $\{\Delta_0,\Delta_1\}$ from $\FF$. The acyclicity of $I$ helps us compute $count(I)$ on counting auxiliary automata. We respectively denote by $\sharpl_{\complex}$ and $\sharplogcfl_{\complex}$ the complex-number extensions of the counting complexity classes $\sharpl$ and $\sharplogcfl$ by allowing arbitrary complex numbers (see Section \ref{sec:treatment} for the treatment of complex numbers).

\begin{theorem}\label{falling-sharplogcfl}
For any constraint set $\FF$, $\sharpacsp(\FF)$ is in $\sharplogcfl_{\complex}$.
\end{theorem}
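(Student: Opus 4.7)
The plan is to place $\sharpacsp(\FF)$ in $\sharplogcfl_{\complex}$ by exhibiting, for each acyclic instance $I=(Var,C)$, a complex-weighted counting auxiliary two-way pushdown automaton (a $\complex$-extension of an aux-2npda as reformulated for counting in Section \ref{sec:machines}) that runs in polynomial time with an $O(\log|I|)$ work tape and whose weighted sum of accepting computations equals $count(I)$. By the $\complex$-weighted analogue of Lemma \ref{basic-character}, this places $count(I)$ in $\sharplogcfl_{\complex}$. The symbolic treatment of complex numbers in Section \ref{sec:treatment} allows the machine to use each value $f_i(\cdot)$ as an atomic transition weight in unit time.

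Since $G_I$ is $\alpha$-acyclic, it admits a \emph{join tree} $T$: a tree whose nodes are the hyperedges of $G_I$ such that, for every variable $v\in Var$, the set of nodes $N$ with $v\in var(N)$ forms a connected subtree of $T$. Existence of $T$ is equivalent to acyclicity, and $T$ can be built in logarithmic space by a GYO-style elimination procedure, which is justified by Fact \ref{acyclic-logspace}. Rooting $T$ at an arbitrary node yields the sum-product decomposition
\[
count(I) \;=\; \sum_{\tau:\,var(\text{root})\to\{0,1\}} W(\text{root},\tau), \qquad W(N,\tau) \;=\; f_N(\tau)\prod_{M \text{ child of } N}\; \sum_{\tau'\,\text{ext.}\,\tau|_{var(M)\cap var(N)}} W(M,\tau'),
\]
where $f_N$ denotes the constraint function labelling node $N$. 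This identity follows by reorganising $\sum_\sigma\prod_i f_i(\sigma|_{var(C_i)})$ using the fact that, for any fixed assignment to $var(N)$, the choices of values for variables appearing strictly below $N$ in $T$ are independent across different child subtrees, since each variable's occurrence set is a connected subtree.

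The aux-2pda now performs a depth-first traversal of $T$ using its stack to encode the traversal history. On entering a node $N$, it identifies the "new" variables $var(N)\setminus var(\text{parent}(N))$, nondeterministically guesses a Boolean value for each, and pushes these values (together with a short node identifier) onto the stack; the remaining variables of $var(N)$ are already on the stack, pushed during visits to proper ancestors. Having thus assembled the full assignment $\tau$ to $var(N)$, the machine fires a transition whose complex weight is $f_N(\tau)$. When returning from $N$ to its parent it pops exactly the variables pushed at $N$; the connected-subtree property guarantees that no descendant-free constraint will ever again require those variables. The work tape only stores a current node index and $O(1)$ counters, using $O(\log|I|)$ space, and each node of $T$ is visited a bounded number of times, so the machine halts in polynomial time. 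The total weighted sum over accepting computations telescopes to the displayed sum-product expression, giving $count(I)$.

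The principal obstacle is in linking the stack discipline to the recursive decomposition: one must verify that the push-on-descent / pop-on-ascent protocol faithfully marginalises each variable at precisely the right moment (the root of its occurrence subtree), so that nondeterministic branching exactly implements the outer sum while the transition weight at each node supplies the inner product. A secondary technical point is the logspace construction and in-place navigation of the join tree, which relies on the logspace decidability of $\alpha$-acyclicity (Fact \ref{acyclic-logspace}) and a careful implementation of the GYO-style elimination to expose the parent pointer of each hyperedge on demand.
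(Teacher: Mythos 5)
Your proposal follows essentially the same route as the paper: extract a join tree/forest from the acyclic constraint hypergraph (the paper uses the maximal-weight spanning forest of the weighted relational graph, logspace-navigable by \cite{BG81,GLS01}, rather than a GYO elimination), then run a depth-first aux-2pda that nondeterministically guesses local assignments node by node, enforces consistency with the parent via the join-tree connectedness property, and accumulates the product of constraint values so that the weighted sum over accepting paths equals $count(I)$; this is exactly the paper's argument, with your sum-product recursion playing the role of its Claim about local versus global assignments. One implementation detail to repair: your ``push only the new variables'' stack discipline requires reading values buried below the top of the stack, which a pushdown automaton cannot do; the fix (and what the paper does) is to store the \emph{entire} local assignment $\pi_N$ to $var(N)$ in each frame, which suffices because the connected-occurrence property guarantees every non-new variable of $N$ already appears in its parent, so agreement need only be checked against the topmost frame at the moment of descent.
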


%%%%

For a given constraint $C=(c,(v_1,v_2,\ldots,v_k))$, its \emph{local assignment} is a function $\pi_C$ from $var(C)$ ($=\{v_1,v_2,\ldots,v_k\}$) to $\{0,1\}$. For two local assignments $\pi_{C_1}$ and $\pi_{C_2}$ respectively for constraints $C_1$ and $C_2$, we say that $\pi_{C_1}$ \emph{agrees} with $\pi_{C_2}$ if, for any variable $x\in dom(\pi_{C_1})\cap dom(\pi_{C_2})$,
$\pi_{C_1}(x)=\pi_{C_2}(x)$ follows, where $dom(f)$ expresses the \emph{domain} of a function $f$.

%%%%%%

A \emph{join forest} for a \#CSP instance $I=(Var,C)$, denoted $JF(I)$, is a forest whose vertices are constraints of $I$ and satisfies the following condition: if two constraints $C_1$ and $C_2$ share a common variable $v$, then $C_1$ and $C_2$ are connected and this $v$ must occur on a unique path between $C_1$ and $C_2$ (see, e.g., \cite{GLS01}). When a join forest has a single connected component, we particularly call it a \emph{join tree}.

In a similar way as in \cite{GLS01}, for a \#ACSP instance $I$, we further introduce the notion of \emph{weighted relational graph} $WG(I)$ whose vertices are all constraints in $I$ and edges are all unordered pairs $\{C_1,C_2\}$ of constraints in $I$ satisfying that (i) $C_1\neq C_2$ and (ii) both $C_1$ and $C_2$ share a common variable. Each edge $\{C_1,C_2\}$ is weighted by $w(C_1,C_2)$, which denotes the total number of common variables occurring in both $C_1$ and $C_2$. Consider a \emph{maximal-weight spanning forest} of $WG(I)$. Due to \cite{BG81} (noted in \cite[Lemma 3.1]{GLS01}), this forest is actually a join forest of $I$. We denote it by $JF(I)$. It follows from \cite{GLS01} (induced from \cite{BG81}) that we can determine whether or not a given unordered pair $\{C_1,C_2\}$ of constraints of an ACSP is an edge of $JF(I)$ using log space. The proof of Theorem \ref{falling-sharplogcfl} that follows below relies on this fact.

\vs{-2}
\begin{proofof}{Theorem \ref{falling-sharplogcfl}}
In the special case of $\FF=\setempty$, the theorem is obviously true. In what follows, we assume that $\FF\neq\setempty$. Consider any instance $I=(Var,C)$ of $\sharpacsp(\FF)$. Let $C=\{C_i\}_{i\in[s]}$ with $C_i=(f_i,(v_{i_1},v_{i_2},\ldots,v_{i_{k_i}}))$ for a certain number $t\in\nat^{+}$. We want to compute $count(I)$ using the associated \emph{edge-weighted relational graph} whose vertices are constraints of $I$ and edges are all unordered pairs $\{C_i,C_j\}$ of constraints satisfying that $C_i\neq C_j$ and $var(C_i)\cap var(C_j)\neq\setempty$, where $|var(C_i)\cap var(C_j)|$ is the weight of this edge. Consider its maximal-weight spanning forest $T$. 
This spanning forest $T$ becomes a join forest of $I$.
As noted above, using log space, it is possible to check if a given unordered pair of constraints is indeed an edge of $T$.

Meanwhile, we assume that $T$ is a tree. It suffices to construct an aux-2npda for which each computation path calculates $\prod_{i\in[s]} f_i(\sigma(v_{i_1}),\sigma(v_{i_2}),\ldots,\sigma(v_{i_{k_i}}))$, where $(f_i,(v_{i_1},v_{i_2},\ldots,v_{i_{k_i}}))$'s are constraints that appear on this path.
To compute $count(I)$, we traverse and backtrack all edges of $T$ from the root to leaves and backward by a depth-first, left-to-right manner. For this purpose, we use a stack to store the last node $C$ that has been visited, the choice of $\pi_{C}$, and the product of the values of constraint functions evaluated by the chosen local assignments.
In quest of the value $count(I)$, we need to exploit the fact that $G_I$ is an acyclic hypergraph.

The next claim helps us deal only with local assignments rather than a global assignment.
The claim can be proven in a way similar to the proof of \cite[Claim A]{GLS01} and we omit it.

\begin{claim}\label{claim-A-GLS01}
There exists an assignment $\sigma$ satisfying that $\prod_{i\in[s]} f_i(\sigma(v_{i_1}),\sigma(v_{i_2}), \ldots,\sigma(v_{i_{k_i}}))\neq0$ iff there exist $s$ local assignments $\pi_1,\pi_2,\ldots,\pi_s$ such that (i) each $\pi_i$ is a local assignment for $C_i$, (ii) for all indices $i\in[s]$, $f_i(\pi_i(v_{i_1}),\pi_i(v_{i_2}), \ldots,\pi_i(v_{i_{k_i}}))\neq0$, (iii) for all $i\in[2,s]_{\integer}$, $\pi_i$ agrees with $\pi_{p(i)}$, where $p(i)$ is in $[s]$ and $C_{p(i)}$ is a parent of $C_i$ in $JF(I)$.
\end{claim}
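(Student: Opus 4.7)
The plan is to establish both directions separately, with the backward direction being the substantive one.

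For the forward direction, suppose a global assignment $\sigma:Var\to\{0,1\}$ yields $\prod_{i\in[s]} f_i(\sigma(v_{i_1}),\ldots,\sigma(v_{i_{k_i}}))\neq0$. I would simply define $\pi_i$ to be the restriction $\sigma|_{var(C_i)}$ for each $i\in[s]$. Conditions (i) and (ii) then follow immediately, since every factor in the nonzero product must itself be nonzero. Condition (iii) also holds because $\pi_i$ and $\pi_{p(i)}$ are restrictions of the same $\sigma$, so they necessarily coincide on $var(C_i)\cap var(C_{p(i)})$.

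For the backward direction, suppose $\pi_1,\ldots,\pi_s$ are given satisfying (i)--(iii). The key step is to promote the agreement that is guaranteed only along \emph{parent-child} edges of $JF(I)$ to agreement between \emph{any} two constraints that share a variable. For this I would invoke the defining property of a join tree: if a variable $v$ occurs in both $C_i$ and $C_j$, then $v$ must occur in every constraint lying on the unique path between $C_i$ and $C_j$ in $JF(I)$. Walking along this path and applying condition (iii) at each consecutive pair, one concludes $\pi_i(v)=\pi_j(v)$. Therefore, the definition
\[
\sigma(v) \;=\; \pi_i(v) \quad\text{for any } i \text{ with } v\in var(C_i)
\]
is unambiguous on $\bigcup_{i\in[s]} var(C_i)$, and on any remaining variable of $Var$ we may set $\sigma(v)$ arbitrarily (say to $0$). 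Then for each $i$ we have $\sigma|_{var(C_i)}=\pi_i$, and by condition (ii) every factor $f_i(\sigma(v_{i_1}),\ldots,\sigma(v_{i_{k_i}}))$ is nonzero, so the product is nonzero as well.

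The main obstacle, and the only non-routine point, is the promotion step in the backward direction: verifying that local agreement along parent-child edges of $JF(I)$ extends to global consistency on shared variables. This is precisely where the join-tree/running-intersection property, together with Fact~\ref{acyclic-logspace} applied to the acyclic hypergraph $G_I$, does the work. Once this is in place, the well-definedness of $\sigma$ and hence the equivalence follows directly. This matches the argument pattern of \cite[Claim A]{GLS01}, which is why the details may be safely suppressed.
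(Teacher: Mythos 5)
Your proof is correct and follows exactly the argument the paper intends: the paper omits the proof, deferring to Claim A of Gottlob, Leone, and Scarcello, and your gluing of local assignments via the running-intersection property of the join tree is precisely that argument. The only cosmetic remark is that the well-definedness of $\sigma$ rests on the defining property of a join forest (every node on the path between $C_i$ and $C_j$ contains the shared variable $v$) rather than on Fact~\ref{acyclic-logspace}, which concerns only the logspace testability of acyclicity.
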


In visiting a new node $C$ from the last node $D$, we nondeterministically choose $\pi_C$ and check if $\pi_C$ agrees with $\pi_D$. If not, we reject immediately. Otherwise, we continue the traverse. If we reach the rightmost leaf, then we terminate the traverse and output the product of the values of constraint functions. This procedure is implemented on an appropriate log-space aux-2npda running in polynomial time. By the above claim, the nondeterministic choices of local assignments cannot contradict an associated global assignment. Therefore, the sum of all output values of our simulation matches $count(I)$.
\end{proofof}

%%%%
\subsection{Acyclic-T-Constructibility (or AT-Constructibility)}\label{sec:acyclic-const}

A crucial technical tool used in \cite{Yam12a,Yam12b,Yam12c,Yam14} is various versions of so-called \emph{$T$-constructibility}, whose formulations were motivated by the earlier notions of ``implementation''  \cite{DGJ09} and ``gadget construction''  \cite{CLX14} used for the complexity classifications of \#CSPs. Since we deal with an acyclic restriction of $\sharpcsp$s, we need to introduce another variant of this technical tool for our later analyses of constraints.

Let $\GG$ denote any set of constraint functions and let $(f,(x_1,x_2,\ldots,x_k))$ be any constraint whose function $f$ is in $\GG$, where $(x_1,x_2,\ldots,x_k)$ is a $k$-tuple of Boolean variables associated with $f$.
Let $G=(V,E)$ be any hypergraph with $V=\{x_1,x_2,\ldots,x_k,y_1,y_2,\ldots,y_m\}$, where only vertices $x_1,x_2,\ldots,x_k$ may be associated with dangling edges.\footnote{A \emph{dangling edge} is treated as a special hyperedge consisting of exactly two elements such that one end of it is not connected to any vertex.}
We say that $G$ \emph{realizes} $(f,(x_1,x_2,\ldots,x_k))$ by $\GG$ if
there exist a constant $\lambda\in\complex-\{0\}$ and a finite set $B\subseteq \GG$ satisfying that
$f(\sigma(x_1),\sigma(x_2),\ldots,\sigma(x_k)) = \lambda \cdot \sum_{\tau} \prod_{g} g(\tilde{\sigma}(z_1),\tilde{\sigma}(z_2), \ldots, \tilde{\sigma}(z_d))$ for any assignment $\sigma:V-\{y_1,y_2,\ldots,y_m\}\to\{0,1\}$, where $g$ ranges over the constraint functions in $B$ satisfying $\{z_1,z_2,\ldots,z_d\}\in E$, $\tau$ ranges over all assignments from $\{y_1,y_2,\ldots,y_m\}$ to $\{0,1\}$, and $\tilde{\sigma}(u)=\tau(u)$ if $u\in \{y_1,y_2,\ldots,y_m\}$ and $\tilde{\sigma}(u)=\sigma(u)$ otherwise.
For ease of later descriptions, we omit $\sigma$ and $\tau$ altogether and simply write $f(x_1,x_2,\ldots,x_k) = \lambda \cdot \sum_{y_1,y_2,\ldots,y_m\in\{0,1\}} \prod_{g} g(z_1,z_2,\ldots,z_d)$, assuming that we identify the symbols $x_1,x_2,\ldots,x_k,z_1,z_2,\ldots,z_d$ with their actual Boolean values.
Such a hypergraph is called a \emph{$\GG$-realizable constraint hypergraph} of $(f,(x_1,x_2,\ldots,x_k))$.
A constraint $(f,(x_1,x_2,\ldots,x_k))$ is said to be \emph{acyclic-$T$-constructible} (or AT-constructible) from $\GG$ (denoted $f\leq^{acyc}_{con}\GG$) if there exists an acyclic hypergraph $G$ that realizes $f$ by $\GG$; in other words, there is an acyclic $\GG$-realizable constraint hypergraph of $(f,(x_1,x_2,\ldots,x_k))$. As customary, when $(x_1,x_2,\ldots,x_k)$ is clear from the context, we omit it and succinctly write $f$ in place of $(f,(x_1,x_2,\ldots,x_k))$.

Let us see some quick examples. For any number $k\geq2$, it follows that $AND_k\leq^{acyc}_{con} \{EQ_k,\Delta_1\}$ because $AND_k(x_1,x_2,\ldots,x_k) = EQ_k(x_1,x_2,\ldots,x_k) \Delta_1(x_1)$. Moreover, $EQ_3$ is $AT$-constructible from $EQ_2$ as $EQ_3(x,y,z) = EQ_2(x,y) EQ_2(y,z)$. Since $Implies(x,y) = \sum_{z\in\{0,1\}} OR_2(z,y) XOR(x,z)$, we obtain $Implies\leq^{acyc}_{con}\{OR_2,XOR\}$. From $EQ_2(x,y)=Implies(x,y)Implies(y,x)$ follows $EQ_2\leq^{acyc}_{con} \{Implies\}$.
When $XOR(x,y)$ is expressed as $\sum_{z\in\{0,1\}} OR_2(x,z) OR_2(y,z) OR_2(x,y) u_1(z)$ with $u_1=[-1,1]$, the corresponding realizable constraint hypergraph is not acyclic. 
By contrast, it also follows that $EQ_2(x,y)= \sum_{w\in\{0,1\}} \sum_{z\in\{0,1\}} OR_2(w,x) OR_2(z,y) XOR(y,w)XOR(x,z)$ and that the associated realizable constraint hypergraph is cyclic.

%%%%%

Let us recall from \cite{Yam12a} five useful operations over constraint functions: normalization, expansion, pinning, projection, and linking.

\vs{-2}
\renewcommand{\labelitemi}{$\circ$}
\begin{enumerate}\vs{-1}
  \setlength{\topsep}{-2mm}%
  \setlength{\itemsep}{1mm}%
  \setlength{\parskip}{0cm}%

\item \emph{Normalization} is an operation of generating from $f$ the constraint $\lambda \cdot f$ for a chosen constant $\lambda\in\complex-\{0\}$ defined as $(\lambda\cdot f)(x_1,x_2,\ldots,x_k) = \lambda \cdot f(x_1,x_2,\ldots,x_k)$.

\item \emph{Expansion} is an operation by which we introduce a new variable, say, $y$, choose an index $i\in[k]$, and generate $f'(x_1,\ldots,x_i,y,x_{i+1},\ldots,x_k) = f(x_1,\ldots,x_i,x_{i+1},\ldots,x_k)$.

\item \emph{Pinning} is an operation by which we fixate $x_i$ to a given bit $c$ and generate a new function $f^{x_i=c}(x_1,\ldots,x_{i-1},x_{i+1},\ldots,x_k) = f(x_1,\ldots,x_{i-1},c,x_{i+1},\ldots,x_k)$ for any $x_1,\ldots,x_{i-1},x_{i+1},\ldots,x_k$. It follows that $f\leq^{acyc}_{con}\{f^{x_i=0},f^{x_i=1}\}$.

\item \emph{Projection} is an operation of generating $f^{x_i=*}(x_1,\ldots,x_{i-1},x_{i+1},\ldots,x_k) = \sum_{a\in\{0,1\}} f(x_1,\ldots,x_{i-1},a,x_{i+1},\ldots,x_k)$.

\item \emph{Linking} is an operation by which we replace $x_i$ with $x_j$ and generate $f^{x_i=x_j}(x_1,\ldots,x_{i-1},x_{i+1},\ldots,x_k) = f(x_1,\ldots,x_{i-1},x_j,x_{i+1},\ldots,x_k)$ for any $x_1,\ldots,x_{i-1},x_{i+1},\ldots,x_k$.
\end{enumerate}

For the constraint functions obtained by the above operations, it follows that $arity(\lambda\cdot f)=arity(f)$, $arity(f')=arity(f)+1$, and $arity(f^{x_i=c}) = arity(f^{x_i=*}) = arity(f^{x_i=x_j})= arity(f)-1$.

The linking operation requires a special attention because it does not in general support AT-constructibility. All the above operations except for (5) support $AT$-constructibility.

The following is easy to prove from the definition.

\begin{lemma}\label{T-con-transitivity}
Let $f$ and $g$ be any two constraints and let $\GG$ and $\HH$ be any two constraint sets. If $f\leq^{acyc}_{con} \GG\cup\{g\}$ and $g\leq^{acyc}_{con} \HH$, then $f\leq^{acyc}_{con}\GG\cup \HH$.
\end{lemma}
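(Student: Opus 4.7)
The plan is to build an acyclic realization of $f$ over $\GG \cup \HH$ by splicing a fresh copy of the given $\HH$-realization of $g$ into each $g$-labeled hyperedge of the given $(\GG \cup \{g\})$-realization of $f$, and then to verify the two things required by the definition of AT-constructibility: the sum-product identity and the acyclicity of the resulting hypergraph.

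First I would fix witnesses. By hypothesis, there exist an acyclic hypergraph $G_f = (V_f, E_f)$, a constant $\lambda_f \in \complex - \{0\}$, and a finite $B_f \subseteq \GG \cup \{g\}$ realizing $f$, together with an acyclic hypergraph $G_g$, a constant $\lambda_g \in \complex - \{0\}$, and a finite $B_g \subseteq \HH$ realizing $g$ with interface $(z_1, \ldots, z_d)$. If $g \notin B_f$ there is nothing to prove, so assume the $g$-labeled hyperedges of $G_f$ are $E^g_1, \ldots, E^g_r$, each carrying a tuple $(u_{i,1}, \ldots, u_{i,d})$ of vertices of $G_f$. For each $i \in [r]$, take a fresh renamed copy $G_g^{(i)}$ of $G_g$ in which $z_j$ is identified with $u_{i,j}$ and every internal auxiliary variable is replaced by a new symbol disjoint from $V_f$ and from every other $G_g^{(i')}$. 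Construct $G^{\star}$ by removing each $E^g_i$ from $G_f$ and adjoining the hyperedges of all the $G_g^{(i)}$'s, inheriting labels from $(B_f \setminus \{g\}) \cup B_g \subseteq \GG \cup \HH$.

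Next I would verify the realization. Substituting the identity $g(u_{i,1}, \ldots, u_{i,d}) = \lambda_g \sum \prod h(\cdots)$ into the identity $f(x_1, \ldots, x_k) = \lambda_f \sum \prod \cdots$ witnessed by $G_f$, and using the disjointness of the fresh variables across the different $G_g^{(i)}$'s to commute sums past products, one obtains a single identity $f(x_1, \ldots, x_k) = \lambda_f \lambda_g^r \sum \prod h(\cdots)$ that ranges over all introduced fresh variables and all hyperedges of $G^{\star}$. The leading constant $\lambda_f \lambda_g^r$ is nonzero, as required.

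The main obstacle is checking that $G^{\star}$ is acyclic. I would argue this via a GYO reduction. Because of the freshness of the renamings, every internal auxiliary variable of $G_g^{(i)}$ occurs only in hyperedges of $G_g^{(i)}$, and no hyperedge of $G_g^{(i)}$ shares vertices with any hyperedge outside $G_g^{(i)}$ except through the interface variables $u_{i,1}, \ldots, u_{i,d}$. Hence any GYO move that is legitimate inside a stand-alone $G_g$ is also legitimate inside $G^{\star}$ at the corresponding step. Carrying out, within each block $G_g^{(i)}$, the internal portion of the reduction sequence witnessing acyclicity of $G_g$ peels off all internal vertices and all internal edges consumed by containment, leaving only residual edges whose vertex set is contained in $E^g_i = \{u_{i,1}, \ldots, u_{i,d}\}$; these are then GYO-dominated by the surrounding structure that had hosted $E^g_i$ in $G_f$, so the combined reduction proceeds to the empty hypergraph by the acyclicity of $G_f$. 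The delicate point, where I expect the technical effort of the proof to concentrate, is the verification that the internal reduction inside $G_g^{(i)}$ can indeed be arranged to terminate at a state whose surviving hyperedges lie entirely inside $E^g_i$, so that absorption by the residual $G_f$-structure goes through cleanly.
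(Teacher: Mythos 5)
Your overall strategy is the same as the paper's: substitute the $\HH$-realization of $g$ into each $g$-labeled hyperedge of the $(\GG\cup\{g\})$-realization of $f$ and push the sums outward. The algebraic half of your argument (fresh copies per occurrence of $g$, commuting sums past products via disjointness of the auxiliary variables, the new nonzero constant $\lambda_f\lambda_g^{r}$) is fine and matches what the paper does. Where you go beyond the paper is in trying to verify that the spliced hypergraph $G^{\star}$ is acyclic; the paper's proof performs only the algebraic substitution and never addresses acyclicity, so you have correctly located where the real content of an ``acyclic'' transitivity lemma must lie.

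Unfortunately, that part of your argument does not go through: the property you yourself flag as the delicate point is false in general. The reduction sequence witnessing acyclicity of the stand-alone $G_g$ may need to delete the \emph{interface} vertices first, and those deletions are unavailable inside $G^{\star}$, where the interface vertices are shared with the rest of $G_f$; freezing them can leave residual edges that are not contained in $\{u_{i,1},\dots,u_{i,d}\}$. Concretely, let $g$ have arity $2$ and be realized by $g(z_1,z_2)=\sum_{w}h_1(z_1,w)\,h_2(z_2,w)$ --- exactly the shape of the paper's own realizations $OR_2(x,y)=\sum_{z}NAND_2(x,z)NAND_2(y,z)u_0(z)$ and $NAND_2(x,y)=\sum_{z}XOR(x,z)\,Implies(y,z)$. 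Its hypergraph $\{z_1,w\},\{z_2,w\}$ is acyclic only because $z_1,z_2$ can be removed first, after which $w$ becomes removable. Now let $G_f$ consist of the $g$-edge $\{u_1,u_2\}$ together with one $\GG$-edge $\{u_1,u_2,u_3\}$ (realizing, say, $f(u_1,u_2,u_3)=g(u_1,u_2)\ell(u_1,u_2,u_3)$); this $G_f$ is acyclic. After splicing, $G^{\star}$ has hyperedges $\{u_1,w\},\{u_2,w\},\{u_1,u_2,u_3\}$; deleting $u_3$ yields the triangle $\{u_1,w\},\{u_2,w\},\{u_1,u_2\}$, in which every vertex lies in two hyperedges and no hyperedge contains another, so the GYO process halts on a nonempty hypergraph. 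Since the order of GYO moves does not affect the final hypergraph, $G^{\star}$ is genuinely not acyclic, and no rescheduling of the block-internal reduction can repair this. So edge-for-edge substitution does not by itself prove the lemma; one must either restructure the realization (not merely splice) or impose a condition guaranteeing that each block collapses into its interface set. Be aware that the paper's own proof has the same gap, only silently: it never mentions the hypergraph of the composed realization at all.
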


\begin{proof}
Assume that $f\leq^{acyc}_{con} \GG\cup\{g\}$ and $g\leq^{acyc}_{con}\HH$. There exist an acyclic $\GG\cup\{g\}$-realizable constraint hypergraph of $f$ and an acyclic $\HH$-realizable constraint hypergraph of $g$. Let $f(x_1,\ldots,x_d) = \lambda\sum_{y_1,\ldots,y_m} \prod_{\ell} g(z_1,\ldots,z_d) \ell(w_1,\ldots,w_{d'})$, where $\{z_1,\ldots,z_k,w_1,\ldots,w_{k'}\}\subseteq \{y_1,\ldots,y_m\}$ for $k\leq d$ and $k'\leq d'$, and let $g(z_1,\ldots,z_d) = \xi \sum_{y'_1,\ldots,y'_{m'}} \prod_{h} h(w'_1,\ldots,w'_e)$, where $\{w'_1,\ldots,w'_b\}\subseteq \{y'_1,\ldots,y'_{m'}\}$ for $b\leq e$. From these equalities, it follows that $f(x_1,\ldots,x_d) = \lambda \sum (\xi\sum_{y'_1,\ldots,y'_{m'}} \prod_{h} h(w'_1,\ldots,w'_e))\prod_{\ell} \ell(w_1,\ldots,w_{d'})$. The last term is rewritten in the form of $\lambda\xi \sum \sum \prod h(w'_1,\ldots,w'_{e})\ell(w_1,\ldots,w_{d'})$. Hence, we obtain $f\leq^{acyc}_{con} \HH$.
\end{proof}

We simplify the notation $\sharpacsp(\GG\cup\HH)$ as $\sharpacsp(\GG,\HH)$ unless there is any confusion.

\begin{lemma}\label{acyc-sharpacsp}
If $f\leq^{acyc}_{con}\GG$, then $\sharpacsp(f,\UU)\leq^{\dl}  \sharpacsp(\GG,\UU)$.
\end{lemma}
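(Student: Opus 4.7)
The plan is to provide a log-space reduction that rewrites each instance of $\sharpacsp(f,\UU)$ into an equivalent instance of $\sharpacsp(\GG,\UU)$ by locally replacing every occurrence of $f$ with its acyclic $\GG$-realization (on fresh auxiliary variables) and compensating the accumulated scalar factor through a single unary normalization constraint.

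Fix, from the hypothesis $f\leq^{acyc}_{con}\GG$, a nonzero constant $\lambda\in\complex$, a finite $B\subseteq\GG$, and an acyclic realization hypergraph $G_f$ on $\{x_1,\ldots,x_k,y_1,\ldots,y_m\}$ witnessing $f(x_1,\ldots,x_k) = \lambda\sum_{y_1,\ldots,y_m}\prod_g g(z_1,\ldots,z_d)$. Given an instance $I=(Var,C)$, enumerate the $f$-constraints as $(f,(v_{i_1},\ldots,v_{i_k}))$ for $i\in[s]$, and produce $I'=(Var',C')$ by (i) keeping $Var$ and all unary constraints from $C$; (ii) introducing fresh auxiliary variables $y_j^{(i)}$ for each $i\in[s]$ and $j\in[m]$; (iii) replacing the $i$-th $f$-constraint by the realization constraints under the substitutions $x_\ell\mapsto v_{i_\ell}$ and $y_j\mapsto y_j^{(i)}$; and (iv) on one further fresh variable $w$, adding the unary constraint $u\in\UU$ defined by $u(0)=\lambda^s$ and $u(1)=0$, whose sum over $w\in\{0,1\}$ contributes exactly the factor $\lambda^s$. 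This rewriting is clearly log-space: an $O(\log n)$ counter enumerates the $f$-constraints and indexes the fresh variables, and $\lambda^s$ is computed by $O(\log s)$ symbolic complex multiplications in the BSS model of Section 1.3.

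The identity $count(I')=count(I)$ then follows directly from the defining equation of realization: for each assignment $\sigma$ of $Var$, summing the $i$-th replacement over $y_1^{(i)},\ldots,y_m^{(i)}$ yields $\lambda^{-1} f(\sigma(v_{i_1}),\ldots,\sigma(v_{i_k}))$, so the $s$ independent replacements jointly contribute $\lambda^{-s}\prod_i f(\cdots)$; combined with the factor $\lambda^s$ from summing out $w$ and the unchanged unary constraints, summing over $Var$ recovers $count(I)$ exactly.

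The main obstacle is ensuring that $G_{I'}$ is acyclic, as this is required for $I'$ to be a valid instance of $\sharpacsp(\GG,\UU)$. I would argue via the join-tree characterization of acyclic hypergraphs noted in Section 3.2: an acyclic hypergraph admits a join tree whose nodes are its hyperedges, and conversely. Since $G_I$ is acyclic, fix a join forest $T_I$; since $G_f$ is acyclic, fix a join tree $T_{G_f}$. To build a join forest for $G_{I'}$, I would expand each $f$-hyperedge node of $T_I$ into a fresh copy $T_{G_f^{(i)}}$ (with the variable substitutions above) and reattach any former neighbor $C'$ of that $f$-node, which must have shared at least one $v_{i_\ell}$ with it, by an edge to some hyperedge of $T_{G_f^{(i)}}$ containing $v_{i_\ell}$ (without loss of generality, using the expansion operation of Section 3.3, each $x_\ell$ is covered by at least one hyperedge of $G_f$). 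The connected-subtree property for each $v_{i_\ell}$ inside $T_{G_f^{(i)}}$ already holds by the join-tree property, and since the reattachment point is chosen within that subtree, connectedness is preserved globally. The auxiliary variables $y_j^{(i)}$ appear only inside their own replacement copy, trivially satisfying the property, and the isolated singleton $\{w\}$ forms a separate component of the forest. Hence $G_{I'}$ is acyclic, completing the reduction.
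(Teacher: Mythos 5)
Your overall strategy---replace each occurrence of $f$ by a fresh copy of its acyclic $\GG$-realization and absorb the accumulated constant into a free unary constraint---is exactly the route the paper takes, and your treatment of the normalization factor $\lambda^{s}$ via the unary constraint $[\lambda^{s},0]$ on an isolated fresh variable is actually more careful than the paper's own proof, which silently drops $\lambda$ and the sum over auxiliary variables. The genuine gap is in the step you yourself flag as the main obstacle: the acyclicity of $G_{I'}$. Your join-forest splicing breaks down whenever some other constraint of $I$ shares \emph{two or more} variables with a replaced $f$-constraint and those variables do not co-occur in a single hyperedge of the realization. Concretely, let $I$ consist of the two constraints $(f,(a,b))$ and $(f,(a,b))$; its constraint hypergraph is the single hyperedge $\{a,b\}$, hence acyclic, so this is a legal instance of $\sharpacsp(f,\UU)$. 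Take a realization of the shape $f(x,y)=\lambda\sum_{z}g_1(x,z)\,g_2(y,z)\,g_3(z)$ (e.g., the realization of $OR_2$ from $\{NAND_2,u_0\}$ in Lemma \ref{OR-and-NAND}). After substitution, $G_{I'}$ contains the hyperedges $\{a,z^{(1)}\},\{b,z^{(1)}\},\{a,z^{(2)}\},\{b,z^{(2)}\}$, which form the four-cycle $a\mbox{--}z^{(1)}\mbox{--}b\mbox{--}z^{(2)}\mbox{--}a$; the reduction by actions (i)--(ii) gets stuck, so $I'$ is \emph{not} acyclic and is not a legal instance of $\sharpacsp(\GG,\UU)$. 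In your splicing this surfaces as follows: the copy $T_{G_f}^{(2)}$ must be attached to a single node of $T_{G_f}^{(1)}$, but the two copies share both $a$ and $b$ and no single node of $T_{G_f}^{(1)}$ contains both, so either attachment violates the connected-subtree condition for the other variable. This is not a presentational defect of your join-tree argument: no join forest exists because the hypergraph genuinely fails to be acyclic, so the construction itself---not merely its justification---is broken in this case.

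For fairness, note that the paper's own proof of this lemma performs the same local replacement and does not address acyclicity (or $\lambda$) at all, so it is open to the same objection. Closing the gap would require either strengthening the hypothesis on realizations (for instance, requiring that the external variables $x_1,\dots,x_k$ all lie in one hyperedge of the realizing hypergraph, or restricting attention to instances in which distinct constraints overlap in at most one variable, as is the case in the lemma's actual applications in Section \ref{sec:complexity-ACSP}) or designing a reduction that is not a constraint-by-constraint substitution. As written, your claim that $G_{I'}$ is acyclic does not hold.
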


\begin{proof}
Let $F= \sharpacsp(\{f\}\cup\UU)$ and $F'=\sharpacsp(\GG\cup\UU)$.
Take any instance $I=(Var,C)$ of $\sharpacsp(\{f\}\cup\UU)$ and consider all constraints $C_1,C_2,\ldots,C_t$ of $I$ whose constraint functions are exactly $f$.
Since $f\leq^{acyc}_{con}\GG$, there exist a constant $\lambda\in\complex-\{0\}$ and a finite set $B\subseteq\GG$ for which  $f(x_1,x_2,\ldots,x_k) = \lambda\cdot \prod_{g} g(z_1,z_2,\ldots,z_d)$. For each of such $g(z_1,z_2,\ldots,z_d)$'s, we set $D_{d,z}=(g,z)$ with $z=(z_1,z_2,\ldots,z_d)$. For each $C_i=(f,(v_{i_1},v_{i_2},\ldots,v_{i_{k_i}}))$, we replace the occurrence of $C_i$ in $I$ by $D_{g,z'}$, where $z'$ is obtained from $z$ by replacing $x_1,x_2,\ldots,x_k$ with $v_{i_1},v_{i_2},\ldots,v_{i_{k_i}}$, respectively. We write $I'$ to denote the instance obtained from $I$ by the above replacement. It then follows that $F'(I') = F(I)$.
\end{proof}

%%%%%

We can prove the following relationships between $OR_2$ and $NAND_2$.

\begin{lemma}\label{OR-and-NAND}
Let $u_0=[1,-1]$. The following statements hold.

(1) $OR_2\leq^{acyc}_{con}\{NAND_2,u_0\}$. 

(2) $NAND_2\leq^{acyc}_{con} \{OR_2,u_0\}$.
\end{lemma}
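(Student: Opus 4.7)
My plan is to construct each function from the other by introducing a single auxiliary variable $z$, two copies of the given binary function hinged at $z$, and one copy of $u_0=[1,-1]$ on $z$. The resulting constraint hypergraph in both cases has vertex set $\{x,y,z\}$ and hyperedges $\{x,z\},\{y,z\},\{z\}$, which is a star centered at $z$; after the pendant variables $x$ and $y$ (each in a single hyperedge) are removed, the two copies of $\{z\}$ collapse to one, and then $z$ itself is a pendant vertex, so the hypergraph reduces to the empty one and is therefore $\alpha$-acyclic.

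For part (1), I will compute
\[
\sum_{z\in\{0,1\}} NAND_2(x,z)\,NAND_2(y,z)\,u_0(z).
\]
The $z=0$ term contributes $NAND_2(x,0)NAND_2(y,0)\cdot 1=1$, because $NAND_2(t,0)=1$ for every $t$. The $z=1$ term contributes $NAND_2(x,1)NAND_2(y,1)\cdot(-1)=-(1-x)(1-y)$, since $NAND_2(t,1)=1-t$. The sum equals $1-(1-x)(1-y)$, and a case check over $(x,y)\in\{0,1\}^2$ shows this is exactly $OR_2(x,y)$. Hence $OR_2\leq^{acyc}_{con}\{NAND_2,u_0\}$ with normalization constant $\lambda=1$.

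For part (2), I will use the dual computation
\[
\sum_{z\in\{0,1\}} OR_2(x,z)\,OR_2(y,z)\,u_0(z),
\]
noting that $OR_2(t,0)=t$ and $OR_2(t,1)=1$. The $z=0$ term is $xy$ and the $z=1$ term is $-1$, giving $xy-1=-NAND_2(x,y)$. The definition of AT-constructibility permits any nonzero multiplicative constant $\lambda\in\complex\setminus\{0\}$, so taking $\lambda=-1$ yields $NAND_2\leq^{acyc}_{con}\{OR_2,u_0\}$.

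There is essentially no obstacle: both constructions are direct two-line calculations, and the acyclicity verification is identical and immediate in both cases. The only point that deserves care is noticing in (2) that the raw sum produces $-NAND_2$ rather than $NAND_2$, so the freedom to choose a nonzero $\lambda$ in the definition of AT-constructibility must be used; this is exactly why the definition includes such a constant.
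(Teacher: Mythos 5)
Your construction is exactly the one the paper uses: the same single-auxiliary-variable gadget $\sum_{z}f(x,z)f(y,z)u_0(z)$ in both directions, with the same sign correction $\lambda=-1$ in part (2). The calculations and the acyclicity check are correct, so the proof is fine and essentially identical to the paper's.
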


\begin{proof}%{Lemma \ref{OR-and-NAND}}
Let $u_0=[1,-1]$.
(1) Since $OR_2(x,y) = \sum_{z\in\{0,1\}} NAND_2(x,z) NAND_2(y,z) u_0(z)$, we conclude that
$OR_2\leq^{acyc}_{con}\{NAND_2,u_0\}$.

(2) Similarly, since $NAND_2(x,y) = - \sum_{z\in\{0,1\}} OR_2(x,z) OR_2(y,z) u_0(z)$, we obtain
$NAND_2\leq^{acyc}_{con} \{OR_2,u_0\}$.
\end{proof}

We next present close relationships among four binary constraints: $NAND_k$, $OR_k$, $Implies$, and $RImplies$. 
By the equality $NAND_2(x,y) = \sum_{z\in\{0,1\}} XOR(x,z) Implies(y,z)$, it follows that $NAND_2\leq^{acyc}_{con} \{Implies,XOR\}$ but we can do much better.

\begin{lemma}\label{IMPLIES-case}
Let $k\geq2$. The following statements hold with $u_0=[1,-1]$.

(1) $OR_k\leq^{acyc}_{con} \{Implies,u_0\}$.

(2) $NAND_k\leq^{acyc}_{con}\{Implies,u_0\}$.
\end{lemma}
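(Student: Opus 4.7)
The plan is to realize both $OR_k$ and $NAND_k$ by small ``star-shaped'' gadgets with a single auxiliary variable $z$, exploiting the cancellation pattern of $u_0 = [1,-1]$: summing $u_0(z)$ over $z \in \{0,1\}$ vanishes, whereas restricting the sum to a single value of $z$ returns $\pm 1$. Since $Implies(a,b) = 0$ precisely when $a=1$ and $b=0$, a single ``bad'' input can force one of the two $z$-terms to drop out, breaking the cancellation at exactly the assignments where we want a nonzero value.

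For part (1), I would propose the gadget
\[ OR_k(x_1,\ldots,x_k) \;=\; -\sum_{z\in\{0,1\}} u_0(z)\prod_{i=1}^k Implies(x_i,z). \]
Verification is a two-case check: if every $x_i = 0$, the inner products both equal $1$ and the two $z$-terms cancel; if some $x_j = 1$, then $Implies(x_j,0) = 0$ kills the $z=0$ term while the $z=1$ term contributes $u_0(1)\cdot 1 = -1$, and the outer normalization $\lambda = -1$ yields the required value $1$. The realizing hypergraph has vertex set $\{x_1,\ldots,x_k,z\}$ and hyperedges $\{x_i,z\}$ for each $i$ together with the singleton $\{z\}$ (from the factor $u_0(z)$). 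Each external $x_i$ appears in exactly one hyperedge, so the vertex-removal rule of Section~\ref{sec:acyclic-const} erases all of them first; the remaining hyperedges then collapse to copies of $\{z\}$ that are swept away by the containment rule, leaving the empty hypergraph. Hence the hypergraph is acyclic.

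For part (2), the mirrored gadget
\[ NAND_k(x_1,\ldots,x_k) \;=\; \sum_{z\in\{0,1\}} u_0(z)\prod_{i=1}^k Implies(z,x_i) \]
is verified by the same case analysis with the roles of $0$ and $1$ swapped: when all $x_i = 1$, each factor $Implies(z,1) = 1$ and the two $z$-terms cancel; when some $x_j = 0$, the factor $Implies(1,0) = 0$ kills the $z=1$ term while the $z=0$ term contributes $u_0(0) = 1$. The realizing hypergraph has exactly the same ``star plus singleton'' shape as in part (1), so acyclicity follows by the identical removal argument.

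The main conceptual hurdle is just spotting these two gadgets; once they are written down, verification is a short case check and acyclicity is immediate from the star shape. Under the definition of AT-constructibility, the two identities certify $OR_k \leq^{acyc}_{con} \{Implies,u_0\}$ (with $\lambda = -1$) and $NAND_k \leq^{acyc}_{con} \{Implies,u_0\}$ (with $\lambda = 1$), completing both parts of the lemma.
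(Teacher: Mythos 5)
Your proposal is correct and uses exactly the gadgets in the paper's proof: $OR_k(x_1,\ldots,x_k) = -\sum_{w} u_0(w)\prod_i Implies(x_i,w)$ and the mirrored sum $\sum_z u_0(z)\prod_i Implies(z,x_i)$ for $NAND_k$, with the same normalization constants and the same star-shaped acyclic hypergraph. The only cosmetic difference is that the paper first verifies the $k=2$ case and then states the general formula, whereas you go directly to general $k$.
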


\begin{proof}
Let $u_0=[1,-1]$ and let $k\geq2$. (1) Consider the case of $k=2$. We define $h(x,y) = - \sum_{z\in\{0,1\}} Implies(x,z) Implies(y,z) u_0(z)$. A simple calculation shows that $h=OR_2$. Hence, we obtain $OR_2\leq^{acyc}_{con} \{Implies,u_0\}$.
More generally, for $k\geq3$, it follows that $OR_k(x_1,x_2,\ldots,x_k) = - \sum_{w\in\{0,1\}} u_0(w) ( \prod_{i=1}^{k} Implies(x_i,w))$.
Therefore, we obtain $OR_k\leq^{acyc}_{con} \{Implies,u_0\}$.
We remark that this proof is much simpler than Lemma 6.6(2) of \cite{Yam12a}.

(2) Note that $NAND_2(x,y) = \sum_{z\in\{0,1\}} Implies(z,x) Implies(z,y) u_0(z)$. This concludes that $NAND_2\leq^{acyc}_{con} \{Implies, u_0\}$. This argument can be expanded to a more general case of $k\geq3$ in a way similar to (1).
\end{proof}

Since $Implies\leq^{acyc}_{con} \{OR_2,XOR\}$, Lemma \ref{IMPLIES-case} implies the following.

\begin{corollary}\label{ORk-to-OR3}
Let $u_0=[1,-1]$. The following statements hold.

(1) For any $k\geq4$, $OR_k\leq^{acyc}_{con}\{OR_2,XOR,u_0\}$.

(2) For any $k\geq2$, $NAND_k\leq^{acyc}_{con}\{OR_2,XOR,u_0\}$.
\end{corollary}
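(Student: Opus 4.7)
The plan is to prove both parts as immediate consequences of Lemma \ref{IMPLIES-case} chained with the earlier observation that $Implies\leq^{acyc}_{con}\{OR_2,XOR\}$, via the transitivity lemma (Lemma \ref{T-con-transitivity}). No new gadget is needed: all the work has already been done.

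For part (2), I would first invoke Lemma \ref{IMPLIES-case}(2) to obtain $NAND_k\leq^{acyc}_{con}\{Implies,u_0\}$ for every $k\geq 2$. Next, I recall the identity $Implies(x,y)=\sum_{z\in\{0,1\}} OR_2(z,y)\, XOR(x,z)$ already noted in Section \ref{sec:acyclic-const}; the underlying constraint hypergraph has hyperedges $\{z,y\}$ and $\{x,z\}$ with internal vertex $z$, which is manifestly acyclic, so $Implies\leq^{acyc}_{con}\{OR_2,XOR\}$. Applying Lemma \ref{T-con-transitivity} with $\GG=\{u_0\}$, $g=Implies$ and $\HH=\{OR_2,XOR\}$ then yields $NAND_k\leq^{acyc}_{con}\{u_0\}\cup\{OR_2,XOR\}=\{OR_2,XOR,u_0\}$, as required.

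For part (1), I would repeat the same two-step chain using Lemma \ref{IMPLIES-case}(1) in place of \ref{IMPLIES-case}(2): first $OR_k\leq^{acyc}_{con}\{Implies,u_0\}$ for $k\geq 2$, and then transitivity with $Implies\leq^{acyc}_{con}\{OR_2,XOR\}$ to land in $\{OR_2,XOR,u_0\}$. The statement is only claimed for $k\geq 4$, but the argument itself is uniform in $k$; the small cases $k=2,3$ are either trivial ($OR_2$ is already in the target set) or handled elsewhere and simply not needed for the later application.

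The only point that deserves a moment of care is verifying that Lemma \ref{T-con-transitivity} truly preserves acyclicity in this composite construction: when each $Implies$ factor in the realization of $OR_k$ or $NAND_k$ is replaced by its own acyclic realization over $\{OR_2,XOR\}$, the newly introduced auxiliary variables $z_i$ are local to that replacement and appear only in the two hyperedges coming from the $Implies$-gadget. Since the original realization is star-shaped around the single auxiliary variable $w$, gluing these disjoint gadgets at the ``leaves'' $x_i$ produces a tree-like hypergraph, which is acyclic. This is exactly what the proof of Lemma \ref{T-con-transitivity} does abstractly, so no separate verification is necessary — the corollary is essentially a one-line deduction from the two prior results.
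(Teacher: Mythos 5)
Your proposal is correct and follows exactly the paper's own route: the paper derives the corollary by combining Lemma \ref{IMPLIES-case} with the previously established relation $Implies\leq^{acyc}_{con}\{OR_2,XOR\}$ via the transitivity lemma (Lemma \ref{T-con-transitivity}), precisely as you do. Your additional check that the glued gadgets remain acyclic is consistent with (and implicit in) the paper's reliance on Lemma \ref{T-con-transitivity}.
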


%%%%
\subsection{Case of General Binary Constraints}\label{sec:case-binary}

We have already discussed the AT-constructibility of four typical binary constraints, $OR_2$, $NAND_2$, $XOR$, and $Implies$.
We next focus our attention on more general binary constraints.

We begin with a brief remark that the constraints $f_1=(0,0,a,0)$ and $f_2=(0,a,0,0)$ with $a\neq0$ can be factorized into $f_1(x,y)=a \cdot \Delta_1(x)\Delta_0(y)$ and $f_2(x,y)= a\cdot \Delta_0(x)\Delta_1(y)$. Thus, we obtain $f_1,f_2\leq^{acyc}_{con} \{\Delta_0,\Delta_1\}$.
In what follows, we consider the other types of binary constraints.

\begin{lemma}\label{OR-character}
Let $f=(1,a,0,b)$ with $ab\neq0$. It follows that $OR_2\leq^{acyc}_{con} \{f,u,u'\}$, where $u=[b/a,1]$ and $u'=[-a^2,1]$.
\end{lemma}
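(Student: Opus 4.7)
The plan is to exhibit an explicit AT-construction of $OR_2$ using a single auxiliary variable $z$, together with the three unaries $u(x)$, $u(y)$, $u'(z)$. Specifically, I would define
$$h(x,y) \;=\; u(x)\,u(y)\sum_{z\in\{0,1\}} f(x,z)\,f(y,z)\,u'(z)$$
and show that $h = b^{2}\cdot OR_2$, so taking normalizing constant $\lambda = 1/b^{2}$ (well defined since $b\neq 0$) yields $OR_2$.

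The key step is a direct case analysis of the inner sum $g(x,y) = \sum_{z} f(x,z)f(y,z)u'(z)$, exploiting that $f(x,0)=\Delta_0(x)$ and $f(1,0)=0$. The contributions from $z=0$ and $z=1$ are, respectively, $u'(0)f(x,0)f(y,0)$ and $u'(1)f(x,1)f(y,1)$. At $(0,0)$ these contribute $-a^{2}\cdot 1 + 1\cdot a^{2} = 0$; here the choice $u'(0)=-a^{2}$ is designed precisely to kill the diagonal, which is the only hard case. At $(0,1)$ and $(1,0)$ the $z=0$ term vanishes because $f(1,0)=0$, leaving $ab$; at $(1,1)$ only $z=1$ survives, giving $b^{2}$. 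Thus $g = (0,\,ab,\,ab,\,b^{2})$. Multiplying by $u(x)u(y)$ with $u(0)=b/a$, $u(1)=1$ rescales the off-diagonal entries $ab$ to $(b/a)\cdot 1\cdot ab = b^{2}$ while leaving $g(0,0)=0$ and $g(1,1)=b^{2}$ intact, producing $h=(0,b^{2},b^{2},b^{2})=b^{2}\cdot OR_2$.

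It then remains to check that the realizing hypergraph is acyclic. Its vertex set is $\{x,y,z\}$ (with dangling edges at $x,y$), and its hyperedges are $\{x,z\}$, $\{y,z\}$ (each labeled by $f$), together with singleton hyperedges $\{x\}, \{y\}, \{z\}$ for the unaries $u, u, u'$. Applying the GYO procedure of Section~\ref{sec:basic-notion}, the singletons are absorbed into the binary hyperedges, and the star $x - z - y$ is visibly $\alpha$-acyclic. Hence $OR_2\leq^{acyc}_{con}\{f,u,u'\}$.

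The only possible obstacle is a bookkeeping one, namely making sure no denominator is zero: $u=[b/a,1]$ is defined because $a\neq 0$, and the normalization $1/b^{2}$ is defined because $b\neq 0$, both of which are guaranteed by the hypothesis $ab\neq 0$. Once this is noted, the verification is purely mechanical.
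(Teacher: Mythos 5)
Your construction is exactly the paper's: the paper defines $h(x,y)=\sum_{z\in\{0,1\}}f(x,z)f(y,z)u(x)u(y)u'(z)$ and observes $h=(0,b^2,b^2,b^2)=b^2\cdot OR_2$, which is identical to your computation (you merely spell out the case analysis and the acyclicity of the star $x\mbox{--}z\mbox{--}y$, which the paper leaves implicit). The proposal is correct and takes essentially the same approach.
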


\begin{proof}
We define $h(x,y) = \sum_{z\in\{0,1\}} f(x,z)f(y,z) u(x)u(y) u'(z)$. We then obtain $h=(0,b^2,b^2,b^2)$, which equals $b^2\cdot (0,1,1,1) = b^2\cdot OR_2$. From this equality, we conclude that $OR_2\leq^{acyc}_{con}\{f,u,u'\}$, as requested.
\end{proof}

\begin{lemma}\label{OR-NAND}
Let $f=(0,a,b,1)$ with $ab\neq0$, $u=[1,a]$, and $u'=[-1/b^2,1]$. It follows that $NAND_2 \leq^{acyc}_{con} \{f,u,u'\}$ and $OR_2\leq^{acyc}_{con} \{f,u,u',u_0\}$, where $u_0=[1,-1]$.
\end{lemma}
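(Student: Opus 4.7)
The plan is to mimic the gadget construction used in Lemma \ref{OR-character}, but adapted for the ``transposed'' constraint $f=(0,a,b,1)$. Specifically, I would define the auxiliary binary constraint
\[
h(x,y) \;=\; \sum_{z\in\{0,1\}} f(x,z)\,f(y,z)\,u(x)\,u(y)\,u'(z),
\]
so that the associated realizable constraint hypergraph (with $z$ as the single interior vertex connected by two copies of the binary hyperedge of $f$, plus the unary attachments of $u$ at $x,y$ and of $u'$ at $z$) is visibly acyclic.

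Next I would evaluate $h$ on all four Boolean inputs. For $x=y=0$ the internal sum yields $0\cdot u'(0) + a^{2}\cdot u'(1)=a^{2}$, and the unary factors contribute $u(0)^{2}=1$. For $x=0,y=1$ (and symmetrically for $x=1,y=0$) the sum yields $a$, multiplied by $u(0)u(1)=a$, giving $a^{2}$. The critical case is $x=y=1$: the sum becomes $b^{2}u'(0)+1\cdot u'(1) = b^{2}(-1/b^{2})+1=0$, which is the whole point of choosing $u'(0)=-1/b^{2}$; the unary factor $u(1)^{2}=a^{2}$ is then irrelevant. Therefore $h = a^{2}\cdot(1,1,1,0) = a^{2}\cdot NAND_{2}$, and normalization by $1/a^{2}$ (legal since $a\neq 0$) yields $NAND_{2}\leq^{acyc}_{con}\{f,u,u'\}$.

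For the second claim, I would invoke Lemma \ref{OR-and-NAND}(1), which gives $OR_{2}\leq^{acyc}_{con}\{NAND_{2},u_{0}\}$, and combine it with the first claim via the transitivity established in Lemma \ref{T-con-transitivity}, obtaining $OR_{2}\leq^{acyc}_{con}\{f,u,u',u_{0}\}$.

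I do not foresee a substantive obstacle: the construction is mechanical once one notices the structural parallel with Lemma \ref{OR-character} (the roles of the ``zero entry'' and the ``one entry'' of $f$ are swapped, which is why the gadget produces $NAND_{2}$ rather than $OR_{2}$). The only delicate point is ensuring the auxiliary $u'(0)=-1/b^{2}$ is chosen so that the $(1,1)$ entry of $h$ collapses to zero, and verifying that the realizing hypergraph contains no cycle — this is immediate because there is a single summed variable $z$ linked to each of $x,y$ by exactly one hyperedge, which forms a tree.
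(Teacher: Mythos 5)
Your proposal is correct and matches the paper's proof essentially verbatim: the paper defines the same gadget $h(x,y)=\sum_{z\in\{0,1\}} f(x,z)f(y,z)u(x)u(y)u'(z)$, computes $h=(a^2,a^2,a^2,0)=a^2\cdot NAND_2$, and then obtains $OR_2$ via Lemma \ref{OR-and-NAND}(1) and transitivity. Your entry-by-entry verification and the acyclicity remark are both accurate.
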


\begin{proof}
We define $h(x,y) = \sum_{z\in\{0,1\}} f(x,z)f(y,z) u(x)u(y)u'(z)$. It then follows that $h=(a^2,a^2,a^2,0)$, which implies $NAND_2= (1/a^2) \cdot h$. Thus, we conclude that $NAND_2 \leq^{acyc}_{con}\{f,u,u'\}$.
Since $OR_2\leq^{acyc}_{con} \{NAND_2,u_0\}$ with $u_0=[1,-1]$ by Lemma \ref{OR-and-NAND}(1), we obtain $OR_2\leq^{acyc}_{con} \{f,u,u',u_0\}$.
\end{proof}

A $k$-ary constraint (function) $f$ is called \emph{degenerate} if there are $k$ unary constraints $g_1,g_2,\ldots,g_k$ satisfying $f(x_1,x_2,\ldots,x_k) = \prod_{i=1}^{k}g_i(x_i)$ for any series of variables $x_1,x_2,\ldots,x_k$. Let $\mathcal{DG}$ denote the set of all degenerate constraints. Obviously, $\UU\subseteq \DG$ holds.

\begin{lemma}\label{nowhere-zero-constraint}
The following statements hold.

(1) Let $f=(1,a,b,c)$ with $abc\neq0$ and $ab\neq\pm c$. It follows that $NAND_2\leq^{acyc}_{con} \{f,u,u',v\}$, where $u=[1,a/(ab+c)]$, $u'=[-(ab+c/a)^2,1]$, and $v=[a^2,-1]$.

(2) Let $f=(1,a,b,-ab)$ with $ab\neq0$. It follows that $NAND_2\leq^{acyc}_{con} \{f,u,u',v\}$, where $u=[1,b]$, $u'=[-1/a^2,1]$, and $v=[ab,-1]$.

(3) Let $f=(1,a,b,ab)$. It follows that $f\in \DG$.
\end{lemma}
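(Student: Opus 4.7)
I would handle the three parts separately, mirroring the structure of Lemmas~\ref{OR-character} and \ref{OR-NAND}. Part~(3) is essentially immediate: take $g_1=[1,b]$ and $g_2=[1,a]$ and verify that $g_1(x)g_2(y)=f(x,y)$ at each of the four Boolean inputs, so $f\in \DG$ by definition.

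For Part~(1), I would proceed in two stages. First, the choice $v=[a^2,-1]$ is engineered so that $w(x,y)=\sum_{z\in\{0,1\}}f(x,z)f(y,z)v(z)$ has its $(0,0)$ entry cancel: a direct calculation gives $w(0,0)=0$, $w(0,1)=w(1,0)=a(ab-c)$, and $w(1,1)=(ab-c)(ab+c)$. Multiplying by $u(x)u(y)$ with $u=[1,a/(ab+c)]$ equalizes the remaining two nonzero entries, producing a scaled copy of $OR_2$; the hypotheses $abc\neq 0$ and $ab\neq\pm c$ ensure both that $ab+c\neq 0$ in the denominator and that the overall normalizing constant is nonzero. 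Hence $OR_2\leq^{acyc}_{con}\{f,u,v\}$. Second, Lemma~\ref{OR-and-NAND}(2) gives $NAND_2\leq^{acyc}_{con}\{OR_2,u_0\}$ with $u_0=[1,-1]$; composing the two realizations via Lemma~\ref{T-con-transitivity}, the effective unary on the central auxiliary variable becomes $u(z)^2u_0(z)=[1,-a^2/(ab+c)^2]$, which is a nonzero scalar multiple of $u'=[-(ab+c)^2/a^2,1]$. Replacing $u^2u_0$ by $u'$ and absorbing the scalar into the overall prefactor gives $NAND_2\leq^{acyc}_{con}\{f,u,u',v\}$. Acyclicity of the resulting hypergraph on the five vertices $\{x,y,z,z_1,z_2\}$, whose binary hyperedges $\{x,z_1\},\{z,z_1\},\{y,z_2\},\{z,z_2\}$ form an ``X''-shape centered at $z$, follows from the standard vertex-and-edge removal procedure of Section~\ref{sec:basic-notion}.

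For Part~(2), the boundary condition $c=-ab$ makes $ab+c=0$, so the scalar $a/(ab+c)$ used in Part~(1) is undefined and a tailored gadget is required. My plan is to exhibit an explicit small acyclic hypergraph built from one or two copies of $f$ together with the given unaries $u=[1,b]$, $u'=[-1/a^2,1]$, and $v=[ab,-1]$, and then verify by direct calculation that it equals $\lambda\cdot NAND_2$ for some nonzero $\lambda$ depending on $a,b$. I expect the correct gadget to employ two auxiliary variables arranged along an acyclic path or star, analogous to the composite construction used in Part~(1).

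The hardest step is Part~(2): the naive single-auxiliary shape $\sum_z f(x,z)f(y,z)u''(z)$, for any unary $u''$, is structurally incapable of yielding $NAND_2$ when $f=(1,a,b,-ab)$, because a short calculation forces $w(1,1)=b^2 w(0,0)$ for every choice of $u''$, so the $(1,1)$ entry cannot vanish while keeping the $(0,0)$ entry nonzero. One must therefore pass through a multi-auxiliary gadget, and verifying both the $NAND_2$ pattern and the acyclicity of the chosen hypergraph simultaneously is where the specific unaries $u,u',v$ must line up exactly.
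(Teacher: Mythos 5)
Your parts (1) and (3) are correct and are essentially the paper's own argument. For (1) the paper likewise first computes $h(x,y)=\sum_{z}f(x,z)f(y,z)v(z)=(ab-c)\cdot(0,a,a,ab+c)$ and then feeds the normalized constraint $(0,x,x,1)$ with $x=a/(ab+c)$ into Lemma~\ref{OR-NAND}; your detour through an exact $OR_2$ and Lemma~\ref{OR-and-NAND}(2) unwinds to literally the same path-shaped gadget with the same effective unary $u(z)^2u_0(z)\propto u'(z)$ on the middle variable. (The expression $[-(ab+c/a)^2,1]$ in the statement is a typo for $[-((ab+c)/a)^2,1]$, which is what both you and the paper's proof actually use.)

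Part (2) is where your proposal has a genuine gap: you announce that a multi-auxiliary gadget must exist, but you never exhibit one, so nothing is proved. You should know, however, that your obstruction is correct and that it lands squarely on the paper's own proof of (2). The paper asserts that $h(x,y)=\sum_{z}f(x,z)f(y,z)v(z)$ with $v=[ab,-1]$ equals $2a^2b^2\cdot(0,b,a,1)$; a direct computation instead gives $h=(ab-a^2,\;ab^2+a^2b,\;ab^2+a^2b,\;ab^3-a^2b^2)$, whose $(1,1)$ entry is $b^2$ times its $(0,0)$ entry, exactly as your general formula $h(1,1)=b^2h(0,0)$ predicts. So the claimed identity is false, Lemma~\ref{OR-NAND} cannot be invoked, and the paper's proof of (2) is broken at precisely the point where your write-up stops. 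Completing (2) is moreover not a routine calculation: writing $f(x,y)=[1,b](x)\,[1,a](y)\,g(x,y)$ with $g=(1,1,1,-1)$, an acyclic gadget for a binary target reduces to a chain of copies of $g$ with diagonal (unary) insertions, and the shortest chain realizing the $NAND_2$ pattern uses three auxiliary variables with internal unaries whose entry ratios are primitive sixth roots of unity (e.g.\ $[1+p,1-p]$ with $p=e^{i\pi/3}$); such unaries are available in $\UU$ but are not supplied by the three constraints $[1,b]$, $[-1/a^2,1]$, $[ab,-1]$ listed in the statement for generic $a,b$. In short: (1) and (3) are fine; (2) is unproved in your proposal, and the statement as literally given (with that particular unary set and, implicitly, a one-auxiliary gadget) appears to need repair before any proof can succeed --- what is actually used in Theorem~\ref{main-theorem} is the weaker claim $NAND_2\leq^{acyc}_{con}\{f\}\cup\UU$, which does hold via the longer chain just described.
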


\begin{proof}
(1) Let $v=[a^2,-1]$. We define $h(x,y) = \sum_{z\in\{0,1\}} f(x,z)f(y,z) v(z)$. It follows that $h= (ab-c)\cdot (0,a,a,ab+c) = (a^2b^2-c^2)\cdot (0,a/(ab+c),a/(ab+c),1)$ because $ab\neq \pm c$.
By setting $x= a/(ab+c)$, we define $g=(0,x,x,1)$. By Lemma \ref{OR-NAND}, we conclude that $NAND_2\leq^{acyc}_{con} \{g,u,u'\}$, where $u=[1,x]$ and $u'=[-1/x^2,1]$. Since $h=(a^2b^2-c^2)\cdot g$,
Lemma \ref{T-con-transitivity} then implies $NAND_2\leq^{acyc}_{con} \{h,u,u'\}$. Combining this with $h\leq^{acyc}_{con}\{f,v\}$, we obtain $NAND_2\leq^{acyc}_{con}\{f,u,u',v\}$ by Lemma \ref{T-con-transitivity}.

(2) Let $u=[1,b]$, $u'=[-1/a^2,1]$, and $v=[ab,-1]$. We define $h(x,y) = \sum_{z\in\{0,1\}} f(x,z)f(y,z) v(z)$, which implies $h=2a^2b^2\cdot (0,b,a,1)$. Lemma \ref{OR-NAND} implies that $NAND_2\leq^{acyc}_{con} \{h,u,u'\}$.
Since $h\leq^{acyc}_{con}\{f,v\}$, Lemma \ref{T-con-transitivity} leads to the conclusion that $NAND_2\leq^{acyc}_{con} \{f,u,u',v\}$.

(3) If $ab=0$, then $f$ clearly belongs to $\DG$. Assume that $ab\neq0$. In this case, $f\in\DG$ also follows from the fact that $f(x,y) = u(x)v(y)$ with $u=[1,b]$ and $v=[1,a]$.
\end{proof}

%%%%%%%%%%
%%%%%%%%%%
\section{Computational Complexity of \#ACSP($\FF$)}\label{sec:complexity-ACSP}

We have introduced in Section \ref{sec:Sharp-ACSP} the notion of \#ACSPs and defined the counting complexity class $\#\mathrm{ACSP}(\FF)$ for a constraint set $\FF$.
We now wonder what is the computational complexity of $\sharpacsp(\FF)$ for various choices of constraint sets $\FF$.
In what follows, we wish to present a series of results concerning the computational complexity of $\sharpacsp(\FF)$ for a set $\FF$ of constraints.

%%%
\subsection{Supporting Results}

Following \cite{Yam12a}, we introduce a few useful constraint sets.
Recall that ${\cal U}$ denotes the set of all unary constraints (i.e., constraints of arity $1$). A constraint is in ${\cal ED}$ if it is a product of some of the following functions: unary functions, $EQ_2$, and $XOR$.
Let ${\cal NZ}$ denote the set of all nowhere-zero constraints (i.e., their outcomes never become zero).
It was shown in \cite{Yam12a} that, for any $f\in\NZ$, $f\in \DG$ iff $f\in\ED$.

Recall from \cite{CLX14,Yam12a} that, for any $\FF\subseteq \ED$, $\sharpcsp(\FF)$ belongs to $\fp_{\complex}$, where $\fl_{\complex}$ denotes the complex-number version of $\fl$.
A similar situation occurs for $\sharpacsp(\FF)$.

%%%%%%

\begin{proposition}\label{ED-type-case}
For any constraint set $\FF\subseteq \ED$, $\sharpacsp(\FF)$ is in $\fl_{\complex}$.
\end{proposition}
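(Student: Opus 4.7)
The plan is to reduce $\sharpacsp(\FF)$ for $\FF\subseteq\ED$ to a system of XOR-parity equations with unary weights, and then to evaluate it in logspace via Reingold's theorem $\mathrm{SL}=\dl$. First, every $f\in\ED$ admits by definition a finite factorization $f(x_1,\ldots,x_k)=\prod_j g_j$ in which each $g_j$ is either a unary function, $EQ_2$, or $XOR$ on a subtuple of $(x_1,\ldots,x_k)$. Given an instance $I=(Var,C)$ of $\sharpacsp(\FF)$, I would replace each constraint by this factorization to obtain an equivalent instance $I'$ over the same variable set whose constraints are only of these three elementary types; clearly $count(I)=count(I')$, and this rewriting is logspace-computable. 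Note that the acyclicity of $I$ is not actually needed for what follows: the $\ED$ structure alone forces the computation into $\fl_{\complex}$.

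I would then build an undirected labeled graph $H$ on $Var$ whose edges are the binary $EQ_2$ and $XOR$ constraints from $I'$, each carrying a label $b_e\in\{0,1\}$ ($0$ for $EQ_2$, $1$ for $XOR$). Every connected component $K$ of $H$ encodes a system of XOR-equations $\sigma(u)\oplus\sigma(v)=b_e$ on its variables; once $\sigma(r)$ is fixed at a chosen representative $r\in K$, each other $\sigma(v)$ is forced to be $\sigma(r)\oplus p_K(v)$, where $p_K(v)$ is the XOR-sum of edge labels along any $H$-path from $r$ to $v$. Consistency (path-independence of $p_K$) is equivalent, via the standard double-cover reduction---replace each $v$ by two copies $(v,0),(v,1)$ and each edge with label $b$ by two edges $\{(u,c),(v,c\oplus b)\}$ for $c\in\{0,1\}$---to the non-connectivity of $(r,0)$ and $(r,1)$ in the resulting graph, so it is decidable in $\dl$ by Reingold. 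The same theorem yields logspace procedures identifying component representatives and evaluating $p_K(v)$. A $XOR$ self-loop automatically forces inconsistency, as it should.

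If any component is inconsistent, $count(I)=0$. Otherwise, the definition of $count$ factorizes as
\begin{equation*}
count(I) \;=\; \prod_{K} \; \sum_{a\in\{0,1\}} \; \prod_{v\in K} U_v\!\left(a \oplus p_K(v)\right),
\end{equation*}
where $U_v:\{0,1\}\to\complex$ is the product of all unary factors of $I'$ attached to $v$ (with $U_v\equiv 1$ when none are attached, so that an isolated variable with no unary factor simply contributes a factor of $2$). Each $U_v(b)$ is a single complex number, and the displayed expression is computed by a logspace sweep: enumerate components using the component oracle, for each $K$ accumulate two partial products (one per choice of $a$) by scanning $v\in K$ and querying $U_v$ and $p_K(v)$, and finally multiply the per-component sums. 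Under the symbolic treatment of complex numbers (Section \ref{sec:treatment}), each arithmetic step is unit cost, so the whole algorithm fits in $\fl_{\complex}$.

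The main obstacle is establishing that the XOR-parity function $p_K$ is well-defined on each component and extracting it in logspace; the double-cover reduction to undirected connectivity handles consistency checking, representative selection, and the evaluation of $p_K$ in one stroke via Reingold's theorem. Once those logspace gadgets are in hand, the remainder is a linear-length sweep of symbolic arithmetic operations, which $\fl_{\complex}$ supports directly.
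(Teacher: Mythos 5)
Your proof is correct, and it follows the same basic strategy as the paper's: factor each $\ED$ constraint into unary, $EQ_2$, and $XOR$ pieces, observe that within each connected component of the resulting binary-constraint graph every variable's value is forced by the value at a chosen root via an XOR-parity along paths, and evaluate the per-component sums of unary weights in logspace using Reingold's $\mathrm{SL}=\dl$. The one genuine difference is that the paper leans on acyclicity to assert that the binary-constraint graph is a forest, so that the parity $p_K(v)$ is automatically well-defined along the unique root-to-leaf path; you instead drop the acyclicity assumption entirely and handle arbitrary graphs by an explicit consistency check via the double-cover reduction to undirected connectivity. Your version is actually the more robust one: decomposing a single $k$-ary $\ED$ constraint such as $EQ_2(x,y)EQ_2(y,z)EQ_2(x,z)$ into its binary factors can introduce cycles even when the original constraint hypergraph is acyclic, so the forest assumption is not automatic after the rewriting step, and your consistency check (together with the observation that an inconsistent component forces $count(I)=0$) covers exactly this situation. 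The product-of-sums formula you display, including the factor of $2$ for isolated unconstrained variables, is the correct factorization of $count(I)$, and each ingredient (component enumeration, parity extraction, the linear sweep of symbolic arithmetic) is indeed available in $\fl_{\complex}$.
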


\begin{proof}
Let $I=(Var,C)$ be any instance of $\sharpacsp(\FF)$. Since each constraint $f$ in $\FF$ is a product of some of unary functions, $EQ_2$, and $XOR$, we can decompose $f$ into these components. Without loss of generality, we assume that $\FF$ consists only of $EQ_2$, $XOR$, and unary functions. It thus suffices to solve the case of $\sharpacsp(EQ_2,XOR)$. In what follows, we want to show how to compute $count(I)$.

Let us consider the associated acyclic constraint graph $G_I=(V_I,E_I)$ of $I$.
Generally, $G_I$ forms a forest since there is no cycle in $G_I$. It is possible to focus on each tree in $G_I$  separately. We choose one of these trees and fixate any node of this tree as a root.
Note that, for each leaf, there is a unique simple path from the root.
Since all edges on this path are associated with either $EQ_2$ or $XOR$, a  truth assignment (that makes all constraint functions ``true'' (or $1$)) is  uniquely determined from the assigned value of the root's variable on this path.
For this reason, we can calculate the partial weight (provided by each assignment) of this tree.
This can be done using log space because, as Reingold \cite{Rei08} showed, using only $O(\log{n})$ space,
we can check whether any two vertices are connected or not.

Hence, we can calculate the partial weight of the entire forest from the assigned values of all the root's variables.
\end{proof}

%%%%

Gottlob \etalc~\cite{GLS01} demonstrated that $\mathrm{ACSP}$ is $\logcfl$-hard under $\dl$-T-reductions by transforming a $\sac^1$ circuit family to $\mathrm{ACSP}$. Here, we argue the $\sharplogcfl$-hardness of the counting problem $\sharpacsp(OR_2,XOR,u_0)$, where $u_0=[1,-1]$.
Notice that $\{OR_2,XOR\}\nsubseteq \ED$.

\begin{proposition}\label{OR-case}
$\#\mathrm{SAC1P} \leq^{\dl} \sharpacsp(OR_2,XOR,u_0)$, where $u_0=[1,-1]$.  Thus, $\sharpacsp(OR_2,XOR,u_0)$ is $\sharplogcfl$-hard under logspace reductions.
\end{proposition}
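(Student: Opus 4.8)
The plan is to reduce $\#\mathrm{SAC1P}$ to $\sharpacsp(OR_2,XOR,u_0)$, whence $\sharplogcfl$-hardness follows from Proposition \ref{SAC1P-complete} and the transitivity of $\leq^{\dl}$. I would exploit the stated convention that every gate has fan-out $1$, so the input circuit $C$ is literally a tree; I first replace each unbounded $OR$ gate by a balanced binary tree of fan-in-$2$ $OR$ gates, a log-space transformation that preserves both the tree shape and the number of accepting subtrees (the count through an $OR$ gate is the sum of the counts of its children, and a binary tree of sums computes the same total). After this step every gate is fan-in $2$.

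The core is a \emph{membership encoding}: introduce one Boolean variable $y_g$ per gate $g$, meant to indicate $g\in T$ for an accepting subtree $T$. I place $\Delta_1$ on the root; at each $AND$ gate with children $g_1,g_2$ the constraint $EQ_3(y_g,y_{g_1},y_{g_2})$ (both children lie in $T$ exactly when $g$ does); at each $OR$ gate the ternary ``selection'' constraint $Sel(y_g,y_{g_1},y_{g_2})$ with support $\{000,101,110\}$ (the parent is in $T$ iff exactly one child is); and $\Delta_0(y_g)$ at every input gate evaluating to $0$ under $x$. Because $C$ is a tree, the scopes $\{y_g,y_{g_1},y_{g_2}\}$ form an expression-tree hypergraph, which is $\alpha$-acyclic, so the resulting instance $I$ is a legal $\sharpacsp$ instance; and the assignments of nonzero weight are in bijection with the accepting subtrees of $C$ on $x$, whence $count(I)$ equals the output of $\#\mathrm{SAC1P}$.

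It then remains to express $EQ_3$ and $Sel$ over $\{OR_2,XOR,u_0\}$ by AT-constructibility and to invoke (the obvious multi-constraint strengthening of) Lemma \ref{acyc-sharpacsp}, which substitutes each acyclic gadget inside the footprint of its constraint and so keeps the whole instance acyclic. For $EQ_3$ this is routine: $EQ_2\leq^{acyc}_{con}\{Implies\}$ and $Implies\leq^{acyc}_{con}\{OR_2,XOR\}$ (both in the excerpt), with $EQ_3(x,y,z)=EQ_2(x,y)EQ_2(y,z)$. The hard part is $Sel$, which lies outside $\ED$ and cannot be built as a simple ``star'' of the basic binary constraints. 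My plan is to factor $Sel=P\cdot NAND_2(b,c)$, where $P$ is the ternary parity relation with support $\{000,011,101,110\}$ and $NAND_2$ deletes the tuple $011$; here $NAND_2\leq^{acyc}_{con}\{OR_2,XOR,u_0\}$ by Corollary \ref{ORk-to-OR3}. To realize $P$, I would use the identity $P(a,b,c)=\tfrac12\bigl(1+u_0(a)u_0(b)u_0(c)\bigr)$, implemented by the acyclic star $P(a,b,c)=\tfrac12\sum_{w}\psi(a,w)\psi(b,w)\psi(c,w)$ with $\psi=[1,1,-1]$, where $u_0$ supplies the sign that forces the parity cancellation. Finally $\psi=[1,1,-1]$ is itself AT-constructible via $[1,1,-1](x,w)=\sum_z RImplies(x,z)RImplies(w,z)\,v(z)$ with $v=[1,-2]$, and the auxiliary unary $v=[1,-2]=[1,2]\cdot u_0$ is obtained by projecting $OR_2$ (namely $[1,2](z)=\sum_{z'}OR_2(z,z')$) and multiplying by $u_0$. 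Chaining these through Lemma \ref{T-con-transitivity} gives $Sel\leq^{acyc}_{con}\{OR_2,XOR,u_0\}$; note that the only unary weights needed are $u_0$ and the freely available $\Delta_0,\Delta_1$, so we never leave the target constraint set.

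I expect two genuine obstacles. First, controlling the multiplicative normalization: each $Sel$-gadget contributes a factor $\tfrac12$ (from $P$), so $count(I')=2^{-m}count(I)$ where $m$ is the number of $OR$ gates; since $\leq^{\dl}$ forbids post-multiplication, I would compensate by adjoining $m$ dummy variables occurring in no constraint (each doubling the count), restoring $count(I')=\#\mathrm{SAC1P}(x)$ exactly. Second, verifying that after substituting every gadget the global hypergraph remains $\alpha$-acyclic — this is precisely where the restriction to \emph{acyclic} $T$-constructibility, and the exclusion of the linking operation, are essential, and it is handled uniformly by Lemma \ref{acyc-sharpacsp} once each gadget is individually acyclic on its own footprint, which the constructions above guarantee.
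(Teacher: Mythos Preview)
Your high-level strategy --- one Boolean variable $y_g$ per gate encoding membership in an accepting subtree, a local ternary constraint at each gate, then AT-construction of those constraints over $\{OR_2,XOR,u_0\}$ --- is the paper's as well; the paper keeps the unbounded $OR$ gates and handles them with a $k$-ary $F_{ONE}$ built over $\{OR_3,OR_2,XOR\}$ via Lemmas~\ref{calculate-ONE}--\ref{circuit-gate}, eliminating $OR_3$ only at the end, whereas your preliminary binarization and the single ternary $Sel$ are a clean variant of the same idea.

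Two concrete steps fail, however. First, the normalization goes in the wrong direction: from $P=\tfrac12\sum_w\psi\psi\psi$ the raw gadget you actually place in the instance equals $2P$, so replacing the $m$ copies of $Sel=P\cdot NAND_2$ gives $count(I')=2^{m}\,count(I)$, and adjoining $m$ dummy variables yields $2^{2m}\cdot\#\mathrm{SAC1P}(C,x)$ rather than $\#\mathrm{SAC1P}(C,x)$. This overcount cannot be cancelled inside $\sharpacsp(OR_2,XOR,u_0)$: every function in $\{OR_2,XOR,u_0,\Delta_0,\Delta_1\}$ is integer-valued, hence so is every gadget built from them, and no auxiliary component can contribute a global factor $\tfrac12$. (A factor $-1$ \emph{can} be manufactured, e.g.\ via $\sum_y OR_2(z,y)u_0(y)=[-1,0]$ on a fresh $z$, which is why the paper's $OR_3$-elimination survives its own sign; but $\tfrac12$ cannot.) Second, your appeal to Lemma~\ref{acyc-sharpacsp} to preserve global $\alpha$-acyclicity after substitution is not automatic. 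Already at the $\{\psi,NAND_2\}$ level the gadget for a single $Sel(a,b,c)$ has hyperedges $\{a,w\},\{b,w\},\{c,w\},\{b,c\}$; after stripping $a$ one is left with the triangle on $\{b,c,w\}$, which is not $\alpha$-acyclic. So the instance you actually produce is not a legal $\sharpacsp$ instance, and neither Lemma~\ref{T-con-transitivity} nor Lemma~\ref{acyc-sharpacsp} (whose short proofs in the paper verify only the algebraic identity, not acyclicity of the composite hypergraph) repairs this.
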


For the proof of Proposition \ref{OR-case}, we first prove two useful lemmas. Let us introduce another constraint $ONE$, which
satisfies that $ONE(x_1,x_2,\ldots,x_k) =1$ iff exactly one of $x_1,x_2,\ldots,x_k$ is $1$. In particular, when $k=1$, $ONE(x)=x$ follows for any $x\in\{0,1\}$.

\begin{lemma}\label{calculate-ONE}
The following equations hold for each $k\geq1$.

(1) It follows that the value $OR_2(x,EQ(y,x_1,x_2,\ldots,x_k))$ is equal to the product  $\sum_{u,z_1,z_2,\ldots,z_k\in\{0,1\}} OR_3(x,u,x_1) \cdot OR_3(x,y,z_1) XOR(y,u) \cdot \prod_{i=1}^{k-1} ( OR_3(x,z_i,x_{i+1}) OR_3(x,x_i,z_{i+1}) XOR(x_{i+1},z_{i+1}) )$.

(2) It follows that the value $ONE(x_1,x_2,\ldots,x_k)$ is equal to the product  $OR(x_1,\ldots,x_{k}) \cdot \sum_{z_1,z_2,\ldots,z_k\in\{0,1\}} OR_2(z_1,EQ(z_1,x_2,\ldots,x_k)) \cdot \prod_{i=1}^{k-1} OR(x_1,\ldots,x_{i-1},z_i,EQ(z_i,x_{i+1},\ldots,x_k)) XOR(x_i,z_i)$.
\end{lemma}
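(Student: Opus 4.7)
The plan is to establish both identities by direct case analysis on the boolean variables, since each side is a $\{0,1\}$-valued function of $x, y, x_1, \ldots, x_k$.

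For part (1), I would split on the value of $x$. When $x = 1$, the left-hand side $OR_2(x, \cdot)$ equals $1$ automatically, and on the right-hand side every $OR_3(x, \cdot, \cdot)$ factor also collapses to $1$. The surviving XOR factors pin the auxiliary variables $u$ and $z_2, \ldots, z_k$ uniquely from $y$ and $x_2, \ldots, x_k$, so the summation reduces to a single nonzero term of value $1$. When $x = 0$, the left-hand side becomes $EQ(y, x_1, \ldots, x_k)$, which equals $1$ iff $y = x_1 = \cdots = x_k$. On the right-hand side, each $OR_3(0, a, b)$ reduces to $OR_2(a, b)$; the XOR factors still pin $u = \bar{y}$ and $z_j = \bar{x}_j$ for $j \geq 2$; and substituting these values turns the remaining $OR_2$ factors into a chain of $Implies$ relations among $y, x_1, x_2, \ldots, x_k$, which jointly force $y = x_1 = \cdots = x_k$. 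Matching the two cases yields the identity.

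For part (2), I would first rewrite $ONE(x_1, \ldots, x_k)$ as the product of $OR(x_1, \ldots, x_k)$ and the indicator of ``at most one $x_i$ equals $1$''. The leading $OR(x_1, \ldots, x_k)$ on the right-hand side accounts for the first conjunct, so it suffices to show that the remaining sum equals the ``at most one'' indicator. Each $XOR(x_i, z_i)$ factor forces $z_i = \bar{x}_i$, collapsing the sum to a single term. After this substitution, for each $i$ the factor $OR(x_1, \ldots, x_{i-1}, z_i, EQ(z_i, x_{i+1}, \ldots, x_k))$ trivially equals $1$ when $x_i = 0$, and when $x_i = 1$ it equals $1$ iff either some $x_j$ with $j < i$ already equals $1$ or all of $x_{i+1}, \ldots, x_k$ equal $0$. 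The leading $OR_2(z_1, EQ(z_1, x_2, \ldots, x_k))$ factor handles the analogous condition at $i = 1$. Conjoining these per-index conditions gives exactly the ``at most one $x_i = 1$'' predicate.

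The main obstacle I anticipate is the bookkeeping in part (2): verifying that the per-index conditions conjoin correctly to rule out every assignment with two or more $x_i = 1$ while keeping the unique-$1$ configurations. A clean way to organize this is to prove first, as a short intermediate claim, that for any boolean vector $(a_1, \ldots, a_k)$ satisfying $\bigvee_i a_i = 1$, the conjunction $\bigwedge_{i : a_i = 1}\bigl( (\exists j < i)\, a_j = 1 \ \vee\ a_{i+1} = \cdots = a_k = 0 \bigr)$ holds iff exactly one $a_i$ equals $1$. Given this claim, part (1) follows directly from the case analysis on $x$, and part (2) follows by plugging the claim into the substitution argument above.
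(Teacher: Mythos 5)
Your overall strategy is sound and, for part (2), genuinely different from the paper's. The paper dispatches (1) with the single sentence that it follows ``by a direct calculation,'' and proves (2) by first establishing the one-step recursion $ONE(x_1,\ldots,x_k)=\sum_{z_1\in\{0,1\}}OR_2(z_1,EQ(z_1,x_2,\ldots,x_k))\cdot OR_2(x_1,ONE(x_2,\ldots,x_k))\cdot XOR(x_1,z_1)$, unfolding it inductively, invoking (1) to rewrite the $OR_2(z_1,EQ(\cdots))$ factors in terms of $OR_3$ and $XOR$, and finally merging the nested $OR_2$'s and $OR_3$'s into the general operator $OR$. You instead verify (2) globally: pin the $z_i$'s via the $XOR$ factors, read each remaining factor as a per-index condition, and prove the combinatorial claim that conjoining ``$(\exists j<i)\,a_j=1$ or $a_{i+1}=\cdots=a_k=0$'' over all $i$ with $a_i=1$ characterizes ``exactly one $1$.'' That claim is correct (when two or more coordinates equal $1$, the smallest such index violates its condition), and your route avoids both the induction and the paper's rather hand-wavy ``simplify the nested $OR$'' step, so it is arguably the cleaner argument. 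Your treatment of (1) by case analysis on $x$ is just the ``direct calculation'' the paper gestures at, carried out explicitly.

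One concrete point does not survive scrutiny, though it traces back to the displayed formulas rather than to your idea. In (1) the only $XOR$ factors are $XOR(y,u)$ and $XOR(x_{i+1},z_{i+1})$ for $i=1,\ldots,k-1$, so $u,z_2,\ldots,z_k$ are pinned but $z_1$ is not; in (2) the factors $XOR(x_i,z_i)$ run only over $i\leq k-1$, so $z_k$ is a free summation variable occurring in no other factor. Hence your assertions that ``the summation reduces to a single nonzero term'' and that the $XOR$'s ``collapse the sum to a single term'' are literally false for the formulas as written: for instance, in (1) with $k=1$ and $x=1$ every $OR_3$ factor equals $1$, the empty product contributes $1$, and the free sum over $z_1$ makes the right-hand side equal $2$ while the left-hand side is $1$. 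You should either carry the unpinned variable explicitly through your case analysis (and observe that the stated identity then requires a correction, e.g.\ a pinning factor for $z_1$ in (1) and removing $z_k$ from the summation in (2)), or state up front that the indexing of the $XOR$ factors must be adjusted before your collapse argument applies.
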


\begin{proof}
(1) This is proven by a direct calculation.
(2) We first claim that $ONE(x_1,x_2,\ldots,x_k)$ equals $\sum_{z_1\in\{0,1\}} OR_2(z_1,EQ(z_1,x_2,\ldots,x_k)) \cdot OR_2(x_1,ONE(x_2,\ldots,x_k)) \cdot XOR(x_1,z_1)$.
This equality makes it possible to inductively decompose $ONE$ with the nested use of $OR_2$ and $XOR$. 
Moreover, by (1), $OR_2(z_1,EQ(z_1,x_2,\ldots,x_k))$ is further written in terms of $\{OR_3,XOR\}$. 
We then simplify the nested $OR_2$ and $OR_3$ using the more general operator $OR$. 
This leads to the desired consequence.
\end{proof}

%%%

We consider the following three functions associated with AND, OR, and NOT. (1) $F_{OR}(x,y,z)=1$ $\Leftrightarrow$ $OR_2(x,y)=z$. (2) $F_{AND}(x,y,z)=1$ $\Leftrightarrow$ $AND_2(x,y)=z$. (3) $F_{NOT}(x,z)=1$ $\Leftrightarrow$ $NOT(x)=z$.

\begin{lemma}\label{circuit-gate}
Let $u_0=[1,-1]$ and $u_1=[-1,1]$.  The following equations hold.

(1) $F_{OR}(x,y,z) = \sum_{w\in\{0,1\}} OR_3(x,y,w) NAND_2(z,w) u_1(z) u_0(w)$.

(2) $F_{AND}(x,y,z) = - \sum_{w\in\{0,1\}} NAND_3(x,y,w) OR_2(z,w) u_0(w)$.

(3) $F_{NOT}(x,z) = XOR(x,z)$.

These equalities can be extended to arbitrary arities more than $3$.
\end{lemma}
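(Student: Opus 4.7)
The plan is to verify all three equalities by direct case analysis on the Boolean inputs. Since each equation involves only finitely many Boolean variables and the auxiliary summation variable $w$ also ranges over $\{0,1\}$, exhaustive evaluation of both sides on the at most $2^3\cdot 2 = 16$ (input, $w$) configurations is clearly sufficient. The only semantic ingredient is the defining equivalences $F_{OR}(x,y,z)=1 \iff OR_2(x,y)=z$, $F_{AND}(x,y,z)=1 \iff AND_2(x,y)=z$, and $F_{NOT}(x,z)=1 \iff NOT(x)=z$.

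For item (3), $F_{NOT}(x,z)=1$ is equivalent to $x\neq z$, which is precisely the condition $XOR(x,z)=1$; since both sides are $\{0,1\}$-valued and agree on all four pairs $(x,z)$, equality is immediate. For items (1) and (2) I would fix $(x,y,z)\in\{0,1\}^3$ and expand the two-term sum over $w$. In each case the vanishing behavior of the $0/1$-valued factors ($OR_3$, $NAND_3$, $NAND_2$, $OR_2$) leaves at most one or two surviving contributions, which are then weighted by the sign factors $u_0=[1,-1]$ and $u_1=[-1,+1]$. In the ``accepting'' configurations (where $OR_2(x,y)=z$ in (1), respectively $AND_2(x,y)=z$ in (2)) exactly one surviving term contributes with the right sign to produce $+1$; in the ``rejecting'' configurations either all summands are killed by a $0$ factor or a pair of $\pm 1$ contributions from $u_0$ cancels, giving $0$. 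The leading $-1$ in (2) and the factor $u_1(z)$ in (1) are precisely what is needed to align the final sign with $F_{AND}$ and $F_{OR}$ respectively.

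For the concluding remark about arbitrary arities, the same template works with $OR_3$ replaced by $OR_{k+1}$ in (1) and $NAND_3$ replaced by $NAND_{k+1}$ in (2), because the only features used in the case analysis are the threshold patterns ``at least one input is $1$'' and ``all inputs are $1$''. These are preserved when we adjoin the auxiliary variable $w$, so the accepting-versus-rejecting breakdown and the subsequent $u_0$-cancellation argument carry over verbatim, now stratified by whether any (resp.\ all) of $x_1,\ldots,x_k$ equals $1$.

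I do not anticipate any real obstacle: the verification is purely bookkeeping. The one point requiring care is the sign tracking through the $u_0$ and $u_1$ factors, and in particular checking that in every rejecting configuration the two summands really do cancel to give exactly $0$ rather than some spurious nonzero value; this is what forces the specific choices of $u_0=[1,-1]$, $u_1=[-1,+1]$ and the overall $-1$ in (2).
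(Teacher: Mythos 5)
Your approach --- exhaustive evaluation of both sides over all Boolean inputs together with the summation variable $w$ --- is exactly the paper's approach; its entire proof is the one-line remark that the lemma ``can be proven by checking all values of the both sides of each equation.'' The problem is that you assert the outcome of the check rather than performing it, and the check does not in fact close for item (2). Take $(x,y,z)=(1,1,1)$, where $AND_2(1,1)=1=z$ and hence $F_{AND}(1,1,1)=1$. On the right-hand side the $w=1$ summand vanishes because $NAND_3(1,1,1)=0$, while the $w=0$ summand equals $NAND_3(1,1,0)\cdot OR_2(1,0)\cdot u_0(0)=1\cdot 1\cdot 1=1$; after the leading minus sign the right-hand side is $-1\neq 1$. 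So the very configuration you describe as ``accepting, with exactly one surviving term contributing with the right sign to produce $+1$'' yields $-1$. (The other seven triples do work out for (2), and items (1) and (3) are correct as stated; I verified (1) at all eight triples.)

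This is precisely the sign-tracking issue you yourself flagged as ``the one point requiring care,'' and it is exactly where the argument breaks. The identity in (2) can be repaired by inserting a factor $u_0(z)$, namely $F_{AND}(x,y,z) = -\,u_0(z)\sum_{w\in\{0,1\}} NAND_3(x,y,w)\, OR_2(z,w)\, u_0(w)$, which follows from item (1) by De Morgan duality: $F_{AND}(x,y,z)=F_{OR}(\bar{x},\bar{y},\bar{z})$, and after reindexing $w\mapsto\bar{w}$ one uses $OR_3(\bar{x},\bar{y},\bar{w})=NAND_3(x,y,w)$, $NAND_2(\bar{z},\bar{w})=OR_2(z,w)$, $u_1(\bar{z})=u_0(z)$, and $u_0(\bar{w})=-u_0(w)$. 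A proposal that genuinely carried out the sixteen evaluations would have caught this; as written, yours would certify a false identity, so the gap is real even though the method is the right one.
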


\begin{proof}
The lemma can be proven by checking all values of the both sides of each equation.
\end{proof}

%%%

Now, let us return to the proof of Proposition \ref{OR-case}.

%%%%

\vs{-2}
\begin{proofof}{Proposition \ref{OR-case}}
Let $u_0=[1,-1]$. We intend to show the $\sharplogcfl$-hardness of $\sharpacsp(OR_2,XOR,u_0)$ under logspace reductions. Recall from Lemma \ref{SAC1P-complete} that $\#\mathrm{SAC1P}$ is $\sharplogcfl$-complete.
By Corollary \ref{ORk-to-OR3}, $OR_3\leq^{acyc}_{con}\{OR_2,XOR,u_0\}$ follows.
This implies that $\sharpacsp(OR_3,OR_2,XOR)\leq^{\dl} \sharpacsp(OR_2,XOR,u_0)$
It thus suffices to prove that $\#\mathrm{SAC1P}$ is logspace reducible to $\sharpacsp(OR_3,OR_2,XOR)$.

Let us consider an instance $\pair{C,x}$ given to \#SAC1P, where $C$ is a leveled semi-unbounded circuit of depth at most $\log{n}$ with $n$ input bits and $x\in\{0,1\}^n$.
Let $C$ have size $n^{O(1)}$ and depth $d$ with $1\leq d \leq \log{n}$. We first construct a \emph{skeleton tree} $T_{skt}$ from $C$ by removing the ``labels'' from all gates (including input gates), trimming all but one child node from each OR gate, and keeping intact the two child nodes of each AND gate. We then assign new labels $(i,j)$ to all nodes $v$ of $T_{skt}$ as follows. The root has label $(d,1)$. If node $v$ is located at level $i$ and is the parent of two children labeled by $(i-1,2j-1)$ and $(i-1,2j)$, then the label of $v$ is $(i,j)$. If $v$ has only one child with label $(i-1,j)$, then its parent has label $(i,j)$.

We wish to construct a constraint set by ``translating'' each gate into an associated constraint as follows.

(1) Consider an OR gate labeled $(i,j)$ with $m$ children whose labels are $(i-1,j_1),(i-1,j_2),\ldots,(i-1,j_m)$.
Let $\boldvec{x}_{i-1}= (x_{i-1,j_1},\ldots,x_{i-1,j_m})$. 
Let us recall the constraint $ONE$. 
We then set $F_{ONE}(\boldvec{x}_{i-1},x_{i,j})$ to be $\sum_{z_{i,j}\in\{0,1\}} g(\boldvec{x}_{i-1}, x_{i,j},z_{i,j}) XOR(x_{i,j},z_{i,j})$, where $g(\boldvec{x}_{i-1}, x_{i,j},z_{i,j})$ denotes the function $\sum_{z_{i,j}\in\{0,1\}} OR_2(z_{i,j},ONE(\boldvec{x}_{i-1}))  \cdot OR_2(x_{i,j},EQ(x_{i,j},\boldvec{x}_{i-1}))$. It is possible to make $F_{ONE}$ expressed in terms of $\{OR_3,OR_2,XOR\}$.
This is done by Lemmas \ref{calculate-ONE} and \ref{circuit-gate}.

(2) Consider an AND gate labeled $(i,j)$ with two children $(i-1,j_1)$ and $(i-1,j_2)$. We translate it into $F_{AND}(x_{i-1,j_1},x_{i-1,j_2},x_{i,j})$.  Since $EQ_2\leq^{acyc}_{con} \{OR_2,XOR\}$, $F_{AND}$ is expressed in terms of $\{OR_2,XOR\}$.

(3) Each input gate that takes the $j$th bit of $x$
is translated into the basic variable $x_j$.

Let $V$ denote the set of all variables $x_{i,j}$ and let $U$ be the set of all constraints generated above. Let $U=\{f_i\}_{i\in[t]}$ for $t\in\nat^{+}$. Notice that all $f_i$'s are in $\{OR_3,OR_2,XOR\}$.
Using (1)--(3), $\pair{C,x}$ is translated into the instance  $I_{C,x}=(V,\{(f_i,(v_{i_1},\ldots,v_{i_k}))\}_{i})$.
Since the above procedure can be carried out using log space, $\#\mathrm{SAC1P}$ is
logspace
reducible to $\sharpacsp(OR_3,OR_2,XOR)$, as requested.
\end{proofof}

%%%

A \emph{2CNF (Boolean) formula} $\phi$ is of the form $(z_{11}\vee z_{12})\wedge (z_{21}\vee z_{22}) \wedge \cdots \wedge (z_{n1}\vee z_{n2})$, where each $z_{ij}$ is a literal over a set $\{x_1,x_2,\ldots,x_n\}$ of variables. Such a formula $\phi$ is said to be \emph{acyclic} if its associated constraint hypergraph $G_{\phi}$ is acyclic.
The \emph{counting acyclic 2CNF satisfiability problem} (abbreviated $\#\mathrm{Acyc\mbox{-}2SAT}$) is the problem of counting the total number of satisfying assignments of a given acyclic 2CNF formula.
In comparison, it is known in  \cite{Val79} that $\#\mathrm{SAT}$ (counting satisfiability problem) is $\sharpp$-complete.

\begin{lemma}\label{Acyc-2SAT-vs-Implies}
$\#\mathrm{Acyc}\mbox{-}\mathrm{2SAT} \leq^{\dl} \sharpacsp(Implies,u_0)$, where $u_0=[1,-1]$.
\end{lemma}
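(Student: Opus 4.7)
The plan is to translate each clause of a given acyclic 2CNF formula $\phi = \bigwedge_{i=1}^n (z_{i1} \vee z_{i2})$ into a small gadget over $\{Implies, u_0, \Delta_0, \Delta_1\}$ whose evaluation matches that of the clause up to a controllable global sign. For the two ``mixed-polarity'' clause shapes I will simply use $Implies$ directly: a clause $(\bar{x} \vee y)$ becomes the single constraint $(Implies, (x,y))$, and $(x \vee \bar{y})$ becomes $(Implies, (y,x))$. For a clause $(x \vee y)$ with two positive literals, I introduce a fresh auxiliary variable $z_{c_i}$ and write down the three constraints $(Implies,(x,z_{c_i}))$, $(Implies,(y,z_{c_i}))$, $(u_0,(z_{c_i}))$, whose summation over $z_{c_i}$ equals $-OR_2(x,y)$ by Lemma \ref{IMPLIES-case}(1). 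For a clause $(\bar{x} \vee \bar{y})$ I use the $NAND_2$-realization of Lemma \ref{IMPLIES-case}(2): introduce $z_{c_i}$ and add $(Implies,(z_{c_i},x))$, $(Implies,(z_{c_i},y))$, $(u_0,(z_{c_i}))$, whose summation equals $NAND_2(x,y)$ with no extra sign.

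Because each auxiliary $z_{c_i}$ appears only inside its own gadget, the global sum splits as a product over clauses of these summed gadget values, and the resulting instance $I_\phi$ satisfies $count(I_\phi) = (-1)^{m_1}\cdot \#\mathrm{Acyc\mbox{-}2SAT}(\phi)$, where $m_1$ denotes the number of purely-positive clauses of $\phi$. To absorb this spurious sign, I adjoin $m_1$ isolated dummy variables $v_1, \ldots, v_{m_1}$ to the instance, each equipped with the two constraints $(\Delta_1,(v_j))$ and $(u_0,(v_j))$; summing over $v_j$ yields $\Delta_1(1)\cdot u_0(1) = -1$ per dummy, giving the compensating factor $(-1)^{m_1}$ and thus $count(I_\phi) = \#\mathrm{Acyc\mbox{-}2SAT}(\phi)$ on the nose, as required for a many-one logspace reduction.

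It remains to verify that $G_{I_\phi}$ is acyclic and that the construction runs in logarithmic space. For acyclicity, each singleton hyperedge $\{z_{c_i}\}$ produced by the $u_0$-constraint is immediately absorbed into $\{x,z_{c_i}\}$, after which a standard GYO-style reduction of the residual pair $\{x,z_{c_i}\},\{y,z_{c_i}\}$ mimics the effect of the original hyperedge $\{x,y\}$ of clause $c_i$; the dummy vertices form isolated components that vanish at once. Hence $G_{I_\phi}$ reduces to the empty hypergraph whenever $G_\phi$ does, which holds by hypothesis on $\phi$. The translation itself is elementary in log space: scan $\phi$ clause by clause, emit the appropriate gadget, maintain a counter for $m_1$, and append the dummies. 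The main obstacle is the bookkeeping of the $(-1)^{m_1}$ sign factor without access to the full unary class $\UU$; this is settled cleanly by the $(\Delta_1,u_0)$-dummies, which are admissible under the standing $\{\Delta_0,\Delta_1\}$ convention of $\sharpacsp$ and form trivially acyclic isolated components.
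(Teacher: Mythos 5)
Your construction is essentially the paper's: mixed-polarity clauses become $Implies$ directly, and purely positive/negative clauses are realized through the $\{Implies,u_0\}$ gadgets of Lemma \ref{IMPLIES-case}. On one point you are actually more careful than the paper: the paper builds the intermediate instance over $\{OR_2,NAND_2,Implies\}$ and then invokes Lemma \ref{acyc-sharpacsp} to pass to $\{Implies,u_0\}$, silently discarding the normalization constant $\lambda=-1$ attached to the $OR_2$ gadget; your $(-1)^{m_1}$ bookkeeping together with the $(\Delta_1,u_0)$ dummy components cancels this factor exactly and yields a genuine many-one logspace reduction.

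There is, however, a real gap in your acyclicity argument. Replacing the hyperedge $\{x,y\}$ of a pure clause by the subdivided path $\{x,z_{c_i}\},\{y,z_{c_i}\}$ does not in general preserve $\alpha$-acyclicity of the whole instance, because distinct clauses over the same variable pair contribute only a single hyperedge to $G_\phi$ (hyperedges form a set) but contribute distinct gadgets to $G_{I_\phi}$. Concretely, $\phi=(x\vee y)\wedge(\bar{x}\vee y)$ has constraint hypergraph $\{\{x,y\}\}$ and is therefore a legal instance of $\#\mathrm{Acyc\mbox{-}2SAT}$, yet your translation produces the hyperedges $\{x,z_1\},\{y,z_1\},\{x,y\}$ (after the singleton $\{z_1\}$ is absorbed), a triangle on which the GYO reduction stalls; likewise $(x\vee y)\wedge(\bar{x}\vee\bar{y})$ yields the four-cycle $x\mbox{--}z_1\mbox{--}y\mbox{--}z_2\mbox{--}x$. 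On such inputs your output is not an acyclic instance at all, so the reduction is undefined there. You would need to pre-process clauses sharing a variable pair (e.g., merge them into one binary constraint and realize that single constraint by one acyclic gadget) before the claim that $G_{I_\phi}$ reduces to the empty hypergraph whenever $G_\phi$ does can stand. To be fair, the paper hides exactly this substitution step inside Lemma \ref{acyc-sharpacsp}, whose proof also does not verify acyclicity of the substituted instance; but as an explicit assertion, your acyclicity claim is false as written.
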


\begin{proof}
Let $u_0=[1,-1]$. Consider any acyclic 2CNF Boolean formula $\phi$. Let $\phi\equiv (z_{11}\vee z_{12})\wedge (z_{21}\vee z_{22}) \wedge \cdots \wedge (z_{n1}\vee z_{n2})$, where each item $z_{ij}$ is a literal. We wish to translate $\phi$ into a certain instance $I$ of $\sharpacsp(Implies,u_0)$.
Let $Var$ denote the set $\{x_{j}\mid j\in[m]\}$ of all variables appearing in $\phi$.
We define the set $C=\{C_{i}\}_{i\in[n]}$ of constraints by setting $C_i$ to be $C_i = (OR_2, (x_{j_1},x_{j_2}))$ if $z_{i1}=x_{j_1}$ and $z_{i2}=x_{j_2}$,
$C_i=(Implies, (x_{j_1},x_{j_2}))$ if $z_{i1}=\overline{x_{j_1}}$ and $z_{i2}=x_{j_2}$,
$C_i=(Implies, (x_{j_2},x_{j_1}))$ if $z_{i1}=x_{j_1}$ and $z_{i2}= \overline{x_{j_2}}$, and
$C_i=(NAND_2, (x_{j_1},x_{j_2}))$ if $z_{i1} = \overline{x_{j_1}}$ and $z_{i2} = \overline{x_{j_2}}$. Symbolically, we write $C_i=(f_i,(x_{j_1},x_{j_2}))$, where $f_i$, $x_{j_1}$, and $x_{j_2}$ are defined above. The pair $(Var,C)$ then forms the desired instance $I$.

It is not difficult to show that, given an assignment $\sigma$, the partial weight $\prod_{i=1}^{n} f_i(\sigma(x_{j_1}),\sigma(x_{j_2}))$ equals $1$ iff $\sigma$ satisfies $\phi$. Hence the number of satisfying assignments of $\phi$ equals $count(I)$.
Since $OR_2$ and $NAND_2$ are AT-constructible from $\{Implies,u_0\}$ by  Lemma \ref{IMPLIES-case}, Lemma \ref{acyc-sharpacsp} immediately concludes  that $I$ belongs to $\sharpacsp(Implies,u_0)$.
\end{proof}

%%%

Let us consider the converse of Lemma \ref{Acyc-2SAT-vs-Implies}.
Unfortunately, we do not know if $\sharpacsp(Implies,u_0)$ is logspace reducible to $\#\mathrm{Acyc\mbox{-}2SAT}$; however, we can still prove a slightly weaker statement. For this purpose, we introduce a restricted variant of $\UU$. Let $\UU^{(-)}$ denote the set $\{ [0,1], [1,0], [0,0], [1,1]\}$.

\begin{lemma}
$\sharpacsp(Implies,\UU^{(-)}) \leq^{\dl} \#\mathrm{Acyc}\mbox{-}\mathrm{2SAT}$.
\end{lemma}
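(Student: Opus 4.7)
The plan is to give a logspace function $h$ that sends each instance $I=(Var,C)$ of $\sharpacsp(Implies,\UU^{(-)})$ to an acyclic 2CNF formula $\phi_I$ satisfying $\#\mathrm{Acyc}\mbox{-}\mathrm{2SAT}(\phi_I)=count(I)$. The decisive observation is that every constraint function that can appear here---$Implies$ together with the four unary functions $[0,1]$, $[1,0]$, $[0,0]$, $[1,1]$ of $\UU^{(-)}$---is $\{0,1\}$-valued, so the partial weight $\prod_i f_i(\sigma)$ is always $0$ or $1$ and $count(I)$ is literally the number of assignments $\sigma$ that simultaneously satisfy every constraint. The reduction therefore only has to emit, for each constraint, a 2CNF clause whose satisfying set coincides with the support of that constraint.

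With one pass through $C$, I would perform the local rewriting $(Implies,(x,y))\mapsto(\bar x\vee y)$, $(\Delta_1,(x))\mapsto(x\vee x)$, and $(\Delta_0,(x))\mapsto(\bar x\vee\bar x)$; each right-hand side is a legal 2CNF clause (the paper's definition allows the two literals to coincide) and evaluates to $1$ on exactly the assignments at which the corresponding constraint function does. Constraints with function $[1,1]$ are identically $1$ and can simply be discarded. A constraint with function $[0,0]$ forces $count(I)=0$, in which case $h$ instead outputs a fixed small unsatisfiable acyclic 2CNF such as $(z\vee z)\wedge(\bar z\vee\bar z)$ on a fresh variable $z$. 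The variable set of $\phi_I$ is $Var$, so any variable of $I$ used by no constraint remains free in $\phi_I$ and contributes the matching factor of $2$ to the two sides; should the formal definition of $\#\mathrm{Acyc}\mbox{-}\mathrm{2SAT}$ insist that the variable set coincide with the support of $\phi_I$, I would add a harmless tautological $(v\vee\bar v)$ for each isolated $v$, which only contributes a singleton hyperedge.

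The one step that I expect to require real care is verifying that $\phi_I$ is acyclic. Its constraint hypergraph $G_{\phi_I}$ differs from $G_I$ only by deletion of the singleton hyperedges contributed by the discarded $[1,1]$-constraints (plus the trivially acyclic contribution of the fallback formula or the extra $(v\vee\bar v)$ clauses). So it suffices to show that deleting a singleton hyperedge $\{v\}$ from an $\alpha$-acyclic hypergraph preserves $\alpha$-acyclicity. This follows directly from one step of the GYO reduction applied to $G_I$: either some other hyperedge contains $v$, in which case $\{v\}$ is contained in that hyperedge and rule~(ii) immediately produces $G_I-\{v\}$; or $\{v\}$ is the only hyperedge containing $v$, in which case rule~(i) removes the now-isolated vertex $v$ and rule~(ii) then removes the resulting empty hyperedge, again leaving $G_I-\{v\}$ as an intermediate state of a valid reduction sequence for $G_I$. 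Either way $G_I-\{v\}$ is $\alpha$-acyclic. Since the translation performs only constant-size local rewriting per constraint and produces output of length linear in $|I|$, $h$ is clearly log-space computable, which completes the reduction.
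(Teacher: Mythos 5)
Your proposal is correct and follows essentially the same route as the paper: a constraint-by-constraint logspace translation of $Implies$ and the unary constraints of $\UU^{(-)}$ into 2CNF clauses whose satisfying sets match the supports, using that all constraint functions involved are $\{0,1\}$-valued so that $count(I)$ is exactly the number of satisfying assignments. Your treatment is in fact somewhat more careful than the paper's (which uses single-literal clauses for the unary cases and does not explicitly verify that the resulting formula remains acyclic or account for variables left unconstrained), but these are refinements of the same argument rather than a different approach.
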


\begin{proof}
Let $I=(Var,C)$ be any instance of $\sharpacsp(Implies,\UU^{(-)})$. Let $G_I=(V_I,E_I)$ denote the associated hypergraph.
Note that $count(I) = \sum_{\sigma} [ \prod_{i\in[s]}Implies(\sigma(v_{i1}),\sigma(v_{i2})) \cdot \prod_{i\in[4]} f_i(\sigma(w_{j_i})) ]$, where $U=\{f_1,f_2,f_3,f_4\}$.
We convert $I$ to a 2CNF formula as follows. If $(Implies,(u,v))\in C$, then we include $\overline{u}\vee v$ as a clause. Assume that $C$ contains $(f,u)$ for a unary constraint $f$. If $f=[1,0]$, then we add $\overline{u}$. If $f=[0,1]$, then we add $u$. If $f=[0,0]$, then we add two clauses $u$ and $\overline{u}$. If $f=[1,1]$, then we add $u\vee \overline{u}$.
It is not difficult to show that $\prod_{i\in[t]}Implies(\sigma(v_{i1}),\sigma(v_{i2})) \cdot \prod_{i\in[4]} f_i(\sigma(w_{j_i}))=1$ iff the obtained formula is true.
The above conversion can be carried out using log space.
\end{proof}

Let $IMP$ be composed of all constraints that are logically equivalent to conjunctions of a finite number of constraints of the form $\Delta_0$, $\Delta_1$, and $Implies$. We say that a constraint $f$ has \emph{imp support} if the support $\supp(f)$ of $f$ belongs to $IMP$.
For example, $AND_k$ belongs to $\ED$ because $AND_k$ can be expressed as a product of $EQ_k$ and $\Delta_1$: $AND_k(x_1,x_2,\ldots,x_k) = EQ_k(x_1,x_2,\ldots,x_k) \Delta_1(x_1)$.

%%%%%

\begin{lemma}\label{notin-ED-OR2}
If $f\notin \ED$, then $\sharpacsp(OR_2)\leq^{\dl} \sharpacsp(f)$.
\end{lemma}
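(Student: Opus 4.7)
The plan is to establish the acyclic-T-constructibility $OR_2\leq^{acyc}_{con}\{f,\Delta_0,\Delta_1\}$, from which the desired reduction $\sharpacsp(OR_2)\leq^{\dl}\sharpacsp(f)$ follows by a gadget-replacement argument in the spirit of Lemma \ref{acyc-sharpacsp}: given an instance of $\sharpacsp(OR_2)$, substitute every $OR_2$ constraint by its acyclic realization over $\{f,\Delta_0,\Delta_1\}$, which is legal because $\Delta_0,\Delta_1$ are freely available in the definition of $\sharpacsp(f)$. The resulting constraint hypergraph remains acyclic by the definition of AT-constructibility, the product of scalar factors collected from the gadget insertions is computable in log space, and the global count is preserved up to this scalar.

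The first step is to reduce to the binary case. Since $\ED$ is closed under pinning (each of its generators, when restricted to $x_i=c$, becomes either a unary or a constant, so the product remains an $\ED$-constraint), an arity-$k$ constraint $f\notin\ED$ with $k\geq 3$ must admit an iterated pinning of some $k-2$ arguments by $\Delta_0,\Delta_1$ that yields a binary $f'\notin\ED$: if every such binary restriction lay in $\ED$, then reassembling the restrictions would force $f\in\ED$, a contradiction. Because pinning operations are AT-supporting and use only the freely available $\Delta_0,\Delta_1$, we get $f'\leq^{acyc}_{con}\{f,\Delta_0,\Delta_1\}$, and Lemma \ref{T-con-transitivity} reduces the task to establishing $OR_2\leq^{acyc}_{con}\{f',\Delta_0,\Delta_1\}$.

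In the binary case, I would classify $f=(a,b,c,d)$ by a direct verification that $f\in\ED$ iff $f$ is degenerate (equivalently, $ad=bc$), or $\supp(f)\subseteq\{(0,0),(1,1)\}$, or $\supp(f)\subseteq\{(0,1),(1,0)\}$: these are precisely the templates $u(x)v(y)$, $u(x)v(y)EQ_2(x,y)$, and $u(x)v(y)XOR(x,y)$. Hence a binary $f\notin\ED$, after normalizing by a nonzero scalar and possibly swapping the two argument positions (a purely syntactic relabelling), falls into one of the canonical families covered by Lemma \ref{OR-character} (one off-diagonal entry vanishes), Lemma \ref{OR-NAND} (one diagonal entry vanishes), or Lemma \ref{nowhere-zero-constraint} (all four entries are nonzero, non-degenerate). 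Each of these lemmas already AT-constructs $OR_2$ or $NAND_2$ from $f$ together with explicit auxiliary unaries $u,u',v$, and the $NAND_2$ cases are bridged to $OR_2$ by Lemma \ref{OR-and-NAND}(1) using $u_0=[1,-1]$.

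The main obstacle will be showing that every auxiliary unary $u,u',v,u_0$ invoked above is itself AT-constructible from $\{f,\Delta_0,\Delta_1\}$, since the cited lemmas simply quote them as given. The plan is to observe that pinning a single argument of $f$ via $\Delta_0$ or $\Delta_1$ produces the four basic unaries $[f(0,0),f(1,0)]$, $[f(0,1),f(1,1)]$, $[f(0,0),f(0,1)]$, $[f(1,0),f(1,1)]$, and pointwise products of these (realized acyclically by attaching several pinning gadgets to the same variable through their own private auxiliary variables) already generate a rich monoid containing the specific weights $b/a$, $-a^2$, and similar expressions appearing in Lemmas \ref{OR-character}--\ref{nowhere-zero-constraint}. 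Negative weights such as those of $u_0$ are obtained by introducing a fresh Boolean auxiliary variable shared between two pinning copies of $f$ whose partial weights cancel on one value and add on the other; the hypothesis $f\notin\ED$ ensures enough independent nonzero entries of $f$ to drive such cancellations. This case-by-case realization, mirroring the techniques of Lemmas \ref{OR-character}--\ref{nowhere-zero-constraint}, is the technical heart of the proof, and combining it with Lemma \ref{T-con-transitivity} closes the argument.
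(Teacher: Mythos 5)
There are two genuine gaps in your plan. The first is the arity-reduction step: your claim that an arity-$k$ constraint $f\notin\ED$ must admit an iterated pinning to a binary constraint outside $\ED$ (``if every such binary restriction lay in $\ED$, then reassembling the restrictions would force $f\in\ED$'') is false. Take the ternary parity constraint $f(x,y,z)=1$ iff $x\oplus y\oplus z=0$. Every pinning of one variable yields $EQ_2$ or $XOR$, both in $\ED$, and iterating only ever produces members of $\ED$; yet $f$ itself is not in $\ED$, since its support projects onto all four values of every pair of coordinates, so no product of unaries, $EQ_2$'s and $XOR$'s can carve it out. The paper avoids this trap by running a different induction: for $k\geq 3$ it invokes a factorization claim adapted from Yamakami's Lemma~7.4, splitting $f$ into a part lying in $(IMP\cap\ED)\cup\{EQ_1\}$ and a lower-arity factor $g$ with $g\leq^{acyc}_{con}f$ and $g\notin\ED$, and then applies the induction hypothesis to $g$. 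Your descent-by-pinning cannot be repaired as stated.

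The second gap is the one you yourself flag as the technical heart, and your proposed fix cannot work. The auxiliary unaries in Lemmas \ref{OR-character}--\ref{nowhere-zero-constraint} and the $u_0=[1,-1]$ of Lemma \ref{OR-and-NAND} include weights with negative (or non-real) entries, but every gadget built from $\{f,\Delta_0,\Delta_1\}$ by sums of products has entries that are sums of products of entries of $f$ and of $\{0,1\}$. When $f$ has only nonnegative real entries (e.g.\ $f=NAND_2$ or $f=(1,a,0,b)$ with $a,b>0$, both outside $\ED$), no such gadget can ever produce a unary proportional to $[1,-1]$: there is nothing to cancel. So $OR_2\leq^{acyc}_{con}\{f,\Delta_0,\Delta_1\}$ is not obtainable along your route for such $f$. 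The resolution is that the lemma is applied inside Theorem \ref{main-theorem} in a context where all of $\UU$ is freely available as part of the constraint set, so the auxiliary unaries do not need to be synthesized from $f$ at all; the reduction one actually needs (and the one the cited lemmas deliver) is $\sharpacsp(OR_2)\leq^{\dl}\sharpacsp(f,\UU)$. Attempting to prove the literally stronger statement without $\UU$ is where your argument breaks down.
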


\begin{proof}
Let $arity(f)=k$.  We show the lemma by induction on $k$. Consider the base case of $k=2$. Since $f\notin\ED$, $f$ must have the form specified in Lemmas \ref{OR-character}--\ref{nowhere-zero-constraint}. Moreover, by Lemma \ref{OR-and-NAND}(2), $NAND_2$ is $AT$-constructed from $\{OR_2,u_0\}$, where $u_0=[1,-1]$.

Consider the general case of $k\geq3$. By induction hypothesis, the lemma is true for $k-1$.
We consider the case where $f$ has imp support. There are two cases to examine.

(a) Consider the case where $f$ is in $(IMP\cap \ED)\cup\{EQ_1\}$. In this case, we have a contradiction with $f\notin \ED$.
(b) Consider the case where $f$ is not in $(IMP\cap \ED)\cup\{EQ_1\}$. Similar to \cite[Lemma 7.4]{Yam12a}, we can show the following claim. For any $f\notin (IMP\cap \ED)\cup\{EQ_1\}$, there are two constraints $h\in (IMP\cap \ED)\cup\{EQ_1\}$ of arity $m'\leq k$ and $g$ of arity $k-m$ with $k\neq m\leq m'$ such that $f(x_1,\ldots,x_k) = h(x_1,\ldots,x_{m'}) g(x_m,\ldots,x_k)$ (after appropriate re-ordering of variables) and $g\leq^{acyc}_{con} f$. By induction hypothesis, it follows that $\sharpacsp(OR_2)\leq^{\dl} \sharpacsp(g)$. This implies $\sharpacsp(OR_2)\leq^{\dl} \sharpacsp(f)$.
\end{proof}

%%%%%
\subsection{Classification by the Free Use of XOR}

In this section, we allow free use of $XOR$, which can enhance the computational strength of $\sharpacsp$ when it is given with other appropriate constraints.

%%%%

Let us return to the first main theorem (Theorem \ref{main-theorem}) stated in Section \ref{sec:main-contribution} and provide its proof.
Here, we rephrase this main theorem as follows.

\bs
\n{\bf Theorem \ref{main-theorem} (rephrased)}\hs{2}
For any set $\FF$ of Boolean constraints, if $\FF\subseteq \ED$, then $\sharpacsp(\FF,\UU,XOR)$ is in $\fl_{\complex}$. Otherwise, $\sharpacsp(\FF,\UU,XOR)$ is $\sharplogcfl$-hard under logspace reductions.
\bs

%%%

The proof of this theorem utilizes some of the key results of \cite{Yam12a}, in particular, Lemma 7.5, Lemma 8.3, Corollary 8.6, and Proposition 8.7.
For our convenience, we rephrase the crucial part of \cite[Lemma 7.5]{Yam12a} as follows.

\begin{lemma}\label{Yam12a-lemma}
For any constraint function $f\in\mathcal{NZ}-\mathcal{DG}$ of arity $k\geq2$, there exist $k-1$ constants $c_3,c_4,\ldots,c_k\in\{0,1\}^{k-2}$ and $\lambda\in\complex-\{0\}$ that make the constraint function $h= \lambda\cdot f^{x_3=c_3,x_4=c_4,\ldots,x_k=c_k}$ have the form $(1,x,y,z)$ with $xyz\neq0$ and $xy\neq z$ after an appropriate permutation of variable indices.
\end{lemma}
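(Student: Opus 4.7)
The plan is to reduce to the binary case by pinning $k-2$ variables and then normalize. The key preliminary observation is that for any binary constraint $(a, b, c, d)$ with $abcd \neq 0$, degeneracy (i.e., membership in $\DG$) is equivalent to the rank-one identity $ad = bc$, since $f(x_1,x_2) = u(x_1) v(x_2)$ forces $f(0,0) f(1,1) = f(0,1) f(1,0)$, and conversely this identity allows a factorization. Thus, normalizing by $\lambda = 1/a$ yields $(1, b/a, c/a, d/a) = (1, x, y, z)$, and the condition $xy \neq z$ is precisely the failure of the rank-one identity. Since $f \in \NZ$, every pinning is already nowhere zero, so $abcd \neq 0$ is automatic; the only real issue is securing $ad \neq bc$, i.e., non-degeneracy of the pinned binary constraint.

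Accordingly, the goal becomes: find a pinning of $k-2$ variables and a pair of free indices (to be permuted into positions $1, 2$) for which the resulting binary constraint is not degenerate. For this, fix any branch of the complex logarithm and set $L(\boldvec{x}) = \log f(\boldvec{x})$, which is well-defined on $\{0,1\}^k$ because $f$ is nowhere zero. Expand $L$ in the multilinear basis,
\[
L(\boldvec{x}) \;=\; \sum_{S \subseteq [k]} \alpha_S \prod_{l \in S} x_l .
\]
Then $f \in \DG$ iff $L$ is additively separable into single-variable functions, iff $\alpha_S = 0$ for every $S$ with $|S| \geq 2$. Since $f \notin \DG$, some $\alpha_{S_0}$ with $|S_0| \geq 2$ is nonzero; choose two elements $i, j \in S_0$ and, after permuting variable indices, relabel them as $1$ and $2$, ensuring $\{1,2\} \subseteq S_0$.

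After pinning $x_l = c_l$ for $l \in [3, k]_{\integer}$, the coefficient of $x_1 x_2$ in the restricted function $L(x_1, x_2, c_3, \ldots, c_k)$ is
\[
P(c_3, \ldots, c_k) \;=\; \sum_{S \supseteq \{1,2\}} \alpha_S \prod_{l \in S \setminus \{1,2\}} c_l,
\]
a multilinear polynomial in $c_3, \ldots, c_k$. Because $\alpha_{S_0}$ contributes a nonzero monomial, $P$ does not vanish identically on $\{0,1\}^{k-2}$, so there exists $(c_3, \ldots, c_k) \in \{0,1\}^{k-2}$ with $P(c_3, \ldots, c_k) \neq 0$. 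For this pinning, write $f^{x_3=c_3, \ldots, x_k=c_k} = (a, b, c', d)$; nowhere-zeroness of $f$ gives $abc'd \neq 0$, while $P \neq 0$ forces the $x_1 x_2$ interaction term of $\log$ to be nonzero, which translates into $ad \neq bc'$. Taking $\lambda = 1/a$ produces $h = (1, b/a, c'/a, d/a)$ with $xyz \neq 0$ and $xy - z = (bc' - ad)/a^2 \neq 0$, as required.

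The only technical nuisance to watch out for is the branch indeterminacy of the complex logarithm: pointwise choices of $\log f(\boldvec{x})$ differ by integer multiples of $2\pi\imath$, but such discrepancies contribute only additively separable functions of the Boolean arguments and hence do not perturb the higher-order coefficients $\alpha_S$ with $|S| \geq 2$. Alternatively, one may avoid logarithms entirely and argue directly that $f \in \DG$ iff for every pair of variable indices and every pinning of the remaining variables the induced binary constraint satisfies the rank-one identity; this version is proved by induction on $k$ using that, when the identity holds for all pairs involving index $k$, the ratio $f(\boldvec{x}', 1)/f(\boldvec{x}', 0)$ is forced to be constant in $\boldvec{x}'$, enabling the induction step.
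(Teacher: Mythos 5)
The paper itself offers no proof of this lemma --- it is imported wholesale from \cite[Lemma~7.5]{Yam12a} --- so your argument can only be judged on its own merits, and its main line has a genuine gap at the very last step. The reduction to finding one pinning whose induced binary constraint violates the rank-one identity $ad=bc'$, the characterization of binary degeneracy by that identity, and the normalization by $1/a$ are all correct, as is the observation that a nonzero multilinear polynomial cannot vanish on all of $\{0,1\}^{k-2}$. What fails is the inference ``$P(c_3,\ldots,c_k)\neq 0$ forces $ad\neq bc'$.'' The pinned $x_1x_2$-coefficient is $P=\log a-\log b-\log c'+\log d$, and $e^{P}=ad/(bc')$ holds exactly whatever branches are chosen; hence $P\neq 0$ rules out $ad=bc'$ only when $P\notin 2\pi\imath\,\integer$. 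This is not a hypothetical worry: take $k=3$ with the $x_3=0$ slice equal to $(-1,1,1,-1)$ and the $x_3=1$ slice equal to $(1,1,1,2)$. This $f$ is nowhere zero and non-degenerate, and with the principal branch one gets $\alpha_{\{1,2\}}=2\pi\imath\neq 0$, so your recipe permits $S_0=\{1,2\}$ and the pinning $c_3=0$ (where $P=2\pi\imath\neq0$); yet the induced binary constraint $(-1,1,1,-1)$ satisfies $ad=bc'$ and is degenerate. Relatedly, your dismissal of the branch ambiguity --- that pointwise $2\pi\imath\,\integer$-valued discrepancies are ``additively separable'' --- is false ($2\pi\imath\,x_1x_2$ is such a function and is not separable), although that particular slip is harmless because only the direction ``$\alpha_S\equiv 0$ for $|S|\geq 2$ implies $f\in\DG$'' is actually used.

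The fix is already in your closing sentence: the logarithm-free ``alternative'' is the correct proof and should be the main argument. Show by induction on $k$ that a nowhere-zero $f$ lies in $\DG$ if and only if every pair of coordinates and every pinning of the remaining ones yields a binary constraint satisfying the rank-one identity; the inductive step is exactly as you sketch it, since the identities for all pairs containing index $k$ force the ratio $f(\boldvec{x}',1)/f(\boldvec{x}',0)$ to be constant, after which the induction hypothesis applies to $f(\cdot,0)$. The contrapositive then hands you a pair of indices and a pinning $c_3,\ldots,c_k$ witnessing $ad\neq bc'$ directly in the multiplicative world, and permutation plus normalization by $1/a$ yield $h=(1,x,y,z)$ with $xyz\neq 0$ and $xy\neq z$. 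With that substitution the proof is complete; the logarithmic detour should simply be dropped.
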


Since the constraint $h$ given in the above lemma is obtained from $f$ by ``pinning'' and ``normalization'', we conclude that $h\leq^{acyc}_{con}\{f\}\cup\UU$.

In contrast to $IMP$, the notation $\mathcal{IM}$ expresses the set of all functions, not in $\mathcal{NZ}$, which are products of a finite number of
unary functions and $Implies$.

%%%

\vs{-2}
\begin{proofof}{Theorem \ref{main-theorem}}
Let $\FF$ denote any set of constraint functions.
We first remark that $\sharpacsp(\FF,\UU,XOR)$ is in $\sharplogcfl_{\complex}$ by Theorem \ref{falling-sharplogcfl}. Let us consider the case of $\FF\subseteq \mathcal{ED}$. In this case, Lemma \ref{ED-type-case} concludes that $\sharpacsp(\FF,\UU,XOR)$ belongs to $\fl_{\complex}$. We next consider the case of $\FF\nsubseteq \mathcal{ED}$ and choose a constraint $f\in\FF$ not in $\mathcal{ED}$. Note that $art(f)\geq2$ because, otherwise, $f$ falls into $\UU$ ($\subseteq \mathcal{ED}$), a contradiction.

Since $\sharpacsp(OR_2,XOR,u_0)$ with $u_0=[1,-1]$ is $\sharplogcfl$-hard by Proposition \ref{OR-case}, if $\sharpacsp(OR_2)\leq^{\dl} \sharpacsp(\FF,\UU,XOR)$, then we conclude that $\sharpacsp(\FF,\UU,XOR)$ is also $\sharplogcfl$-hard.  Our goal is thus to show that $\sharpacsp(OR_2)\leq^{\dl} \sharpacsp(\FF,\UU,XOR)$.

Let us examine two possible cases (1)--(2).

(1) Assume that $f\in\mathcal{NZ}$. Note that $f\notin\mathcal{DG}$ because, otherwise, $f\in\mathcal{DG}\subseteq \mathcal{ED}$ follows, a contradiction.  Assume that $f$ has arity $2$. This $f$ has the form $(1,a,b,c)$ with $abc\neq0$. Since $f\notin \mathcal{DG}$, we further obtain $ab\neq c$ because, otherwise, $f=(1,a,b,ab)=[1,b]\cdot [1,a] \in\mathcal{DG}$.
Lemma \ref{nowhere-zero-constraint} concludes that $NAND_2\leq^{acyc}_{con} \{f\}\cup\UU$.
Lemma \ref{OR-and-NAND}(1) then helps us obtain $OR_2\leq^{acyc}_{con} \{f\}\cup\UU$.
By Lemma \ref{acyc-sharpacsp}, we it follows that $\sharpacsp(OR_2) \leq^{\dl} \sharpacsp(f,\UU,XOR)$.

Next, we assume that $f$ has arity $k>2$. Apply Lemma \ref{Yam12a-lemma}. By the remark after Lemma \ref{Yam12a-lemma}, we obtain $h=(1,x,y,z)$ such that  $xyz\neq0$, $xy\neq z$, and $h\leq^{acyc}_{con}\{f\}\cup\UU$. By Lemma \ref{T-con-transitivity}, we conclude that  $NAND_2\leq^{acyc}_{con}\{f\}\cup\UU$. Similarly to the previous case, we can obtain the desired conclusion.

(2) Assume that $f\notin\mathcal{NZ}$.
(a) Let us consider the case where $f\notin\mathcal{IM}$. We begin with the case of $art(f)=2$. Let $f=(a,b,c,d)$ with $a,b,c,d\in\complex$. By \cite[Lemma 8.3]{Yam12a}, $f\notin\mathcal{IM}$ implies that $ad=0$ and $bc\neq0$. Since $f\notin\mathcal{ED}$, this is not the case of $a=d=0$.
In the case of $art(f)\geq3$, by applying an appropriate series of pinning and linking to $f$ as in \cite[Proposition 8.7]{Yam12a}, we obtain $g=(a,b,c,0)$ with $abc\neq0$ from $f$. Therefore, we conclude that $g\leq^{acyc}_{con}\{f\}\cup\UU$.
(b) Assume that $f\in\mathcal{IM}$. By \cite[Corollary 8.6]{Yam12a}, after conducting pinning to $f$, we obtain $h$, which is in $\mathcal{IM}-\mathcal{ED}$. Thus, we conclude that $h\leq^{acyc}_{con}\{f\}\cup\UU$.

Overall, using the results in Sections \ref{sec:acyclic-const}--\ref{sec:case-binary}, we obtain the theorem.
\end{proofof}

%%%%%
\subsection{Classification without Free Use of XOR}

Next, let us recall the statement of the second main theorem (Theorem \ref{second-theorem}) from Section \ref{sec:main-contribution}. We rephrase it as follows.

\bs
\n{\bf Theorem \ref{second-theorem} (rephrased)}\hs{2}
For any set $\FF$ of constraint functions, if $\FF\subseteq \ED$, then $\sharpacsp(\FF,\UU)$ is in $\fl_{\complex}$. Otherwise, if $\FF\subseteq \IM$, then  $\sharpacsp(\FF,\UU)$ is hard for  $\#\mathrm{Acyc\mbox{-}2SAT}$ under logspace reductions. Otherwise, $\sharpacsp(\FF,\UU)$ is $\sharplogcfl$-hard under logspace reductions.
\bs

%%%%%
%%%%%

Hereafter, we provide the proof of this main theorem. For the proof, we need the following two supporting lemmas: Lemmas \ref{implies-bound} and \ref{IM-OR-XOR}.

\begin{lemma}\label{implies-bound}
For any $f\in\IM - \DG\cup \NZ$, $ \#\mathrm{Acyc}\mbox{-}\mathrm{2SAT}  \leq^{\dl} \sharpacsp(f) \leq^{\dl} \sharpacsp(Implies,u_0)$, where $u_0=[1,-1]$.
\end{lemma}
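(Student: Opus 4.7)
My plan is to unfold $f$ using its $\IM$-structure. Since $f\in\IM$, by definition it admits a factorization $f(x_1,\ldots,x_k) = \prod_p g_p(x_{\alpha(p)}) \cdot \prod_q Implies(x_{\beta(q)},x_{\gamma(q)})$ for some unary constraints $g_p$. The associated constraint hypergraph has one hyperedge per factor and no summation variables, so it is a forest and hence acyclic; the factorization therefore directly witnesses $f\leq^{acyc}_{con}\{Implies\}\cup\UU$. Applying Lemma~\ref{acyc-sharpacsp} gives $\sharpacsp(f,\UU)\leq^{\dl}\sharpacsp(Implies,\UU)$, and since $u_0\in\UU$ the right-hand side is contained in $\sharpacsp(Implies,u_0,\UU)$. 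Under the standing convention of the surrounding Theorem~\ref{second-theorem} that unary constraints are freely available, this delivers the desired inequality $\sharpacsp(f)\leq^{\dl}\sharpacsp(Implies,u_0)$.

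\textbf{Lower bound.} Here the plan is to extract $Implies$ from $f$ and then invoke Lemma~\ref{Acyc-2SAT-vs-Implies}. Because $f\in\IM-\DG$, its factorization contains at least one factor $Implies(x_{\beta(q_0)},x_{\gamma(q_0)})$; otherwise $f$ would be a pure product of unary constraints and belong to $\DG$. A preliminary observation, easy to verify by inspecting each factor, is that the unary $g_p$ attached to a variable appearing in some Implies factor must be nowhere zero: were it $\Delta_0$- or $\Delta_1$-like, the Implies factor would collapse into a unary and $f$ would fall back into $\DG$. With this in hand, I pin every variable of $f$ other than $x=x_{\beta(q_0)}$ and $y=x_{\gamma(q_0)}$ to a value making all pinned-only factors nonzero (such a choice exists because $f\not\equiv 0$), then multiply by the reciprocals in $\UU$ of the two surviving nowhere-zero unary factors on $x$ and $y$, and finally normalize by the resulting nonzero constant. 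This gives $Implies\leq^{acyc}_{con}\{f\}\cup\UU$, hence $\sharpacsp(Implies,\UU)\leq^{\dl}\sharpacsp(f,\UU)$ by Lemma~\ref{acyc-sharpacsp}. Composing with Lemma~\ref{Acyc-2SAT-vs-Implies} (whose needed constraint $u_0$ lies in $\UU$) then yields $\#\mathrm{Acyc}\mbox{-}\mathrm{2SAT}\leq^{\dl}\sharpacsp(f)$.

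\textbf{The hard part.} The most delicate step is the bookkeeping inside the lower-bound pinning argument: one must choose a single Boolean assignment to the pinned variables that is simultaneously compatible with (i) every other Implies factor of $f$ evaluating to a nonzero number, (ii) every unary factor on a pinned variable being nonzero, and (iii) the two surviving unary factors on $x$ and $y$ remaining nowhere-zero so that their reciprocals lie in $\UU$. The existence of such a pinning follows from $f\not\equiv 0$ together with the nowhere-zero observation above, but organizing the case split cleanly, and carrying the cumulative scalar factor through so that the reduction preserves $count$-values exactly rather than merely up to a fixed multiplicative constant, is where most of the technical work will go.
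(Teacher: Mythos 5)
Your route differs from the paper's in a substantive way. For the lower bound, the paper pins $f$ down to a binary constraint $g\in\IM-\DG\cup\NZ$ and then runs the case analysis of Lemmas \ref{OR-character}--\ref{nowhere-zero-constraint} to AT-construct $OR_2$ from $\{g\}\cup\UU$, whereas you extract an $Implies$ factor directly from the $\IM$-factorization and compose with Lemma \ref{Acyc-2SAT-vs-Implies}. Your route is arguably the cleaner one: Lemma \ref{Acyc-2SAT-vs-Implies} reduces $\#\mathrm{Acyc\mbox{-}2SAT}$ to $\sharpacsp(Implies,u_0)$, so having $Implies$ in hand finishes the job at once, while the paper's $OR_2$ alone does not obviously suffice (a clause with one negated literal needs $Implies$, and the only constructions of $Implies$ from $OR_2$ in the paper go through $XOR$, which is unavailable here). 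You also address the upper bound, which the paper's proof silently omits; note, though, that what your argument actually yields is $\sharpacsp(f)\leq^{\dl}\sharpacsp(Implies,\UU)$ --- the unary factors of $f$ must travel to the right-hand side --- which matches how the lemma is used inside Theorem \ref{second-theorem} but not the literal target $\sharpacsp(Implies,u_0)$.

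There are, however, two genuine problems in your lower bound. First, your ``preliminary observation'' is false as stated: a $\Delta_0$- or $\Delta_1$-like unary attached to a variable of an $Implies$ factor collapses \emph{that} factor into unaries, but it does not push all of $f$ into $\DG$; for instance $f(x,y,z)=Implies(x,y)\,Implies(y,z)\,\Delta_1(z)$ is not degenerate. The statement you actually need --- that some pinning of all but two variables leaves a residue of the form $\lambda\cdot Implies(x,y)\,u(x)\,v(y)$ with $u,v$ nowhere zero --- is exactly the step you defer to ``the hard part,'' and it is also the step the paper asserts without proof (``if every pinning lands in $\DG\cup\NZ$, then $f\in\DG\cup\NZ$''); as it stands, neither argument establishes it. Second, your recipe fails outright on inputs satisfying the stated hypothesis: $EQ_2=Implies(x,y)\,Implies(y,x)$ lies in $\IM-(\DG\cup\NZ)$, there is nothing to pin and no unary to cancel, and yet $EQ_2$ is not a nonzero multiple of $Implies$; moreover the claimed hardness cannot hold for it, since $\sharpacsp(EQ_2)\in\fl_{\complex}$ by Proposition \ref{ED-type-case}. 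Both your argument and the paper's base case (which lists only binary forms with a single zero entry) implicitly assume $f\notin\ED$; the hypothesis must be read that way, and your extraction must explicitly exclude the case in which both orientations of $Implies$ survive on the chosen pair of variables.
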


\begin{proof}
We intend to prove the lemma by induction on the value of $arity(f)$. Let us begin with the base case of $arity(f)=2$. Since $f\notin \DG\cup\NZ$, $f$ has the form $(1,a,0,b)$ with $ab\neq0$, $(0,a,b,1)$ with $ab\neq0$, $(1,a,b,c)$ with $abc\neq0$ and $ab\neq\pm c$, and $(1,a,b,-ab)$ with $ab\neq0$. By Lemmas \ref{OR-character}--\ref{nowhere-zero-constraint}, we obtain either $OR_2\leq^{acyc}_{con}\{f,u,u',u''\}$ or $NAND_2\leq^{acyc}_{con}\{f,u,u',u''\}$ for appropriately chosen unary constraints $u,u',u''$. By Lemma \ref{OR-and-NAND}, the latter AT-constructibility relation is replaced by $OR_2\leq^{acyc}_{con}\{f,u,u',u''\}$.

Case: $arity(f)>2$. Note that $f$ is obtained by multiplying $Implies$ and unary constraints. By pinning variables of $f$ except for two variables, we AT-construct another constraint, say, $g$ from $\{f\}\cup \UU$. Since $g\leq^{acyc}_{con}\{f\}\cup\UU$, by Lemma \ref{acyc-sharpacsp}, we obtain $\sharpacsp(g,\UU)\leq^{\dl} \sharpacsp(f,\UU)$. If, for any choice of pinning operations, $g$ belongs to $\DG\cup\NZ$, then we can conclude that $f$ is also in $\DG\cup\NZ$, a contradiction. Hence, there is a series of appropriate pinning operations such that $g$ is in $\IM-\DG\cup\NZ$. Since $arity(g)=2$, this case can be handled as in the base case.
\end{proof}

\begin{lemma}\label{IM-OR-XOR}
If $f\notin \IM\cup\ED$, then $\sharpacsp(OR_2,XOR) \leq^{\dl} \sharpacsp(f)$.
\end{lemma}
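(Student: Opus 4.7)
The plan is to induct on $k = \mathrm{arity}(f)$, separating in the base case $k=2$ according to whether $f\in\NZ$. For $f\in\NZ$, the equivalence $\NZ\cap\ED=\DG$ recalled in the excerpt combined with $f\notin\ED$ gives $f\notin\DG$. After normalization, $f=(1,a,b,c)$ with $abc\neq 0$ and $ab\neq c$, so Lemma \ref{nowhere-zero-constraint}(1)--(2) yields $NAND_2\leq^{acyc}_{con}\{f\}\cup\UU$, and Lemma \ref{OR-and-NAND}(1) promotes this to $OR_2\leq^{acyc}_{con}\{f\}\cup\UU$. To obtain $XOR$, I would use the identity $XOR(x,y)=OR_2(x,y)\cdot NAND_2(x,y)$; the associated hypergraph has vertex set $\{x,y\}$ with two copies of the hyperedge $\{x,y\}$, which is acyclic because one copy is contained in the other and is removed by action~(ii). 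Combining via Lemmas \ref{T-con-transitivity} and \ref{acyc-sharpacsp} finishes this branch.

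For $f\notin\NZ$ in the base case, I would enumerate the zero patterns of $f=(f_{00},f_{01},f_{10},f_{11})$. Any vanishing row or column forces a degenerate factor, hence $f\in\ED$; the zero pairs $\{00,11\}$ and $\{01,10\}$ correspond respectively to $XOR$- and $EQ_2$-based factors in $\ED$; a lone zero at position $01$ or $10$ makes $f$ a scalar multiple of $RImplies$ or $Implies$ times unary factors, placing $f\in\IM$. What remains, up to symmetries of the truth table, is $f=(0,a,b,c)$ with $abc\neq 0$ and $f=(1,a,b,0)$ with $ab\neq 0$, which are precisely the hypotheses of Lemmas \ref{OR-NAND} and \ref{OR-character}. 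These lemmas supply $NAND_2$ and $OR_2$ as AT-constructions from $\{f\}\cup\UU$, and the product $OR_2\cdot NAND_2=XOR$ completes the binary case.

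For the inductive step $k\geq 3$, my plan is to pin $k-2$ variables of $f$ to suitable Boolean constants and produce a binary constraint $g=f^{x_3=c_3,\ldots,x_k=c_k}$ that still lies outside $\IM\cup\ED$. Because pinning is supported by AT-constructibility, $g\leq^{acyc}_{con}\{f\}\cup\UU$; applying the inductive hypothesis to $g$ and then Lemmas \ref{T-con-transitivity} and \ref{acyc-sharpacsp} yields the desired reduction. The principal obstacle is proving that such a pinning exists. I would argue by contrapositive: suppose every complete pinning of $f$ down to two free variables lies in $\IM\cup\ED$; I would then reconstruct $f$ itself as a product of $Implies$'s and unary constraints, or of $EQ_2$'s, $XOR$'s, and unary constraints, contradicting the hypothesis. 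This closure-by-pinning step is the hardest part and will require the kind of careful case analysis used in Corollary~8.6 and Proposition~8.7 of \cite{Yam12a}. It is further complicated by the possibility that different pinnings witness $\IM$ and $\ED$ separately; to handle this I anticipate having to first invoke projection or linking to extract an intermediate binary constraint whose zero pattern simultaneously precludes both categories, after which the base case applies directly.
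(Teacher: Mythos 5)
The paper states Lemma~\ref{IM-OR-XOR} without proof, so your proposal cannot be checked against an intended argument; judged on its own terms it has one concrete gap that I do not see how to repair with the tools you invoke. Your production of $OR_2$ is fine: the zero-pattern analysis in the base case is essentially correct (every pattern other than a single zero at position $00$ or $11$ forces $f\in\IM\cup\ED$, though note that $(1,a,b,0)$ is not literally the hypothesis of Lemma~\ref{OR-character}, which treats $(1,a,0,b)$, so you need one extra summation step to reduce it to the form of Lemma~\ref{OR-NAND}). The problem is $XOR$. The identity $XOR(x,y)=OR_2(x,y)\cdot NAND_2(x,y)$ is acyclic only as long as both factors label the \emph{same} hyperedge $\{x,y\}$. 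But your $OR_2(x,y)$ and $NAND_2(x,y)$ are each realized from $f$ by hypergraphs that connect $x$ to $y$ \emph{through fresh auxiliary vertices} (e.g.\ $NAND_2(x,y)=\lambda\sum_{z}f(x,z)f(y,z)\cdots$ uses hyperedges $\{x,z\},\{y,z\}$). Taking the union of the two realizations produces two internally disjoint paths from $x$ to $y$ --- already the triangle $\{x,z\},\{y,z\},\{x,y\}$ is GYO-irreducible --- so the composed hypergraph is not acyclic and does not witness $XOR\leq^{acyc}_{con}\{f\}\cup\UU$. This is exactly the failure mode the paper flags immediately after defining AT-constructibility (the cyclic expressions for $XOR$ and $EQ_2$ in terms of $OR_2$), and it shows you cannot blindly chain Lemma~\ref{T-con-transitivity} here: you need a realization of $XOR(x,y)$ whose hypergraph is itself a tree, say of the form $\sum_{z}g(x,z)h(z,y)u(z)$ with $g,h$ AT-constructible from $\{f\}\cup\UU$, and you have not exhibited one.

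The inductive step is also only a plan, not a proof. The assertion that some pinning of a $k$-ary $f\notin\IM\cup\ED$ yields a binary constraint still outside $\IM\cup\ED$ is the substantive content of the lemma for $k\geq 3$, and your own description concedes it may fail (different pinnings landing in $\IM$ and in $\ED$ separately). Your proposed escape via projection or linking runs into the paper's explicit caveat that linking does not in general support AT-constructibility, so any use of it must be justified separately for the acyclic setting. As it stands, both the $XOR$ construction and the closure-under-pinning argument are missing, and these are precisely the two places where the acyclicity restriction makes this lemma harder than its classical ($\sharpcsp$) counterpart.
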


%%%%%
%%%%%
%(??? I have to write this proof!!! ???)
%\begin{proofof}{Lemma \ref{IM-OR-XOR}}
%It is possible to prove the following claim: if $f$ has imp support and $f\notin\ED$, then there exists  a constraint $h$ of arity $<arity(f)$ such that $h$ has imp support, $h\notin\ED$, and $h\leq^{acyc}_{con}f$.
%(?????)
%\end{proofof}
%%%%
%%%%

With the help of Lemmas \ref{implies-bound} and  \ref{IM-OR-XOR},  Theorem \ref{second-theorem} can be proven in the following fashion.

\vs{-2}
\begin{proofof}{Theorem \ref{second-theorem}}
Let us recall the proof of Theorem \ref{main-theorem}. It therefore suffices to discuss the case where $\FF\nsubseteq \ED$.
Consider the case where $\FF\subseteq \IM$. By Lemma \ref{implies-bound}, $\sharpacsp(\FF,\UU)$ is hard for $\#\mathrm{Acyc\mbox{-}2SAT}$. On the contrary, if $\FF\nsubseteq \IM$, then Lemma \ref{IM-OR-XOR} leads to the conclusion  that $\sharpacsp(f,\UU)$ is hard for $\sharpacsp(OR_2,\UU,XOR)$.
\end{proofof}

%%%%%%%%%%
%%%%%%%%%%
\section{Brief Concluding Discussions}

The past literature has explored various situations of $\sharpcsp$ and it has presented numerous complete characterizations of $\sharpcsp$s when restricted to those situations.

This work has initiated a study of counting acyclic satisfaction problems (abbreviated as $\sharpacsp$s). The computational complexity of these problems depend on the types of constraints given as part of inputs.
As the main contributions of this work, we have presented two complexity classifications of all complex-weighted \#ACSPs (Theorems \ref{main-theorem}--\ref{second-theorem}), depending on the free use of XOR.
To prove them, we have developed a new technical tool, called AT-constructibility.

Unfortunately, there still remain numerous questions that have not yet been discussed or answered throughout this work. We hope that future research will resolve these questions to promote our basic understanding of counting CSPs (or \#CSPs).

\renewcommand{\labelitemi}{$\circ$}
\begin{enumerate}\vs{-1}
  \setlength{\topsep}{-2mm}%
  \setlength{\itemsep}{1mm}%
  \setlength{\parskip}{0cm}%

\item In this work, we have used complex-weighted constraints. How does the complexity classification of any weighted \#ACSP look like if the weights are limited to real numbers or even nonnegative rational numbers? The negative values of constraints, for instance, makes it possible to simplify the complexity classification of $\sharpacsp$s.

\item Does the use of randomized computation instead of deterministic computation significantly alter the complexity classification of $\sharpacsp$? What if we use quantum computation?

\item What if we place a restriction on the maximum degree of constraints?

\item At this moment, we do not know that $\#\mathrm{Acyc\mbox{-}2SAT}$ is $\sharplogcfl$-complete (under logspace reductions). We conjecture that this situation may not happen.
\end{enumerate}

%%%%%%%%%%%%%%%%%%%%%%%%%%
%%%%%%%%%%%%%%%%%%%%%%%%%%
\let\oldbibliography\thebibliography
\renewcommand{\thebibliography}[1]{%
  \oldbibliography{#1}%
  \setlength{\itemsep}{-2pt}%
}
\bibliographystyle{plainurl}
\bibliographystyle{alpha}

%%%%%%%%%%%%%%%%%%%%%%%%%%%%%%%%%%%%%%%%%%%%%%%%%%%
%%%%%%%%%%%%%%%%%%%%%%%%%%%%%%%%%%%%%%%%%%%%%%%%%%%
\end{document}